\newtheorem{thm}{Theorem}
\newtheorem{lemma}{Lemma}
\newtheorem{property}{Property}
\newtheorem{problem}{Problem}
\newtheorem{observation}{Observation}
\theoremstyle{definition}
\newtheorem{remark}{Remark}
\newtheorem{definition}{Definition}
\DeclareMathOperator*{\argmin}{arg\,min}
\DeclareMathOperator{\E}{\mathbb{E}}
\DeclareMathOperator{\alg}{\mathrm{ALG}}
\DeclareMathOperator{\efe}{\mathrm{EFE}}
\DeclareMathOperator{\pexp}{\mathrm{PE}}
\DeclareMathOperator{\efeconst}{\mathrm{efe}}
\DeclareMathOperator{\pexpconst}{\mathrm{pe}}
\DeclareMathOperator{\sw}{\mathrm{sw}}
\DeclarePairedDelimiterX{\frob}[2]{\langle}{\rangle_F}{#1, #2}
\newcommand\norm[1]{\lVert#1\rVert}
\DeclareMathOperator{\graphin}{\delta^-}
\DeclareMathOperator{\graphout}{\delta^+}
\DeclareMathOperator{\nextnode}{\mathsf{next}}
\DeclareMathOperator{\prevnode}{\mathsf{prev}}
\DeclareMathOperator{\avg}{\mathsf{avg}}
\DeclareMathOperator{\opzero}{\mathsf{create}-\mathsf{slack}-\mathsf{graph}}
\DeclareMathOperator{\opzeroenvy}{\mathsf{create}-\mathsf{non-negative}-\mathsf{envy}-\mathsf{graph}}
\DeclareMathOperator{\opone}{\mathsf{remove}-\mathsf{incoming}-\mathsf{edge}}
\DeclareMathOperator{\optwo}{\mathsf{cycle}-\mathsf{shift}}
\DeclareMathOperator{\opthree}{\mathsf{average}-\mathsf{clique}}
\DeclareMathOperator{\opfour}{\mathsf{find}-\mathsf{special}-\mathsf{cycle}}
\DeclareMathOperator{\opfive}{\mathsf{distribute}-\mathsf{equally}}
\DeclareMathOperator{\opsix}{\mathsf{remove}-\mathsf{envy}}
\newcommand\numberthis{\addtocounter{equation}{1}\tag{\theequation}}
\algnewcommand{\LineComment}[1]{\State \(\triangleright\) #1}
\title{Honor Among Bandits:\\
No-Regret Learning for Online Fair Division
\thanks{Procaccia gratefully acknowledges research support by the National Science Foundation under grants IIS-2147187, IIS-2229881 and CCF-2007080; and by the Office of Naval Research under grant N00014-20-1-2488. Schiffer was supported by an NSF Graduate Research Fellowship. Zhang was supported by an NSF Graduate Research Fellowship.}}
\author{
    Ariel D. Procaccia\thanks{Paulson School of Engineering and Applied Sciences, Harvard University | \emph{E-mail}: \href{mailto:arielpro@seas.harvard.edu}{arielpro@seas.harvard.edu}.}
	\and
	Benjamin Schiffer\thanks{Department of Statistics, Harvard University | \emph{E-mail}: \href{mailto:bschiffer1@g.harvard.edu}{bschiffer1@g.harvard.edu}.}
	\and
	Shirley Zhang\thanks{Paulson School of Engineering and Applied Sciences, Harvard University | \emph{E-mail}: \href{mailto:szhang2@g.harvard.edu}{szhang2@g.harvard.edu}.}
}
\begin{document}

\begin{titlepage}
\maketitle

\setcounter{page}{0}
\thispagestyle{empty}

\begin{abstract}
    We consider the problem of online fair division of indivisible goods to players when there are a finite number of types of goods and player values are drawn from distributions with unknown means. Our goal is to maximize social welfare subject to allocating the goods fairly in expectation. When a player's value for an item is unknown at the time of allocation, we show that this problem reduces to a variant of (stochastic) multi-armed bandits, where there exists an arm for each player's value for each type of good. At each time step, we choose a distribution over arms which determines how the next item is allocated. We consider two sets of fairness constraints for this problem: envy-freeness in expectation and proportionality in expectation. Our main result is the design of an explore-then-commit algorithm that achieves $\tilde{O}(T^{2/3})$ regret while maintaining either fairness constraint. This result relies on unique properties fundamental to fair-division constraints that allow faster rates of learning, despite the restricted action space. We also prove a lower bound of $\tilde{\Omega}(T^{2/3})$ regret for our setting, showing that our results are tight.
\end{abstract}

\end{titlepage}

\section{Introduction}
 
 Fair allocation of indivisible goods is a fundamental problem with a wide range of applications; implemented algorithms for this task have been widely used in practice~\cite{GP14}. We consider the online fair division setting, which introduces additional complexities as items arrive one by one and each item must be immediately and irrevocably allocated at its time of arrival. Crucially, this allocation must be done without knowledge of future items~\cite{BKPP+24}. One motivating example for this setting is a food bank that receives donations for a region and then allocates these donations among many different food pantries and soup kitchens in that region. Donations are often perishable, and therefore must be immediately allocated. Furthermore, donations can be unpredictable, and hence knowledge of future items is limited.

Two standard notions of fairness are \textit{envy-freeness} and \textit{proportionality}. Envy-freeness implies that every player is at least as happy with their own allocation as with any other player's allocation. Intuitively, envy-freeness guarantees that no player will want to trade their allocation for that of another player. Proportionality is a slightly weaker notion, which requires only that each of the $n$ players receive at least a $1/n$ fraction of their total value for all items. Finding a solution which is envy-free or proportional is often interesting in and of itself, as can be seen from many previous results in fair division~\cite{BT95,Su99,EP06,AM16b}. In cases where there may exist multiple envy-free or proportional allocations, however, a natural goal is then to find the best solution among such allocations~\cite{GMPZ17}. In our work, we evaluate the quality of a fair solution by its (utilitarian) \textit{social welfare}, which is defined as the sum over all players of each player's value for their own allocation.

We take a probabilistic approach to analyzing online fair division. In particular, we assume that there are a finite number of item types, and each player's value for each type of item is drawn from a random distribution. In practice, these distributions would not be known in advance and must be learned as items are allocated. For example, consider again the food bank. When a new food pantry opens, the values of that food pantry for different types of products are unknown. After items have been allocated to the food pantry, however, the food bank can easily collect information on the demand for various item types at the food pantry. Therefore, we primarily consider the setting where the player distributions are unknown in advance, and a player's true value for an item is observed if and only if that player receives the item. This problem can be viewed as a variant of the multi-armed bandits problem, as the goal is to learn unknown distributions (player values) while maintaining high reward (social welfare), subject to fairness constraints; with a finite number of types of items, pulling an arm represents allocating a specific item type to a specific player. 

As is standard in the multi-armed bandits literature, we use the notion of \emph{regret} to measure the difference between our algorithm's performance and that of the optimal policy that knows the value distributions and is subject to the same fairness constraints. Our overarching challenge is this: \emph{design online allocation algorithms that achieve low regret while maintaining fairness in the form of envy-freeness or proportionality.}

\subsection{Our Results}

 Our main result is that there exists a simple optimization-based explore-then-commit algorithm that achieves $\tilde{O}(T^{2/3})$ regret and maintains envy-freeness in expectation (Algorithm \ref{algo:ETC} and Theorem \ref{thm:23foref}). A variant of the same algorithm achieves $\tilde{O}(T^{2/3})$ regret while maintaining proportionality in expectation. The key step of the algorithm is a linear program-based optimization that guarantees that the constraints are satisfied without significantly decreasing social welfare.
 
 The main difficulty in this learning problem is that the envy-freeness and proportionality constraints depend on the unknown value distributions and may be tight constraints without any slack. We therefore develop novel machinery that relies on fundamental properties of these fairness notions. One observation is that our fairness constraints are always satisfied when players are treated equally. Another crucial property is that when players have unequal values, these fairness notions can be satisfied with slack (Property \ref{def:fairness_to_UAR}). The latter property is especially challenging to show for envy-freeness, and the combinatorial algorithm that achieves it (Lemma~\ref{lem:ef-transform}) should be of independent interest to researchers in fair division.

\subsection{Related Work}
\textbf{Online Fair Division.} 
Work in online fair division generally deals with dividing goods when there is uncertainty about the future. Early work in the area focused on axiomatic questions~\cite{Walsh11,AAGW15}. 

Our paper is most closely related to work by \citet{BKPP+24}. Like us, they consider a setting where indivisible items arrive online and must be allocated immediately and irrevocably to players. They study several models for how the values of items are determined, ranging from a model where values are drawn i.i.d.~from a distribution common to all players and items to, at the other extreme, an adversarial model with worst-case values. There are two fundamental differences between their work and ours. First, \citet{BKPP+24} do not optimize social welfare; rather, they seek to either just minimize envy, or do so while (approximately) satisfying the axiomatic notion of Pareto efficiency. Second, and more crucially, they assume that the values of all players for an item are known at the time of its arrival, whereas in our model the values are unknown. It is precisely this modeling choice that induces a learning problem and underlies the connection between our setting and multi-armed bandits, which is absent in prior work in online fair division.

Like us, \citet{yamada2024learning} also study online fair division through the lens of bandit learning. Their setting is similar to ours in that they consider a finite number of item types where player values for item types are initially unknown. However, \cite{yamada2024learning} do not guarantee fairness through constraints -- instead, they incorporate fairness through their objective function of Nash social welfare. 

\textbf{Fairness in Multi-armed Bandits.}
The other main body of literature related to our paper is multi-armed bandits with constraints. One notion of fairness in multi-armed bandits is the idea that similar individuals and/or groups should be treated similarly~\cite{chen2021fairer,joseph2016fairness, liu2017calibrated}. The fairness constraint of \citet{joseph2016fair} is that a worse arm is not pulled with higher probability than a better arm. Their definition of fairness is actually incompatible with maintaining envy-freeness (or proportionality), because maintaining envy-freeness may require allocating an item to a player with lower value to prevent envy. Another common fairness constraint in multi-armed bandits is that every arm receives a minimum fraction of pulls~\cite{chen2020fair,claure2020multi,li2019combinatorial,patil2021achieving}. This notion of fairness is also not compatible with envy-freeness because the optimal envy-free allocation may never give a player a specific item type. There also exist many other fairness notions in contextual bandits that are farther from our setting~\cite{grazzi2022group, schumann2019group,wang2021fairness, wu2023best}. Wei et al. \cite{wei2024fair} analyze a form of envy-freeness in contextual bandits, but their envy-freeness notion depends only on the treatment probabilities instead of the values.

Our paper is also closely related to work on multi-armed bandits subject to general linear constraints. Multiple works in linear bandits study ``safety'' with respect to a linear constraint that depends on the unknown true mean values~\cite{amani2019linear, carlsson2024pure, moradipari2020linear}. Amani et al. \cite{amani2019linear} focus on a single constraint and specifically show that if there is positive slack in the optimal solution, then $\tilde{O}(T^{1/2})$ regret is possible. If there is zero slack, however, their algorithm only achieves $\tilde{O}(T^{2/3})$ regret. This differs from our work because envy-freeness and proportionality involve multiple constraints that can have zero slack. Note that our setting is similar but not equivalent to linear bandits, as a single arm is pulled in each step in our setting. There also exist many results for cumulative constraints in bandits \cite{liu2022combinatorial, liu2021efficient}. These are less closely related to our model as we consider constraints that must hold at every time step. Finally, there is a branch of multi-armed bandits that studies constraints in expectation at each step as in our paper. However, these works are also in the linear bandits setting and again require a safety gap that fairness constraints such as envy-freeness may not guarantee \cite{pacchiano2021stochastic}.

\textbf{Practical Motivation.} Mertzanidis et al. \cite{mertzanidis2024automating} apply online fair division algorithms through a partnership with a program in Indiana that redistributes rejected truckloads of food. The program, known as Food Drop, allocates 10,000+ pounds of rejected food per month to food banks. In this application, the available food arrives in an online and unpredictable way, and the trucks must be allocated immediately. More generally, the specific food donations depend on what items grocery stores or restaurants have remaining at the end of the day. Therefore, donations are unpredictable, which we model through randomness. 

In practice, utilities for food donations such as in the Food Drop program may not be additive. However, if the deliveries are sufficiently infrequent, then additive player utilities are likely to be a good approximation. For example, in the food allocation data of \cite{lee2019webuildai}, there were a total of 1760 donations from 169 donors over the course of five months and 277 organizations that received donations. Therefore, the organizations receive donations every 3-4 weeks on average, suggesting that donations can be largely regarded as independent.

\section{Model}

In this section we introduce our basic setting and terminology.

\subsection{Online Allocation With Unknown Values}

Suppose we have a set of players $N = [n]$ and a set of object types $M = [m]$. Given a set of $T$ indivisible items, an \textit{allocation} $A = (A_1,...,A_n)$ is a partition of the $T$ items among the $n$ players, where player $i$ receives the items in $A_i$. In our model, we assume that every item $j$ has a type $k(j) \in [m]$, and that there exists a (possibly unknown) matrix $\mu^*$ such that each player $i$'s value for an item of type $k \in [m]$ is independently drawn from a sub-Gaussian distribution with mean $\mu^*_{ik}$. Player values are assumed to be independent across both players and items. We will often refer to $\mu^*_i$ as the vector of mean values for player $i$. For a specific item $j$, we denote player $i$'s value for item $j$ as $V_i(j)$, and similarly, player $i$'s value for their allocation $A_i$ as $V_i(A_i) = \sum_{j \in A_i} V_i(j)$. The (utilitarian) \emph{social welfare} of an allocation $A$ is $ \sw(A) = \sum_{i = 1}^n V_i(A_i)$.

We consider algorithms in the following online setting. At each time step $t \in [T]$, an item $j_t$ of type $k_t$ arrives, where $k_t \sim \mathcal{D}$, for some known distribution $\mathcal{D}$ supported on $[m]$. We will assume that $\mathcal{D} = \mathrm{Unif}([m])$, or in other words that every item has an equal probability of being type $1,...,m$. We make this choice purely for ease of exposition, in order to simplify notation; our results and techniques extend seamlessly to arbitrary distributions $\mathcal{D}$ that do not depend on $T$, as we explain in Appendix \ref{sec:arbitrary_D}. The algorithm observes the item type $k_t$, and must then immediately allocate the item $j_t$ to a player $i_t$, at which time the algorithm observes that player's value $V_{i_t}(j_t)$. Note that the algorithm does not observe any other player's value for item $j_t$. The high-level goal is to allocate these items in a manner that maximizes the social welfare of the final allocation of all $T$ items.

We denote $X \in \mathbb{R}^{n \times m}$ as a valid fractional allocation if $\sum_i X_{ik} = 1$ for every $k \in [m]$. One valid fractional allocation we will often refer to is the \textit{uniform at random} allocation (UAR), where every entry is $\frac{1}{n}$.  At every time-step $t$, before observing $k_t$, the allocation algorithm $\alg$ takes as input the history $H_t = \{(k_{t'}, i_{t'}, V_{i_{t'}}(j_{t'})): t' < t\}$ and returns a fractional allocation $X_t = \alg(H_t)$, where $(X_t)_{ik}$ represents the probability of allocating the item to player $i$ if the item is of type $k$. If the next item is of type $k_t$, then the algorithm allocates the item randomly among the $n$ players according to the distribution induced by the $k_t^{\text{th}}$ column of $X_t$, i.e. $(X_t^\top)_{k_t}$. Therefore, the $t^{\text{th}}$ item is allocated to player $i$ with probability $(X_t)_{ik_t}$. We denote the final realized allocation that $\alg$ returns as $A(\alg)$, and the corresponding partial allocation up to time $\tau$ as $A^{\tau}(\alg)$. This online process is summarized in pseudo-code in Appendix \ref{app:alg}. 

 We will also assume (explicitly in our theorem statements) that for all $i,k$, there exist known constants $a,b>0$ such that $a \le \mu^*_{ik} \le b$. This assumption is necessary because if we allow the means of values to be arbitrary close to zero, then it can be impossible to achieve regret of $o(T)$. This is formalized in Theorem \ref{thm:bounded_from_0} in Appendix \ref{app:bounded_from_zero}.

\subsection{Fairness Notions}
 
We will primarily use two metrics of fairness to evaluate an online allocation algorithm $\alg$: envy-freeness in expectation and proportionality in expectation. Both are defined below. For two vectors $x,y \in \mathbb{R}^n$, we use $\langle x, y \rangle = x \cdot y$ to represent the dot product of the two vectors.
\begin{definition}\label{def:ef}
    Let $X_t = \alg(H_{t})$ be the fractional allocation used by algorithm $\alg$ at time $t$ given history $H_t$. Then $\alg$ satisfies \emph{envy-freeness in expectation ($\efe$) }if for all $t$ and all $H_t$, $ (X_t)_i \cdot \mu^*_i \ge \max_{i' \in [n]} \: \: (X_t)_{i'} \cdot \mu^*_i$ for all $i$.
\end{definition}

\begin{definition}\label{def:proportionality}
    Let $X_t = \alg(H_{t})$ be the fractional allocation used by algorithm $\alg$ at time $t$ given history $H_t$. Then $\alg$ satisfies \emph{proportionality in expectation ($\pexp$)} if for all $t$ and all histories $H_t$, $(X_t)_i \cdot \mu^*_i \ge \frac{1}{n}\sum_{i' \in [n]} \: \: (X_t)_{i'} \cdot \mu^*_i$ for all $i$.
\end{definition}

Intuitively, envy-freeness in expectation is equivalent to maintaining that at every time step $t$ and before observing the item type $k_t$, no player prefers the fractional allocation of any other player in $X_t$. Similarly, proportionality in expectation is equivalent to maintaining that at every time step $t$ and before observing the item type $k_t$, the expected value of every player for their fractional allocation is at least $1/n$ times that player's value if they received the item with probability $1$. 

In Appendix~\ref{app:realized}, we justify some of the implicit choices behind these definitions. Specifically, we discuss why we consider envy-freeness in expectation rather than its realization, and also why we require envy-freeness in expectation to hold at every individual time step. Analogous results for proportionality can be found in Appendix \ref{app:prop_realized}. For the former question, Theorem \ref{lemma:no_low_envy} shows that in our setting, no algorithm can with high probability output an allocation $A(\alg)$ with realized envy less than $\sqrt{T}$. Note that \citet{BKPP+24} show that in the adversarial setting, no algorithm can guarantee $o(\sqrt{T})$ realized envy. Conversely, they also show that when values are generated randomly and observed before allocation, there exists an algorithm that \textit{can} guarantee $o(1)$ realized envy with high probability. Theorem \ref{lemma:no_low_envy} shows that when values are still generated randomly but are \textit{unknown} at the time of allocation (as in our setting), no algorithm can guarantee $o(\sqrt{T})$ realized envy with high probability. We complement Theorem \ref{lemma:no_low_envy} with Theorem \ref{lemma:log2T_envy}, which shows that any algorithm $\alg$ that satisfies envy-freeness in expectation will output a final allocation $A(\alg)$ with realized envy of at most $\sqrt{T}\log(T)$ with high probability.  Therefore, envy-free in expectation algorithms are within a $\log(T)$ factor of being ``optimal'' in terms of final realized envy.

We also show that requiring envy-freeness in expectation at every time step does not lead to any social welfare loss compared to requiring envy-freeness in expectation only at the end of $T$ rounds. More specifically, Theorem \ref{lem:efe_up_to_equals_efe_at} (again in Appendix~\ref{app:realized}) implies that requiring that no player is envious in expectation of any other player at the end of all $T$ rounds is equivalent to maintaining envy-freeness in expectation at all times $t \in [T]$ when maximizing social welfare. A key step of our proof of Theorem \ref{lem:efe_up_to_equals_efe_at} is showing that for every time- or history-dependent algorithm $\alg$ which achieves envy-freeness in expectation at the end of $T$ rounds, there exists another algorithm $\alg'$ that is time- and history-independent, envy-free in expectation at every time step, and achieves the same social welfare. Therefore, maximizing social welfare only over algorithms which are envy-free in expectation at every time step is sufficient even if envy-freeness in expectation at the end of $T$ rounds is all that is desired.

We can formulate our fairness notions as linear constraints, in the spirit of prior work in fair division~\cite{BBKP14}. Formally, define $\frob{A}{B}$ as the Frobenius inner product of matrices $A$ and $B$.  For $B \in \mathbb{R}^{n \times m}, c \in \mathbb{R}$, and a fractional allocation $X$, we represent the linear constraint $\frob{B}{X} \ge c$ as the tuple $(B,c)$. A fractional allocation $X$ satisfies a set of $L$ linear constraints $\{(B_\ell, c_\ell)\}_{\ell=1}^{L}$ if for all $\ell \in [L]$, $\frob{B_\ell}{X} \ge c_\ell$. Because the constraints represent ``fairness in expectation'' relative to the mean values, we will explicitly let the constraint matrix $B_\ell(\mu^*)$ be a function of the mean value matrix $\mu^*$. Therefore, we will consider sets of constraints of the form $\{(B_\ell(\mu^*), c_\ell)\}_{\ell=1}^{L}$. Because these constraints are functions of $\mu$, we will also refer to families of constraints $\left\{\{(B_\ell(\mu), c_\ell)\}_{\ell=1}^{L}\right\}_{\mu}$. 

The following two remarks show how envy-freeness in expectation and proportionality in expectation can be represented within this framework.
\begin{remark}\label{remark_1}
    For every $\ell \in [n^2]$, construct $B^{\efeconst}_\ell(\mu^*)$ as follows. Define $i = \lceil \frac{\ell}{n} \rceil$ and $i' = (\ell \mod n) + 1$. For every $k \in [K]$, let $(B^{\efeconst}_\ell(\mu^*))_{ik} = \mu^*_{ik}$ and $(B^{\efeconst}_\ell(\mu^*))_{i'k} = -\mu^*_{ik}$. For all $i'' \not\in \{i,i'\}$, let $(B^{\efeconst}_\ell(\mu^*))_{i''}  = 0$. Then the envy-freeness in expectation constraints for mean $\mu^*$ as defined in Definition \ref{def:ef} correspond to $\efeconst(\mu^*) := \{(B^{\efeconst}_\ell(\mu^*),0) \}_{\ell=1}^{n^2}$.
\end{remark}

 \begin{remark}\label{remark_2}
    For every $\ell \in [n]$, construct $B^{pe}_\ell(\mu^*)$ as follows. For every $k \in [m]$, $(B_{\ell}^{pe}(\mu^*))_{\ell k} = \frac{(n-1)}{n} \cdot \mu^*_{\ell k}$ and $(B_{\ell}^{pe}(\mu^*))_{ik} = -\frac{1}{n} \cdot \mu^*_{\ell k}$ for every $i \ne \ell$.  Then the proportionality in expectation constraints for mean $\mu^*$ as defined in Definition \ref{def:proportionality} correspond to $\pexpconst(\mu^*) = \{(B^{pe}_\ell(\mu^*),0) \}_{\ell=1}^{n}$.
\end{remark}

\subsection{Regret and Problem Formulation}

An algorithm $\alg$ satisfies constraints $\{(B_\ell(\mu^*) , c_\ell)\}_{\ell=1}^{L}$ if for all $t \in [T]$, the fractional allocation $X_t$ used by $\alg$ at time $t$ satisfies the constraints $\{(B_\ell(\mu^*) , c_\ell)\}_{\ell=1}^{L}$. When $\mu^*$ is known, the expected social welfare can be directly optimized over all algorithms $\alg$ that satisfy constraints $\{(B_\ell(\mu^*), c_\ell)\}_{\ell=1}^{L}$. This problem is equivalent to solving LP \eqref{lp:ymu} with $\mu = \mu^*$ and using the solution $Y^{\mu^*}$ as the fractional allocation for all time steps.
\begin{align*}
    Y^\mu := \arg\max & \: \:  \frob{X}{\mu} \\
    \text{s.t. } & \frob{B_\ell(\mu)}{X} \ge c_\ell \quad \forall \ell \\
    & \sum_{i} X_{ik} = 1 \quad \forall k  \numberthis \label{lp:ymu}
\end{align*}
When $\mu^*$ is unknown, we define the regret of an algorithm $\alg$ as follows. Note that the baseline algorithm in this definition of regret is the optimal allocation algorithm under the constraints when $\mu^*$ is known.

 \begin{definition}\label{def:regret}
   Let $Y^{\mu^*}$ be the solution to LP \eqref{lp:ymu} for $\mu = \mu^*$. Let $X_t = \alg(H_t)$ be the fractional allocation used by algorithm $\alg$ at time $t$ given history $H_t$. Then the $T$-step regret of $\alg$ for constraints $\{(B_\ell(\mu^*), c_\ell\}_{\ell=1}^{L}$ is $T \cdot \frob{Y^{\mu^*}}{\mu^*} - \sum_{t=0}^{T-1} \frob{X_t}{\mu^*}$.
\end{definition}

We are now ready to present the formal problem statement. Because the constraints depend on the unknown values that are being learned, we only require constraint satisfaction with high probability.
\begin{problem}\label{problem1}
    Suppose we are given $n,m, T,a,b$ such that $0 < a \le \mu^*_{ik} \le b$ for all $i \in [n]$, $k \in [m]$. Given a family of fairness constraints  $\left\{\{(B_\ell(\mu), c_\ell)\}_{\ell=1}^{L}\right\}_{\mu}$ representing either envy-freeness in expectation or proportionality in expectation, the goal is to construct an algorithm $\alg$ such that with probability $1-1/T$, $X_t = \alg(H_t)$ satisfies the constraints $(B_\ell(\mu^*), c_\ell)\}_{\ell=1}^{L}$ for all $t \in [T]$ and the regret of $\alg$ for constraints $(B_\ell(\mu^*), c_\ell)\}_{\ell=1}^{L}$ is $o(T)$.
\end{problem}

Note that the $o(T)$ regret in Problem \ref{problem1} will be $\tilde{O}(T^{2/3})$ for our results.  We use the standard $O()$ and $\tilde{O}()$ notation with respect to the number of time steps $T$, and therefore the constants represented by this notation may depend on other problem parameters such as $n$ and $m$.

\section{Fairness Machinery}\label{sec:general_constraints}

Our goal in this section is to establish novel, fundamental properties of envy-freeness and proportionality in expectation, which will serve as a crucial part of the machinery underlying our regret bounds. 

In the context of fairness, a natural assumption is that if a group of individuals are treated equally, then that group is considered to be treated fairly. In that spirit, our first key property is as follows. 
\begin{property}\label{def:notion_of_fairness}
    For any $\ell \in [L]$, suppose that a fractional allocation $X \in \mathbb{R}^{n \times m}$ satisfies $X_{i_1} = X_{i_2}, \: \forall i_1,i_2 \in \{i : B_\ell(\mu)_i \ne \textbf{0}\}$. Then $\langle B_\ell(\mu), X \rangle_F \ge c_\ell$.
\end{property}

Informally, a set of constraints satisfies Property \ref{def:notion_of_fairness} if for any constraint in the set, the constraint is always satisfied when all players involved in the constraint have the same fractional allocation. An important consequence of Property \ref{def:notion_of_fairness} is that the uniform at random allocation satisfies every constraint. This implies that even with no information about the players' mean values, the uniform at random allocation will always be fair. 

Note that envy-freeness in expectation satisfies Property \ref{def:notion_of_fairness} because any two players with equal allocations are never envious of each other. Proportionality in expectation also satisfies Property \ref{def:notion_of_fairness}, because if every player has the same allocation, then every player is receiving exactly their proportional allocation. 

\begin{observation}\label{ef-satisfies-equal-treatment-guarantee}
    The envy-freeness in expectation and proportionality in expectation constraints satisfy Property \ref{def:notion_of_fairness}.
\end{observation}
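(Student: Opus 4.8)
The plan is to verify Property \ref{def:notion_of_fairness} directly from the explicit constructions of the constraint matrices given in Remarks \ref{remark_1} and \ref{remark_2}, using the fact that $c_\ell = 0$ for every constraint in both families. The structural feature that makes this immediate is that in each constraint matrix every nonzero row is a (positive or negative) scalar multiple of a single fixed vector: for the envy constraint indexed by the pair $(i,i')$ that vector is $\mu^*_i$, and for the $\ell$-th proportionality constraint it is $\mu^*_\ell$. Once we force all rows in the support of the constraint to be equal, the Frobenius inner product telescopes to exactly $0 = c_\ell$.

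First I would handle envy-freeness. Fix $\ell \in [n^2]$ with associated pair $(i,i')$ as in Remark \ref{remark_1}. Since $\mu^*_{ik} \ge a > 0$ for all $k$, the $i$-th row of $B^{\efeconst}_\ell(\mu^*)$ equals $\mu^*_i \ne \mathbf{0}$ and the $i'$-th row equals $-\mu^*_i \ne \mathbf{0}$, while all remaining rows are zero; hence the support $\{i'' : B^{\efeconst}_\ell(\mu^*)_{i''} \ne \mathbf{0}\}$ is exactly $\{i,i'\}$, and it is the singleton $\{i\}$ in the degenerate case $i = i'$, where the matrix is identically zero so $\frob{\mathbf{0}}{X} = 0 = c_\ell$ trivially. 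If $X_i = X_{i'}$, then
\[
\frob{B^{\efeconst}_\ell(\mu^*)}{X} = \sum_k \mu^*_{ik} X_{ik} - \sum_k \mu^*_{ik} X_{i'k} = \sum_k \mu^*_{ik}(X_{ik} - X_{i'k}) = 0 = c_\ell,
\]
so the constraint holds.

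Next I would handle proportionality. Fix $\ell \in [n]$ as in Remark \ref{remark_2}. Again $\mu^*_\ell \ne \mathbf{0}$ because $\mu^*_{\ell k} \ge a > 0$, so every row of $B^{pe}_\ell(\mu^*)$ is nonzero and the support is all of $[n]$. Suppose all rows of $X$ coincide, say $X_i = z$ for every $i \in [n]$. Then
\[
\frob{B^{pe}_\ell(\mu^*)}{X} = \frac{n-1}{n}\langle \mu^*_\ell, z\rangle - \sum_{i \ne \ell}\frac{1}{n}\langle \mu^*_\ell, z\rangle = \frac{n-1}{n}\langle \mu^*_\ell, z\rangle - \frac{n-1}{n}\langle \mu^*_\ell, z\rangle = 0 = c_\ell,
\]
which establishes the claim for this family as well.

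I do not expect a genuine obstacle: the statement collapses to a one-line computation for each family. The only points requiring minor care are (i) invoking the assumption $\mu^*_{ik} \ge a > 0$ to pin down precisely which rows of each constraint matrix are nonzero — without it a row could accidentally vanish and shrink the support, changing what "equal treatment on the support" means — and (ii) disposing of the degenerate envy constraints with $i = i'$, where the constraint matrix is the zero matrix. Both are trivial, so the bulk of the write-up is simply recording the two displayed identities above.
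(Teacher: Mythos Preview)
Your proposal is correct and takes essentially the same approach as the paper: the paper states this as an observation with only the one-line informal justification that equal allocations eliminate envy and that identical allocations yield exactly the proportional share, and you have simply unpacked that into the explicit Frobenius-product computations using the matrices from Remarks~\ref{remark_1} and~\ref{remark_2}. Your handling of the nonzero-row supports via the bound $\mu^*_{ik}\ge a>0$ and of the degenerate $i=i'$ envy constraints is appropriate and matches what the paper implicitly assumes.
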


Our second key property is more technical and novel. Intuitively, the property requires the existence of a fractional allocation $X'$ that is only slightly worse than the optimal constrained allocation $Y^\mu$, but unlike the latter allocation, in $X'$ the constraints either hold with slack or all players involved in the constraint are treated equally. Interestingly, this property does not hold for arbitrary sets of linear constraints, but relies on structure inherent to the envy-freeness in expectation and proportionality in expectation constraints. The bulk of the theoretical work of this paper is proving that the envy-freeness in expectation and proportionality in expectation constraints satisfy this property.

\begin{property}\label{def:fairness_to_UAR}
    Let $Y^\mu$ be the solution to LP \eqref{lp:ymu}. Then there exists constants (relative to $T$) $\gamma_0$ and  $C_{\mathrm{P}\ref{def:fairness_to_UAR}} > 0$ such that for any $\gamma < \gamma_0$ and any $\mu$, there exists a fractional allocation $X'$ such that $\frob{X'}{\mu} \ge \frob{Y^\mu}{\mu} - C_{\mathrm{P}\ref{def:fairness_to_UAR}}\gamma$, and such that for each $\ell \in [L]$, either
    \begin{enumerate}
        \item $\langle B_\ell(\mu), X' \rangle_F \ge  c_\ell + \gamma$ or
        \item $\forall i_1,i_2 \in \{i 
: B_\ell(\mu)_i \ne \textbf{0}\}$, $X'_{i_1} = X'_{i_2}$. 
    \end{enumerate}
\end{property}

\begin{lemma}\label{lem:ef-transform}
    The family of envy-freeness in expectation constraints satisfies Property \ref{def:fairness_to_UAR}.
\end{lemma}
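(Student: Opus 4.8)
The plan is to build $X'$ from $Y^\mu$ by partitioning the players into groups, replacing the rows of $Y^\mu$ indexed by each group with that group's average row, and arguing that after a small amount of preprocessing: (i) every within-group envy-freeness constraint has its relevant rows equal (alternative~2 of Property~\ref{def:fairness_to_UAR}); (ii) every between-group constraint has slack at least $\gamma$ (alternative~1); and (iii) the total social welfare lost is $O(\gamma)$, which we absorb into $C_{\mathrm{P}\ref{def:fairness_to_UAR}}$. The partition is read off from the \emph{tight-envy digraph}: one vertex per player, with an arc $i \to i'$ whenever the constraint ``$i$ does not envy $i'$'' (i.e.\ $\langle B^{\efeconst}_{\ell}(\mu), Y^\mu\rangle_F$ for the corresponding $\ell$, see Remark~\ref{remark_1}) has slack below a small threshold $\delta$, to be chosen as a suitable constant multiple of $\gamma$.

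The key structural fact about envy-freeness — the one that has no analogue for arbitrary linear constraints — is that optimal solutions to LP~\eqref{lp:ymu} can be ``rotated'' along tight cycles for free. Concretely, if $Y$ is feasible and optimal and $C = (i_1 \to i_2 \to \cdots \to i_r \to i_1)$ is a cycle of exactly-tight arcs, then the allocation obtained by giving player $i_j$ the row $(Y)_{i_{j+1}}$ is again feasible, envy-free, and has exactly the same social welfare: each $i_j$ values its new row exactly as its old row by tightness of $i_j \to i_{j+1}$, column sums are unchanged because we only permuted rows within $C$, and every other envy inequality is preserved since the new rows are a permutation of the old ones. As the optimal face of an LP is convex, averaging $Y$ over all $r$ cyclic shifts of $C$ produces an optimal allocation in which every player of $C$ holds the common row $\frac{1}{r}\sum_{i \in C}(Y)_i$. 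Thus a tight cycle can be ``collapsed'' to equal rows at zero welfare cost; collapsing a near-tight (slack-below-$\delta$) cycle instead costs only $O(\delta\, n)$, which we charge against $\gamma$ by taking $\delta$ small. Iterating collapses — with a potential argument bounding their number by a function of $n$ and ensuring the tight-envy digraph does not regrow opened arcs — brings us to an allocation $\hat Y$ of welfare $\frob{Y^\mu}{\mu} - O(\gamma)$ whose tight-envy digraph is, up to $O(\gamma)$, a disjoint union of cliques in which every within-clique envy constraint is tight.

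It remains to force slack at least $\gamma$ on the arcs crossing between distinct cliques. For such a (near-)tight cross arc $i \to i'$ I would perturb $\hat Y$ along a feasible direction that strictly increases $(X)_i\cdot\mu_i - (X)_{i'}\cdot\mu_i$; because every entry of $\mu$ lies in $[a,b]$, one unit of slack here costs only $O(b/a)$ units of welfare, so reaching slack $\gamma$ costs $O(\gamma)$, after which a clean-up step restores any within-clique equalities the perturbation disturbed and a potential argument bounds the number of such steps. Finally, average each clique once more. Within a clique all rows become equal, so alternative~2 holds for every within-clique constraint; a within-clique player $i$ values its clique's average row exactly as before (all within-clique arcs being tight), so welfare is preserved by this last step; and for a cross-clique pair $i \in S,\ i' \in S'$ the post-averaging envy is $\frac{1}{|S'|}\sum_{j \in S'}\big((X)_i\cdot\mu_i - (X)_j\cdot\mu_i\big)$, an average of quantities each at least $\gamma$, hence at least $\gamma$, giving alternative~1. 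Summing the $O(\gamma)$ losses over the $\mathrm{poly}(n)$ operations yields $\frob{X'}{\mu} \ge \frob{Y^\mu}{\mu} - C_{\mathrm{P}\ref{def:fairness_to_UAR}}\gamma$, and $\gamma_0$ is set small enough that all perturbations above stay inside the relevant feasibility regions.

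The main obstacle is the reduction to the ``disjoint union of tight cliques'' form: collapsing one tight cycle can destroy the tightness of arcs \emph{entering} that cycle, so the strongly connected structure of the tight-envy digraph is not preserved step by step, and a genuine termination/potential argument is needed to show the process halts and leaves every cross-group constraint with room to be slackened. This is exactly where envy-freeness is harder than proportionality, whose constraints each involve a single player's row and hence decouple and yield to a far more direct argument; by contrast, an envy-freeness constraint couples two rows, the tight ones form a nontrivial digraph, and even the naive fix of nudging the allocation toward the uniform one fails — the uniform allocation lies on the boundary of every envy-freeness constraint, so such a nudge strictly shrinks rather than creates slack — which is what forces the cycle-rotation machinery above.
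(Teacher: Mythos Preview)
Your high-level plan---partition players into groups, equalize rows within each group, and arrange $\gamma$ slack across groups---matches the paper's, and your ``rotate-and-average along a tight cycle'' primitive is essentially the paper's $\optwo$/$\opthree$ pair. But the part you flag as ``the main obstacle'' is in fact a genuine gap, not just a detail: you never supply the potential that makes the iteration converge, and your two-phase structure (first collapse cycles to reach a disjoint union of tight cliques, then perturb cross arcs to slack $\gamma$) makes such a potential hard to find. Collapsing a near-tight cycle lowers the own-value of each cycle member by up to $(|C|-1)\delta$, which can push \emph{many} incoming arcs below the $\delta$ threshold, not just one; and your later ``perturb along a feasible direction to open a cross arc'' can likewise close other arcs that were already above $\delta$. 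Without a monotone invariant, there is no reason the tight-envy digraph does not oscillate.

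The paper sidesteps this by never separating the two phases. It maintains a single slack-$\alpha$ graph $G^r = \opzero(\mu, X^r, \alpha^r)$ and proves that \emph{every} step either deletes an edge of $G^r$ or strictly shrinks the number of equivalence classes, while never adding an edge; the pair $(|E(G^r)|,\,|\mathcal S(X^r)|)$ is the potential, giving termination in $\le n^3$ steps. The price is that $\alpha$ must be scaled down at each step (by factors up to $4bn^4$), so $\alpha^0$ is exponentially larger than $\gamma$ and $C_{\mathrm{P}\ref{def:fairness_to_UAR}}$ is exponential in $n$---this is unavoidable in their argument and is worth noting, since your sketch implicitly suggests a polynomial constant. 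The paper also needs a nontrivial fourth primitive you do not have: after an $\opthree$ merge the allocation can become slightly envious, and a separate routine $\opsix$ (Algorithm~\ref{algo:remove-envy}) restores envy-freeness while provably not enlarging the slack graph (Lemmas~\ref{lem:remove-envy-edge}--\ref{lem:remove-envy-no-alpha-edges}). Your ``clean-up step'' would have to do the same job, and it is exactly here that the monotonicity proof is delicate. So: right skeleton, but the termination argument is the heart of the lemma, and it does not follow from the pieces you have laid out.
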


\begin{proof}[Proof sketch]

We will informally argue that we can transform $Y^\mu$ into a fractional allocation $X'$ which satisfies Property \ref{def:fairness_to_UAR} through Algorithm \ref{algo:adding-slack}. Algorithm \ref{algo:adding-slack} iterates over  `envy-with-slack-$\alpha$' graphs, which track whether a player prefers their allocation by at least $\alpha$ over another player's allocation. More specifically, given parameters $\mu, X,$ and $\alpha$, the corresponding `envy-with-slack' graph has vertices $N$ and edge set $E$ such that a directed edge from $i$ to $i'$  exists if and only if $X_i \cdot \mu_i  - X_{i'} \cdot \mu_i  < \alpha$. The weight of each such edge is $X_i \cdot \mu_i  - X_{i'} \cdot \mu_i$. At a high level, Algorithm \ref{algo:adding-slack} constructs `envy-with-slack-$\alpha$' graphs with progressively smaller $\alpha$, with $\alpha \geq \gamma$ for all iterations. The algorithm operates on sets of nodes called \textit{equivalence classes}, where every pair of nodes in an equivalence class has the same allocation. Algorithm \ref{algo:adding-slack} makes progress in every iteration by either 1) merging two equivalence classes, or 2) removing an edge from the graph. 

An overview of the algorithm is as follows. In each iteration, Algorithm \ref{algo:adding-slack} generally performs one of three operations and decreases $\alpha$. First, if there exists an equivalence class $S$ with at least one incoming edge but no outgoing edges, then operation $\opone$ transfers allocation probability from nodes in $S$ to all other nodes. This will remove all incoming edges to $S$. If such an equivalence class does not exist, then Algorithm \ref{algo:adding-slack} finds a special type of directed cycle in the `envy-with-slack' graph. The directed cycle is chosen so that the outgoing edge of each node $i$ in the cycle is among $i$'s outgoing edges with minimal weight. Therefore, each node $i$ in the cycle is pointing to an $i' \in N$ for whom $i$ has the least slack. If there exists some node $i^* \in N$ which has an edge to some but not all of the nodes in the cycle, then operation $\optwo$ gives each node in the cycle half of its current allocation and half of the next node's allocation. This will remove an outgoing edge from $i^*$. If such a node does not exist, then Algorithm \ref{algo:adding-slack} either decreases $\alpha$ to remove an edge or creates a new equivalence class by merging all equivalence classes that the nodes in the cycle belong to via operation $\opthree$. 

However, such a merge may lead to envy, which is removed by Algorithm \ref{algo:remove-envy}. Each call to Algorithm \ref{algo:remove-envy} removes envy from at least one edge. Algorithm \ref{algo:remove-envy} does so by first carefully redistributing allocation among the nodes until there exists a cycle where each node has non-negative envy (which is equivalent to a cycle with non-positive slack). Each node in the cycle is then given the allocation of the next node in the cycle. We prove that each call to Algorithm \ref{algo:remove-envy} decreases the number of edges with envy, while not increasing the number of edges in the `envy-with-slack' graph. Furthermore, Algorithm \ref{algo:remove-envy} does not significantly decrease the social welfare of the allocation.

The three operations and Algorithm \ref{algo:remove-envy} each take as input an allocation $X$ and returns a new allocation $X'$ which is close in social welfare to $X$. Furthermore, each iteration begins with an envy-free allocation, and the size of the edge set of the `envy-with-slack' graphs never increases throughout the algorithm. The maximum size of an equivalence class is $n$, so an edge must be removed from the `envy-with-slack' graph every $n$ iterations. There are at most $n^2$ edges, and the algorithm therefore terminates in at most $n^3$ iterations with an allocation which satisfies Property \ref{def:fairness_to_UAR}. For the numerous details, see Appendix \ref{app:envy-freeness-graph-alg}.
\end{proof}

\begin{lemma}\label{prop-strong-transform}
   The family of proportionality in expectation constraints satisfies Property \ref{def:fairness_to_UAR}.
\end{lemma}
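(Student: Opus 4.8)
**Proof proposal for Lemma \ref{prop-strong-transform} (proportionality in expectation satisfies Property \ref{def:fairness_to_UAR}).**

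The plan is to exploit the fact that proportionality constraints, unlike envy-freeness, are ``one-sided'': the constraint for player $\ell$ only involves player $\ell$'s own mean values $\mu^*_{\ell \cdot}$ and compares player $\ell$'s allocation against the \emph{average} allocation $\bar X = \frac1n \sum_{i} X_i$. Concretely, writing $v_\ell := X_\ell \cdot \mu^*_\ell - \bar X \cdot \mu^*_\ell$ for the (scaled) slack of constraint $\ell$, the proportionality constraint $\ell$ reads $v_\ell \ge 0$, and crucially these $n$ quantities satisfy the identity $\sum_{\ell} v_\ell = \sum_\ell (X_\ell - \bar X)\cdot \mu^*_\ell$, which need not vanish, so the constraints do not automatically ``conflict'' the way envy constraints do. I would start from the optimal allocation $Y := Y^\mu$ and define the target allocation as a convex combination $X' := (1-\lambda) Y + \lambda U$, where $U$ is the uniform-at-random (UAR) allocation (every entry $1/n$) and $\lambda = \lambda(\gamma)$ is a small mixing parameter to be chosen linearly in $\gamma$.

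The key steps, in order: (1) By Observation \ref{ef-satisfies-equal-treatment-guarantee}, $U$ satisfies every proportionality constraint, in fact with $\langle B_\ell(\mu), U\rangle_F = 0$ exactly (all players equal). (2) Since the map $X \mapsto \langle B_\ell(\mu), X\rangle_F$ is linear and $\langle B_\ell(\mu), Y\rangle_F \ge 0$, mixing toward $U$ cannot decrease the slack of any already-satisfied constraint — but it also might not create \emph{strict} slack $\ge \gamma$ if constraint $\ell$ is tight \emph{and} mixing does not move it. So the real work is to show that for each constraint $\ell$ that remains tight at $X'$ (i.e.\ $\langle B_\ell(\mu), X'\rangle_F < c_\ell + \gamma$), alternative (2) of Property \ref{def:fairness_to_UAR} holds, namely that all players $i$ with $B_\ell(\mu)_i \ne \mathbf{0}$ share the same allocation row in $X'$. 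For proportionality, \emph{every} player appears in \emph{every} constraint (player $\ell$ with coefficient $\frac{n-1}{n}\mu_{\ell k}$ and every $i \ne \ell$ with coefficient $-\frac1n \mu_{\ell k}$), so alternative (2) for constraint $\ell$ literally says $X'_1 = X'_2 = \cdots = X'_n$, i.e.\ $X' = U$. This forces a case analysis: I would argue that if the optimal $Y$ already has \emph{all rows equal}, then $Y = U$ and we take $X' = U$ (zero welfare loss, both alternatives hold vacuously/trivially); otherwise $Y$ has two distinct rows, and I claim that then, for $\lambda>0$, mixing strictly increases the slack of every tight constraint. (3) Prove that claim: if $\langle B_\ell(\mu), Y\rangle_F = 0$ but $\langle B_\ell(\mu), U\rangle_F = 0$ too, linearity alone gives no gain, so instead I would show that the set of tight constraints cannot all be tight at a non-UAR optimum without violating optimality — specifically, if constraint $\ell$ is tight at $Y$, one can perturb $Y$ slightly in the direction $Y - U$ (or a suitable coordinate direction) to strictly improve $\langle X, \mu\rangle$ while keeping feasibility, unless $Y$ is already ``pinned'' by having equal rows among the relevant players, contradicting $Y \ne U$. (4) Bound the welfare loss: $\langle X', \mu\rangle_F = \langle Y, \mu\rangle_F - \lambda\langle Y - U, \mu\rangle_F \ge \langle Y, \mu\rangle_F - \lambda \cdot \|Y-U\|_1 \cdot b \ge \langle Y^\mu, \mu\rangle_F - C_{\mathrm{P}\ref{def:fairness_to_UAR}}\gamma$ for $C_{\mathrm{P}\ref{def:fairness_to_UAR}}$ depending on $n,m,b$ and the linear relation between $\lambda$ and $\gamma$; choosing $\gamma_0$ small enough keeps $\lambda \le 1$.

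I expect the main obstacle to be step (3): making rigorous the dichotomy ``either $Y = U$, or mixing toward $U$ opens up strict $\gamma$-slack in every tight proportionality constraint.'' The subtlety is that a tight constraint $\langle B_\ell(\mu), Y\rangle_F = 0$ has $\langle B_\ell(\mu), U\rangle_F = 0$ as well, so the naive convex-combination bound gives slack $\ge (1-\lambda)\cdot 0 + \lambda\cdot 0 = 0$, not $\ge \gamma$. The resolution should come from the structure of the LP: among all optimal solutions, one can choose $Y$ minimizing $\|Y - U\|$ (or maximizing total slack $\sum_\ell \langle B_\ell(\mu), Y\rangle_F$), and then either this $Y$ equals $U$ (done) or one shows that the objective gradient $\mu$ has a component pushing away from the tight constraints that can be traded for a uniform shift — giving a \emph{different} allocation with the same welfare and strictly positive slack everywhere, which after a final small mixing step yields slack $\ge \gamma$ with welfare loss $O(\gamma)$. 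An alternative, possibly cleaner route that sidesteps (3) entirely: directly define $X'$ by, for the subset of tight constraints, moving the corresponding players' allocations toward their common average (a ``local UAR'' on a sub-population), which is exactly the proportionality analogue of the operations in Algorithm \ref{algo:adding-slack} but vastly simpler because proportionality is one-sided and averaging a group never hurts proportionality for anyone outside the group; one then checks welfare loss is $O(\gamma)$ using $a \le \mu_{ik} \le b$. I would pursue this second route first, as it mirrors Lemma \ref{lem:ef-transform}'s strategy while avoiding the envy-freeness complications.
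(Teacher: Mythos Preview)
Your primary approach---the convex combination $X' = (1-\lambda)Y + \lambda U$---has a genuine gap that you yourself flag in step (3), and neither of your proposed rescues works. Here is a concrete obstruction. Take $n=2$, $m=2$, $\mu_1 = (1,1)$, $\mu_2 = (2,1)$. The unique optimum of LP~\eqref{lp:ymu} is $Y = \begin{pmatrix} 0 & 1 \\ 1 & 0 \end{pmatrix}$, with player~1's proportionality constraint exactly tight ($Y_1 \cdot \mu_1 = 1 = \tfrac12\|\mu_1\|_1$) and player~2's constraint strictly slack. Because player~1 values both items equally, \emph{any} allocation $X$ with column sums equal to~1 gives $X_1 \cdot \mu_1 = X_{11} + X_{12}$, and player~1's slack is $X_{11}+X_{12}-1$; in particular $(1-\lambda)Y + \lambda U$ still has $X'_{11}+X'_{12}=1$, so slack~$0$ for every $\lambda$. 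Your optimality argument in (3) also fails here: there is no other allocation with the same welfare and strictly positive slack for player~1, since welfare $= 3 - X_{11}$ forces $X_{11}=0$ and then player~1's slack is $X_{12}-1 \le 0$. Finally, your ``local UAR'' alternative collapses to global UAR (as you note, every player appears in every proportionality constraint), which incurs constant welfare loss $\tfrac12$ here, not $O(\gamma)$.

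The missing idea is an \emph{asymmetric} redistribution that transfers allocation mass from high-slack players to low-slack players, rather than uniformly shrinking everyone toward $U$. The paper's construction does exactly this: letting $S_i = Y^\mu_i \cdot \mu_i - \tfrac1n\|\mu_i\|_1$ be player~$i$'s slack at $Y^\mu$, it first checks whether $\sum_i S_i$ is small (at most $\tfrac{bn}{a}\gamma$); if so, $X' = U$ works because the total slack \emph{equals} the welfare gap $\langle Y^\mu - U, \mu\rangle_F$, so the loss is $O(\gamma)$. Otherwise, each player~$i$ contributes a fraction of their allocation \emph{proportional to $S_i$} into a common pot, and the pot is split equally. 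Because contributions scale with $S_i$, no player overdraws their slack, and because the pot has total mass $\Theta(\gamma/a)$ (using $\sum_i S_i$ large), every player---including those with $S_i = 0$---receives at least $\gamma$ of new value from the pot. In the example above, player~2 alone funds the pot and player~1's slack becomes positive. This slack-weighted redistribution is the key mechanism absent from your proposal.
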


\begin{proof}[Proof sketch]
Define the slack $S_i$ of a player $i$ for an allocation as the amount by which that player's value for their allocation is greater than their proportional value. In other words, the slack is the amount of welfare a player can lose and still satisfy the proportionality constraint. We construct the fractional allocation $X'$ in one of two different ways depending on the amount of total slack for the allocation $Y^{\mu}$ across all $n$ players. 

If the amount of total slack across all $n$ players is less than $\frac{bn\gamma}{a}$, then we take $X' = \mathrm{UAR}$. Note that the total slack is equivalent to the change in social welfare between $Y^{\mu}$ and UAR. Therefore, because the total slack was less than $\frac{bn\gamma}{a}$, the difference in social welfare between $Y^{\mu}$ and UAR is at most $\frac{bn\gamma}{a}$ which is $O(\gamma)$. Furthermore, by definition the UAR allocation satisfies option 2 of Property \ref{def:fairness_to_UAR} for all constraints $\ell$.

If the amount of total slack is greater than $\frac{bn\gamma}{a}$, then we construct $X'$ from $Y^{\mu}$ by transferring allocation away from players with slack greater than the required $\gamma$ and redistributing this allocation so that every player has slack of at least $\gamma$. Specifically, each player $i$ loses $\Delta_{ik}$ of their allocation for item $k$, where
\[
    \Delta_{ik} := \frac{Y_{ik}^\mu}{\sum_{k'=1}^m Y_{ik'}^\mu} \cdot  \frac{S_i}{\sum_{i'=1}^n S_{i'}} \cdot \frac{n\gamma}{a}.
\]
The allocation $X'$ is then constructed as
\begin{equation}
    X'_{ik} := Y^\mu_{ik} - \Delta_{ik} + \frac{1}{n}\sum_{i'=1}^n \Delta_{i'k}.
\end{equation}
Intuitively, this can be viewed as each player putting a part of their allocation (proportional to $S_i \cdot \gamma$) into a communal ``pot.'' The pot, consisting of $\sum_{i'=1}^n \Delta_{i'k}$ for item $k$, is then divided evenly among all $n$ players to form $X'$. By construction, no player loses more than $S_i$ social welfare when the pot is created, and every player receives at least $\gamma$ additional social welfare when the pot is redistributed. Therefore, in the resulting allocation $X'$, every player prefers their allocation to their proportional value by at least $\gamma$, i.e. each player has a slack of at least $\gamma$ for $X'$. Furthermore, the total difference in social welfare between $Y^{\mu}$ and $X'$ is at most $O(\gamma)$. We have thus shown that in both cases, $X'$ will satisfy all of the desired conditions. The full proof is relegated to Appendix \ref{prop_proofs}.
\end{proof}

It will be useful to introduce two further properties that are immediately satisfied by the definitions of envy-freeness in expectation and proportionality in expectation. Property \ref{def:lipschitz} guarantees a form of Lipschitz continuity in $\mu$ for the constraints. This is unsurprising, as the entries in the matrices $B_\ell(\mu)$ for both envy-freeness in expectation and proportionality in expectation  are linear in the entries of $\mu$.  Property \ref{def:zeros} guarantees that the non-zero entries in the constraint matrices stay the same for all values of $\mu$, which follows directly from Remarks \ref{remark_1} and \ref{remark_2} and the fact that $\mu_{ik}$ is bounded away from $0$.

 \begin{property}\label{def:lipschitz}
     There exists a $K > 0$ such that $\forall \mu^1, \mu^2 \in [a,b]^{n \times m}$ and $\forall \epsilon > 0$, if $\norm{\mu^1 - \mu^2}_1 \le \epsilon$ then $\norm{B_\ell(\mu^1) - B_\ell(\mu^2)}_1 \le K\epsilon$.
 \end{property}
 
 \begin{observation}\label{prop-and-ef-satisfy-lipschitz}
     The envy-freeness in expectation and proportionality in expectation constraints satisfy Property \ref{def:lipschitz}.
 \end{observation}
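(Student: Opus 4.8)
The plan is to verify the Lipschitz bound directly from the explicit descriptions of the constraint matrices given in Remarks \ref{remark_1} and \ref{remark_2}; no structural insight is needed beyond those formulas, since every entry of $B_\ell(\mu)$ is a fixed linear function of a single coordinate of $\mu$. I expect the whole argument to be a one-line computation for each of the two families, and the only thing to watch is that the constant $K$ can be chosen uniformly over all $\ell$ and (ideally) over both families at once; I claim $K = 2$ works in every case.

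For envy-freeness in expectation, I would fix $\ell \in [n^2]$ and set $i = \lceil \ell/n\rceil$ and $i' = (\ell \bmod n)+1$ as in Remark \ref{remark_1}, so that $B^{\efeconst}_\ell(\mu)$ has $(i,k)$-entry $\mu_{ik}$, $(i',k)$-entry $-\mu_{ik}$, and all other entries zero (with the self-envy case $i=i'$ contributing the zero matrix, so it is vacuous). Then for any $\mu^1,\mu^2 \in [a,b]^{n\times m}$ with $\norm{\mu^1-\mu^2}_1 \le \epsilon$,
\[
\norm{B^{\efeconst}_\ell(\mu^1) - B^{\efeconst}_\ell(\mu^2)}_1 = 2\sum_{k=1}^m |\mu^1_{ik} - \mu^2_{ik}| \le 2\norm{\mu^1 - \mu^2}_1 \le 2\epsilon,
\]
so Property \ref{def:lipschitz} holds with $K = 2$.

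For proportionality in expectation, I would fix $\ell \in [n]$ and use Remark \ref{remark_2}: $B^{pe}_\ell(\mu)$ has $(\ell,k)$-entry $\tfrac{n-1}{n}\mu_{\ell k}$ and $(i,k)$-entry $-\tfrac1n \mu_{\ell k}$ for $i \neq \ell$, so every entry depends only on row $\ell$ of $\mu$. Summing absolute differences over all entries gives
\[
\norm{B^{pe}_\ell(\mu^1) - B^{pe}_\ell(\mu^2)}_1 = \Big(\tfrac{n-1}{n} + (n-1)\tfrac1n\Big)\sum_{k=1}^m |\mu^1_{\ell k} - \mu^2_{\ell k}| = \tfrac{2(n-1)}{n}\sum_{k=1}^m|\mu^1_{\ell k} - \mu^2_{\ell k}| \le 2\epsilon,
\]
again with $K=2$. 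Taking $K = 2$ then serves both families simultaneously. There is no real obstacle here: the hypothesis $\mu \in [a,b]^{n\times m}$ is not even used for this property — it only matters for Property \ref{def:zeros} — so the sole "work" is transcribing the index conventions of the two remarks correctly and reading off the entrywise $\ell_1$ norm of the difference.
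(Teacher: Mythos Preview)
Your proposal is correct and matches the paper's reasoning: the paper does not write out a proof but simply remarks that the entries of $B_\ell(\mu)$ are linear in the entries of $\mu$, which is exactly what your explicit computations from Remarks~\ref{remark_1} and~\ref{remark_2} verify. Your choice of $K=2$ is fine and uniform across both families, and your handling of the degenerate $i=i'$ case in the envy-freeness family is appropriate.
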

 
 \begin{property}\label{def:zeros}
     For all $\mu, \mu' \in [a,b]^{n \times m}$,  $\{i : B_\ell(\mu)_i \ne \textbf{0}\} = \{i : B_\ell(\mu')_i \ne \textbf{0}\}$.
 \end{property}
 \begin{observation}\label{ef-satisfies-zero-exchangeability}
    The envy-freeness in expectation and proportionality in expectation constraints satisfy Property \ref{def:zeros}.
\end{observation}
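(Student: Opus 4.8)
The plan is to read the support of each constraint matrix directly off the explicit formulas in Remarks \ref{remark_1} and \ref{remark_2}, and to observe that this support is a function of the combinatorial index $\ell$ alone, never of the numerical entries of $\mu$ — precisely because the hypothesis $a>0$ keeps every entry of $\mu$ bounded away from zero.

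First I would handle envy-freeness in expectation. Fix $\ell\in[n^2]$ and let $i=\lceil \ell/n\rceil$, $i'=(\ell\bmod n)+1$ as in Remark \ref{remark_1}. If $i\ne i'$, then the $i$-th row of $B^{\efeconst}_\ell(\mu)$ is the vector $\mu_i=(\mu_{i1},\dots,\mu_{im})$, the $i'$-th row is $-\mu_i$, and every other row is $\mathbf 0$; since $\mu_{ik}\ge a>0$ for all $k$, both $\mu_i$ and $-\mu_i$ are nonzero, so $\{\,i'' : B^{\efeconst}_\ell(\mu)_{i''}\ne\mathbf 0\,\}=\{i,i'\}$, which is independent of $\mu$. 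If instead $i=i'$ (the degenerate self-comparison constraints), the two contributions cancel, $B^{\efeconst}_\ell(\mu)$ is identically zero, and the support is empty regardless of $\mu$. In either case the support depends only on $\ell$, which is exactly Property \ref{def:zeros} for the family $\efeconst(\mu)$.

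Next I would do the same for proportionality in expectation. Fix $\ell\in[n]$; by Remark \ref{remark_2}, the $\ell$-th row of $B^{pe}_\ell(\mu)$ is $\tfrac{n-1}{n}\mu_\ell$ and every other row is $-\tfrac1n\mu_\ell$. Because $\mu_{\ell k}\ge a>0$ for all $k$, all $n$ of these rows are nonzero (for $n\ge 2$; for $n=1$ proportionality is vacuous), so $\{\,i : B^{pe}_\ell(\mu)_i\ne\mathbf 0\,\}=[n]$, again independent of $\mu$. This establishes Property \ref{def:zeros} for $\pexpconst(\mu)$.

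There is no real obstacle here; the only subtleties worth flagging in the write-up are the degenerate $i=i'$ envy constraints, which yield an all-zeros matrix, and the essential role of the assumption $a>0$: it is exactly what prevents a structurally nonzero row from vanishing for some particular $\mu$, which is why Property \ref{def:zeros} is (and must be) phrased over $\mu\in[a,b]^{n\times m}$ rather than over arbitrary real matrices.
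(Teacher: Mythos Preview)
Your proposal is correct and matches the paper's own justification, which simply notes that the observation ``follows directly from Remarks \ref{remark_1} and \ref{remark_2} and the fact that $\mu_{ik}$ is bounded away from $0$'' without writing out the details. You have supplied exactly those details, including the degenerate $i=i'$ case, so there is nothing to add.
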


Recall that Property \ref{def:fairness_to_UAR} implies that for every constraint $\ell$, either the constraint $\ell$ has a slack of at least $\gamma$ for $X'$ or every player involved in constraint $\ell$ is treated equally under allocation $X'$. A slack of $\gamma$ in the constraint guarantees constraint satisfaction for all $\mu'$ close to $\mu$ if the constraints are continuous in $\mu$. Treating every player equally for a given constraint also guarantees that the constraints are satisfied for all $\mu'$ by Property \ref{def:notion_of_fairness}. Therefore, Properties \ref{def:notion_of_fairness} and \ref{def:fairness_to_UAR} together with continuity (Property \ref{def:lipschitz}) imply that there exists a fractional allocation $X'$ such that the social welfare of $X'$ is close to the social welfare of $Y^{\mu}$ and such that $X'$ not only satisfies the constraints for $\mu$, but also satisfies the constraints for any $\mu'$ close to $\mu$. 

\section{Algorithm and Regret Bounds}

In this section, we present our main result, an explore-then-commit algorithm which achieves $\tilde{O}(T^{2/3})$ regret while maintaining either proportionality in expectation or envy-freeness in expectation. The key step in Algorithm \ref{algo:ETC} is the optimization in LP \eqref{eq:lp_etc_finallp} to guarantee that the fairness constraints are satisfied with high probability. For $\mu \in \mathbb{R}_+^{n \times m}$ and $\epsilon \in \mathbb{R}_+^{n \times m}$, we define the confidence region $\mu \pm \epsilon = \{\mu' \in \mathbb{R}^{n \times m} : \mu_{ik} - \epsilon_{ik} \le \mu'_{ik} \le \mu_{ik} + \epsilon_{ik} \: \forall i,k  \}$. Note that Algorithm \ref{algo:ETC} requires solving LPs with an infinite number of constraints, which we discuss further in Section \ref{sec:discussion}.

 \begin{algorithm}[htb!]
 \caption{Fair Explore-Then-Commit}
 \label{algo:ETC}
 \begin{algorithmic}
\Require $n,m,T, \{\{(B_\ell(\mu), c_\ell)\}_{\ell=1}^{L}\}_{\mu}$
\For{$t \gets 1$ to $T^{2/3}-1$}
    \State At time $t$, use fractional allocation $X^t = \mathrm{UAR}$.
\EndFor
 \State $N_{ik} \gets  \sum_{\tau=0}^{T^{2/3}-1} \mathbbm{1}_{k_\tau = k, i_\tau = i}$
 \State $\hat{\mu}_{ik} \gets  \frac{1}{N_{ik}} \sum_{\tau = 0}^{T^{2/3}-1} \mathbbm{1}_{k_\tau = k, i_\tau = i} V_{i_\tau}(j_\tau)$
 \State $\epsilon_{ik} = \sqrt{\log^2(4Tnm)/(2N_{ik})}$
 \State $\hat{X} \gets$ Solution to the following LP:
        \begin{align*}
         \max_{X} &  \frob{X}{\hat{\mu}} \\
        \text{s.t. } & \langle B_\ell(\mu), X \rangle_F \ge c_\ell  \quad \forall \ell \in [L], \forall \mu \in \hat{\mu} \pm \epsilon \\
        & \sum_{i=1}^n X_{ik} = 1 \quad \forall k   \numberthis \label{eq:lp_etc_finallp}
    \end{align*}  
 \For{$t \gets T^{2/3}$ to $T-1$}
    \State At time $t$, use fractional allocation $X^t = \hat{X}$.
 \EndFor 
 \end{algorithmic}
 \end{algorithm}

\begin{thm}\label{thm:23foref}
     Suppose we are given $n,m, T, a,b$ such that $0 < a \le \mu^*_{ik} \le b$ for all $i \in [n]$, $k \in [m]$.  If $\{\{(B_\ell(\mu), c_\ell)\}_{\ell=1}^{L}\}_{\mu} = \{\efeconst(\mu)\}_{\mu}$ or $\{\{(B_\ell(\mu), c_\ell)\}_{\ell=1}^{L}\}_{\mu} = \{\pexpconst(\mu)\}_{\mu}$, then Algorithm \ref{algo:ETC} with probability $1-1/T$ satisfies the constraints $\{(B_\ell(\mu^*), c_\ell)\}_{\ell=1}^{L}$ and achieves regret of $\tilde{O}(T^{2/3})$ for constraints  $\{(B_\ell(\mu^*), c_\ell)\}_{\ell=1}^{L}$.
\end{thm}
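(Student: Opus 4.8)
The plan is to condition on one high‑probability event $\mathcal{E}$ on which (i) every exploration count $N_{ik}$ is at least $T^{2/3}/(2nm)$ and (ii) $\mu^* \in \hat\mu \pm \epsilon$, and to argue both constraint satisfaction and the $\tilde O(T^{2/3})$ regret bound on $\mathcal{E}$. To control $\Pr[\mathcal{E}]$: during exploration each item is allocated by $\mathrm{UAR}$, so $N_{ik} \sim \mathrm{Binomial}(T^{2/3}, 1/(nm))$ with mean $T^{2/3}/(nm)$, and a multiplicative Chernoff bound plus a union bound over the $nm$ pairs give event (i) except with probability $nm\exp(-\Omega(T^{2/3}/(nm)))$. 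Conditioning on the counts, each $\hat\mu_{ik}$ is the empirical mean of $N_{ik}$ independent sub‑Gaussian samples with mean $\mu^*_{ik}$ (values are independent of the allocation process, which licenses this conditioning), so a sub‑Gaussian tail bound gives $\Pr[|\hat\mu_{ik}-\mu^*_{ik}|>\epsilon_{ik}] \le 2\exp(-2N_{ik}\epsilon_{ik}^2) = 2\exp(-\log^2(4Tnm))$ for the chosen $\epsilon_{ik}$, and a union bound over $nm$ entries yields event (ii) except with probability $2nm\exp(-\log^2(4Tnm))$. For $T$ large enough relative to $n,m$, both failure probabilities are at most $1/(2T)$, so $\Pr[\mathcal{E}] \ge 1-1/T$.

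\emph{Constraint satisfaction.} In the exploration phase $X^t=\mathrm{UAR}$ has all rows equal, so by Property \ref{def:notion_of_fairness} (Observation \ref{ef-satisfies-equal-treatment-guarantee}) it satisfies every $(B_\ell(\mu),c_\ell)$ for every $\mu$; in particular $\mathrm{UAR}$ is feasible for LP \eqref{eq:lp_etc_finallp}, so $\hat X$ is well defined. On $\mathcal{E}$ we have $\mu^*\in\hat\mu\pm\epsilon$, hence $\{(B_\ell(\mu^*),c_\ell)\}_{\ell=1}^{L}$ is among the constraints imposed in LP \eqref{eq:lp_etc_finallp}, so $\hat X$ — and therefore every exploitation‑phase allocation — satisfies $\{(B_\ell(\mu^*),c_\ell)\}_{\ell=1}^{L}$. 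This establishes the constraint claim on $\mathcal{E}$.

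\emph{Regret.} Write $\bar\epsilon := \max_{ik}\epsilon_{ik}$; on $\mathcal{E}$, $\bar\epsilon \le \sqrt{nm\log^2(4Tnm)/T^{2/3}} = \tilde O(T^{-1/3})$. Since $0\le\frob{Y^{\mu^*}}{\mu^*}-\frob{X^t}{\mu^*}\le\frob{Y^{\mu^*}}{\mu^*}\le mb$ for every $t$, the $O(T^{2/3})$ exploration rounds contribute $O(T^{2/3})$, and it remains to bound the per‑round exploitation gap $\frob{Y^{\mu^*}}{\mu^*}-\frob{\hat X}{\mu^*}$ by $\tilde O(T^{-1/3})$. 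For this I would exhibit an LP‑feasible witness $X'$ near‑optimal for $\mu^*$, via Property \ref{def:fairness_to_UAR} (which holds by Lemma \ref{lem:ef-transform} or Lemma \ref{prop-strong-transform}): apply it at $\mu=\mu^*$ with $\gamma := 2Knm\bar\epsilon < \gamma_0$ ($T$ large, $K$ from Property \ref{def:lipschitz}), obtaining $X'$ with $\frob{X'}{\mu^*}\ge\frob{Y^{\mu^*}}{\mu^*}-C_{\mathrm{P}\ref{def:fairness_to_UAR}}\gamma$ and, for each $\ell$, either (A) $\langle B_\ell(\mu^*),X'\rangle_F\ge c_\ell+\gamma$, or (B) all players $i$ with $B_\ell(\mu^*)_i\ne\textbf{0}$ share a common row in $X'$. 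To see $X'$ is feasible for LP \eqref{eq:lp_etc_finallp}: for $\mu\in\hat\mu\pm\epsilon$, on $\mathcal{E}$ we have $\|\mu-\mu^*\|_1\le\|\mu-\hat\mu\|_1+\|\hat\mu-\mu^*\|_1\le 2nm\bar\epsilon$, so in case (A), Property \ref{def:lipschitz} gives $\|B_\ell(\mu)-B_\ell(\mu^*)\|_1\le 2Knm\bar\epsilon=\gamma$ and hence $\langle B_\ell(\mu),X'\rangle_F\ge\langle B_\ell(\mu^*),X'\rangle_F-\gamma\ge c_\ell$ (entries of $X'$ lie in $[0,1]$), while in case (B), Property \ref{def:zeros} (Observation \ref{ef-satisfies-zero-exchangeability}) gives $\{i:B_\ell(\mu)_i\ne\textbf{0}\}=\{i:B_\ell(\mu^*)_i\ne\textbf{0}\}$, so Property \ref{def:notion_of_fairness} applied at $\mu$ gives $\langle B_\ell(\mu),X'\rangle_F\ge c_\ell$. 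Thus $\frob{\hat X}{\hat\mu}\ge\frob{X'}{\hat\mu}$, and using $|\hat\mu_{ik}-\mu^*_{ik}|\le\bar\epsilon$ together with $\sum_{ik}Z_{ik}=m$ for any fractional allocation $Z$,
\[
\frob{\hat X}{\mu^*}\ \ge\ \frob{\hat X}{\hat\mu}-m\bar\epsilon\ \ge\ \frob{X'}{\hat\mu}-m\bar\epsilon\ \ge\ \frob{X'}{\mu^*}-2m\bar\epsilon\ \ge\ \frob{Y^{\mu^*}}{\mu^*}-C_{\mathrm{P}\ref{def:fairness_to_UAR}}\gamma-2m\bar\epsilon .
\]
So the per‑round exploitation gap is at most $2C_{\mathrm{P}\ref{def:fairness_to_UAR}}Knm\bar\epsilon+2m\bar\epsilon=\tilde O(T^{-1/3})$; summing over at most $T$ rounds and adding the exploration contribution gives total regret $\tilde O(T^{2/3})$ on $\mathcal{E}$, which completes the proof.

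\emph{Main obstacle.} The only substantive step, beyond bookkeeping and standard concentration, is producing the witness $X'$: a near‑optimal allocation whose fairness constraints are robust to replacing $\mu^*$ by an arbitrary point of the confidence box $\hat\mu\pm\epsilon$. Property \ref{def:fairness_to_UAR} is exactly what supplies this, so the real work is hidden in Lemmas \ref{lem:ef-transform} and \ref{prop-strong-transform}; given those, what remains is to calibrate the slack $\gamma=\Theta(\bar\epsilon)$ against the confidence radius so that the length‑$T^{2/3}$ exploration phase exactly balances the $\tilde O(T^{-1/3})$ per‑round exploitation error, plus the trivial domain check that $\hat\mu\pm\epsilon$ stays where Properties \ref{def:notion_of_fairness}, \ref{def:lipschitz}, \ref{def:zeros} apply (valid since $a>0$ and $\bar\epsilon\to 0$).
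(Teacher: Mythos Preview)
Your proposal is correct and follows essentially the same route as the paper's proof: condition on the high-probability event that the exploration counts are large and $\mu^*\in\hat\mu\pm\epsilon$, invoke Property~\ref{def:fairness_to_UAR} at $\mu^*$ with slack $\gamma=\Theta(\|\epsilon\|_1)$ to produce a robustly feasible witness $X'$, verify $X'$ satisfies LP~\eqref{eq:lp_etc_finallp} via the two cases (slack versus equal treatment, using Properties~\ref{def:notion_of_fairness}, \ref{def:lipschitz}, \ref{def:zeros}), and then use optimality of $\hat X$ plus $\|\hat\mu-\mu^*\|\le\epsilon$ to bound the per-round exploitation gap by $\tilde O(T^{-1/3})$. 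The only cosmetic differences are that you track $\bar\epsilon=\max_{ik}\epsilon_{ik}$ where the paper uses $\|\epsilon\|_1$, and you make explicit the feasibility of $\mathrm{UAR}$ for the LP and the domain check $\hat\mu\pm\epsilon\subset[a,b]^{n\times m}$, both of which the paper leaves implicit.
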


\begin{proof}[Proof sketch] 
We have already shown in Section \ref{sec:general_constraints} that both envy-freeness in expectation and proportionality in expectation satisfy Properties \ref{def:notion_of_fairness}, \ref{def:fairness_to_UAR}, \ref{def:lipschitz}, and \ref{def:zeros}. Therefore, it suffices to show that Algorithm \ref{algo:ETC} achieves $\tilde{O}(T^{2/3})$ regret for any family of constraints  $\{\{(B_\ell(\mu), c_\ell)\}_{\ell=1}^{L}\}_{\mu}$ that satisfies Properties \ref{def:notion_of_fairness}, \ref{def:fairness_to_UAR}, \ref{def:lipschitz}, and \ref{def:zeros}. 

The allocations used during the warm-up steps of Algorithm \ref{algo:ETC} are uniform at random, and therefore these allocations satisfy the constraints  $\{(B_\ell(\mu), c_\ell)\}_{\ell=1}^{L}$ for all $\mu$ by Property \ref{def:notion_of_fairness}.  Because the fractional allocations used in the first $T^{2/3}$ steps are all $\mathrm{UAR}$, each arm, or (player, item) pair, will be sampled with probability $\frac{1}{nm}$ at each step. This implies by Hoeffding's inequality that with high probability, $N_{ik} = \tilde{\Omega}(T^{2/3})$ for all $i,k$. The $i,k$ entry in the $\epsilon$ matrix is proportional to $\frac{1}{\sqrt{N_{ik}}}$, and therefore $\norm{\epsilon}_1 = \tilde{O}(T^{-1/3})$ with high probability. Because the value distributions are sub-Gaussian, a standard application of Hoeffding's inequality also gives that with high probability, the true mean matrix will be within our confidence region, i.e. $\mu^* \in \hat{\mu} \pm \epsilon$. To summarize, because we used $\mathrm{UAR}$ for the first $T^{2/3}$ steps, we have that
\[
\Pr\left(\norm{\epsilon}_1 \le \tilde{O}(T^{-1/3}), \mu^* \in \hat{\mu} \pm \epsilon \right) \ge 1-\frac{1}{T}.
\]
For the rest of the proof we will assume that the high probability event in the equation above holds. The next step is to show that $\hat{X}$ has $\norm{\epsilon}_1$ per-step regret compared to $Y^{\mu^*}$. Let $K$ be the Lipschitz constant of Property \ref{def:lipschitz}. Using Property \ref{def:fairness_to_UAR} with $\mu = \mu^*$ and $\gamma = 2K\norm{\epsilon}_1 = \tilde{O}(T^{-1/3})$ gives that there exists an allocation $X'$ such that the social welfare of $X'$ is only $O(\norm{\epsilon}_1)$ less than the social welfare of the optimal allocation $Y^{\mu^*}$ and such that every constraint $\ell$ either has slack of at least $2K\norm{\epsilon}_1$ (option 1 of Property \ref{def:fairness_to_UAR}) or every player is treated equally in constraint $\ell$ (option 2 of Property \ref{def:fairness_to_UAR}). We will now show that $X'$ is a solution to LP \eqref{eq:lp_etc_finallp}. If option 1 holds for constraint $\ell \in [L]$, then by Property \ref{def:lipschitz}, $X'$ will satisfy the  constraint $(B_\ell(\mu), c_\ell)$ for every $\mu \in \hat{\mu} \pm \epsilon$. Formally, if option 1 holds for constraint $\ell$, then for any $\mu \in \hat{\mu} \pm \epsilon$,
    \begin{align*}
        \frob{B_\ell(\mu)}{X'} &= \frob{B_\ell(\mu)}{X'} - \frob{B_\ell(\mu^*)}{X'} +  \frob{B_\ell(\mu^*)}{X'} \\
        &= \frob{B_\ell(\mu) - B_\ell(\mu^*)}{X'} + \frob{B_\ell(\mu^*)}{X'} \\
        &\ge  \frob{B_\ell(\mu^*)}{X'} - \norm{B_\ell(\mu) - B_\ell(\mu^*)}_1 && \text{[$0 \le X'_{ik} \le 1, \: \forall i,k$]}\\
        &\ge \frob{B_\ell(\mu^*)}{X'} - 2K \norm{\epsilon}_1 && \text{[Property \ref{def:lipschitz}]}\\
        &\ge c_\ell. && \text{[Property \ref{def:fairness_to_UAR}:  option 1]}
    \end{align*}
If option 2 holds for constraint $\ell \in [L]$, then Properties \ref{def:notion_of_fairness} and \ref{def:zeros} together guarantee that $X'$ will satisfy the constraint $(B_\ell(\mu), c_\ell)$ for every $\mu \in \hat{\mu} \pm \epsilon$. Therefore, $X'$ will satisfy all of the constraints $\{B_\ell(\mu), c_\ell\}_{\ell=1}^L$ for every $\mu \in \hat{\mu} \pm \epsilon$, which implies that $X'$ is a solution to LP \eqref{eq:lp_etc_finallp}.

Finally, because $\hat{X}$ is the optimal solution to LP \eqref{eq:lp_etc_finallp}, $\hat{X}$ must have higher social welfare than $X'$ under means $\hat{\mu}$. Because $\norm{\mu^* - \hat{\mu}}_1 \le \norm{\epsilon}_1$, this implies that $\hat{X}$ must have at most $O(\norm{\epsilon}_1)$ less social welfare than $X'$ under the true means $\mu^*$. An application of the triangle inequality gives that,
    \begin{align*}
        \frob{Y^{\mu^*}}{\mu^*} - \frob{\hat{X}}{\mu^*} &= \frob{Y^{\mu^*}}{\mu^*} -\frob{X'}{\mu^*} + \frob{X' }{\mu^*}-  \frob{\hat{X}}{\mu^*} \\
        &= \tilde{O}\left(\norm{\epsilon}_1 + \norm{\epsilon}_1 \right) \\
        &= \tilde{O}(T^{-1/3}).
    \end{align*}
Thus, the total regret for the steps after the warm-up period is $\tilde{O}(T^{2/3})$. The regret of the warm-up period is at most $(b-a)T^{2/3}$ due to the assumption that the mean values are bounded. We can therefore conclude that the total regret is $\tilde{O}(T^{2/3})$, and this completes the proof of regret. Finally, we note that by construction of LP \eqref{eq:lp_etc_finallp}, if $\mu^* \in \hat{\mu} \pm \epsilon$ then the chosen fractional allocations $\hat{X}$ must also satisfy the constraints $\{(B_\ell(\mu^*), c_\ell)\}_{\ell=1}^{L}$ as desired. See Appendix \ref{app:proof_for_etc_alg} for the full proof. 
\end{proof}

\section{Lower bounds}

The following lower bound shows that Theorem \ref{thm:23foref} is tight up to $\log$ factors. An equivalent result holds for proportionality, with the same proof.

\begin{thm}\label{thm:lower_bounds}
    There exists $a,b,n,m$ such that no algorithm can, for all $\mu^* \in [a,b]^{n \times m}$, both satisfy the envy-freeness constraints and achieve regret of less than $\frac{T^{2/3}}{\log(T)}$ with probability at least $1-1/T$. The same is true for the proportionality constraints.
\end{thm}
\begin{proof}[Proof sketch]
    We defer the formal proof to Appendix \ref{app:lower_bounds} and provide brief intuition here. 
    
    Suppose there are two item types and two players. In this case envy-freeness and proportionality are equivalent, and therefore we will focus on the former. Consider the following two mean value matrices.

\begin{minipage}[t]{.5\linewidth}
\vspace{0.5pt}
\begin{center}
$\mu_1$ = $\begin{bmatrix}
        2 & 3 \\
       1 & 1 
        \end{bmatrix}$
\end{center}
\end{minipage}
\begin{minipage}[t]{.5\linewidth}
\vspace{0.5pt}
\begin{center}
    $\mu_2 = \begin{bmatrix}
        2 & 3 \\
        1 & 1 +T^{-1/3}
        \end{bmatrix}$
\vspace{0.5pt}
\end{center}
\end{minipage}

We will show that no algorithm can with probability $1-1/T$ achieve regret of less than $\tilde{\Omega}(T^{2/3})$ and satisfy envy-freeness in expectation for both of these distributions. First, note that the expected social welfare maximizing allocation for $\mu_1$ is to give all items of type 1 to player 2 and all items of type 2 to player 1. On the other hand, any envy-free allocation for $\mu_2$ must give $\tilde{\Omega}(T^{-1/3})$ fraction of items of type 2 to player 2. This implies that if an algorithm is unable to distinguish between $\mu_1$ and $\mu_2$, then either the regret will be $\tilde{\Omega}(T^{2/3})$ for $\mu_1$ or the algorithm will not be envy-free for $\mu_2$.

Therefore, any algorithm that has regret of less than $\tilde{\Omega}(T^{2/3})$ and satisfies envy-freeness for both $\mu_1$ and $\mu_2$ must distinguish betwen $\mu_1$ and $\mu_2$. The only way to do this is to allocate at least $\tilde{\Omega}(T^{2/3})$ items of type $2$ to player 2. However, this will result in regret under $\mu_1$ of $\tilde{\Omega}(T^{2/3})$.
\end{proof}

\section{Discussion}\label{sec:discussion}

We conclude by discussing some limitations and open questions. First, Algorithm \ref{algo:ETC} involves solving a linear program with an infinite number of linear constraints. Linear programs with an infinite number of constraints (called semi-infinite programs) are well-studied and occur in many applications~\cite{hettich1993semi,lopez2007semi}. We also note that a finite number of (exponentially many) constraints suffices for envy-freeness in expectation and proportionality in expectation by bounding all of the possible extreme values of $\epsilon$. Nevertheless, we opted to avoid this representation because it significantly complicates the presentation of the algorithm. Furthermore, there also exists a polynomial time separation oracle for determining whether an allocation satisfies the infinitely many constraints, which would allow techniques such as the Ellipsoid Method~\cite{bland1981ellipsoid} to solve the linear program in polynomial time. 

Second, while the regret coefficients for proportionality are polynomial in $n$, a practical limitation of Algorithm \ref{algo:ETC} for envy-freeness is that the $\tilde{O}$ term is exponential in $n$.  We expect, however, that the worst-case bound we present in Lemma \ref{lem:ef-transform} is far from tight. Whether there exists a bound on the regret that is polynomial in $n$ for learning under envy-freeness in expectation constraints is an open question for future work. 

The other natural question that remains open for future work is whether we can achieve $\tilde{O}(\sqrt{T})$ regret while maintaining envy-freeness in expectation or proportionality in expectation. If the optimal solution $Y^{\mu^*}$ has a positive slack in every constraint, then an upper confidence bound (UCB) approach would be likely to work. Unfortunately, the fairness constraints for envy-freeness in expectation and proportionality in expectation are often tight for the optimal allocation. Furthermore, the constraints have a constant (greater than $1/n$) dependence on every unknown value in the $\mu^*$ matrix. Therefore, every mean value $\mu^*_{ik}$ might need to be learned with high accuracy even if the optimal allocation does not allocate item type $k$ to player $i$. 

We also note that there exist additional (albeit less prominent) fairness notions for the problem of online fair division, such as equitability, which may satisfy additional properties that allow for lower regret. We leave the question of studying more fairness notions for future work.

Finally, a broader question is whether the connection we have established between multi-armed bandits and online fair division might be leveraged to give a fresh perspective on additional problems in this area, such as online cake cutting~\cite{Walsh11}.

\newpage

\bibliographystyle{plainnat}
\bibliography{abb,ultimate, bib}
\newpage
\appendix
\section{Algorithmic Representation of Model}\label{app:alg}

\begin{algorithm}[htb!]
\caption{[Online Item Allocation]}
\label{algo:OIA}
\begin{algorithmic}[1]
\Require $\alg$
\State $\forall i, \: A^0_i \gets \{\}$, $H_0 \gets \{\}$
\For{$t \leftarrow 1$ to $T$} 
\State $X_t \gets \alg(H_t)$
\State $k_t \sim \mathcal{D}$
\State Generate item $j_t$ of type $k_t$ (i.e. $V_i(j_t) \sim N(\mu^*_{ik_t}, 1), \: \forall i \in N$ )
\State $i_t \gets \text{Sample from $(X_t)^\top_{k_t}$}$
\State $A^{t}_{i_t} = A^{t-1}_{i_t} + \{j_t\}$
\State $H_{t} \gets H_{t-1} + (k_t, i_t, V_{i_t}(j_t))$
\EndFor \\
\Return $A = (A^T_1,A^T_2,...,A^T_n)$ 
\end{algorithmic}
\end{algorithm}

\section{Motivating Fairness in Expectation}\label{app:realized}

In this section, we will explore the relationship between envy-freeness in expectation and realized envy, as well as the relationship between requiring envy-freeness in expectation at every time step and only requiring envy-freeness in expectation at the end of round $T$. For this section, we will use the following notation. For any algorithm $\alg$, we denote $X_t = \alg(H_t)$ as the fractional allocation used at time $t$, and $\Pr_{\alg}(i,k,H_t) := (X_t)_{ik}$.

Previous works on fair online allocation of indivisible goods have focused on the fairness of the final, realized allocation instead of studying fairness in expectation as in Definitions \ref{def:ef} and \ref{def:proportionality}~\cite{BKPP+24}. We define the realized envy of an allocation below.

\begin{definition}\label{def:realized}
    The realized envy of allocation $A$ at time $\tau$ is $ \max_{i,i'}  V_i(A_{i'}^\tau) -  V_i(A_{i}^\tau)$.
\end{definition}

We show in the following theorems that algorithms which are envy-free in expectation are within a $\log(T)$ factor of being ``optimal'' in terms of final realized envy. Informally, Theorem \ref{lemma:no_low_envy} shows that in our setting, no algorithm can with high probability output an allocation $A(\alg)$ with realized envy less than $\sqrt{t}$. Conversely, Theorem \ref{lemma:log2T_envy} shows that any algorithm $\alg$ that satisfies envy-freeness in expectation will output a final allocation $A(\alg)$ with realized envy of at most $\sqrt{T}\log(T)$ with high probability.  

\begin{thm}\label{lemma:no_low_envy}
    Suppose $\mu^*$ is known. For any algorithm $\alg$ and for any $\tau \in [T]$, with probability at least $1/16$ the allocation $A^\tau(\alg)$ has realized envy of more than $\sqrt{\tau}$.  
\end{thm}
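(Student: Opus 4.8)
The plan is to prove a lower bound on realized envy via an anti-concentration / information-theoretic argument over two specific players. Since the claim must hold for \emph{every} algorithm (even one that knows $\mu^*$), I would isolate two players, say players $1$ and $2$, and a single item type (WLOG type $1$, which arrives with probability $1/m$), and track only the items of type $1$ that are allocated to players $1$ or $2$. Let me set up the key reduction first.

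\textbf{Reduction to a symmetric two-player sub-instance.} Consider a $\mu^*$ in which $\mu^*_{11}=\mu^*_{21}$, so players $1$ and $2$ have identical (say, unit-variance Gaussian, or generic sub-Gaussian) value distributions for type-$1$ items. By time $\tau$, some random number $Q$ of type-$1$ items have arrived; of those, the algorithm has allocated some subset to player $1$ and some to player $2$ (and possibly some to others). The realized envy of player $1$ toward player $2$ is at least $V_1(A_2^\tau)-V_1(A_1^\tau)$, and I will lower-bound the absolute difference $|V_1(A_2^\tau)-V_1(A_1^\tau)|$ restricted to type-$1$ items, using the fact that any allocation gives realized envy $\geq \max(V_1(A_2)-V_1(A_1),\,V_2(A_1)-V_2(A_2))$, and by symmetry of the two players one of these two quantities is at least $\tfrac12$ of the absolute difference in \emph{one} player's valuations — actually cleaner: since $\mu^*_{11}=\mu^*_{21}$ and the valuations are drawn i.i.d., I can argue that whatever the algorithm does, the vector of realized type-$1$ values it hands to player $1$ versus player $2$ is, conditionally on the algorithm's (information-set-measurable) decisions, a partition of i.i.d.\ samples, and the gap between the two sums behaves like a sum of $\Theta(\tau/m)$ i.i.d.\ mean-zero terms.

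\textbf{Anti-concentration.} The core estimate: if $Z_1,\dots,Z_\tau$ are i.i.d.\ nondegenerate with bounded variance, and we split them into two groups according to a process that is adapted to a filtration (so each $Z_s$'s group membership is decided before $Z_s$ is revealed), then the difference of the two group sums is a martingale with $\tau$ increments each of variance $\Theta(1)$, hence has variance $\Theta(\tau)$; by the Paley–Zygmund inequality (or a Berry–Esseen / explicit Gaussian-case computation, since the model even allows Gaussians), $|{\rm difference}| > \sqrt{\tau}$ with constant probability. I would do the Gaussian case in detail: conditioned on the allocation decisions, $V_1(A_1^{\tau}$ restricted to type $1)$ and $V_1(A_2^{\tau}$ restricted to type $1)$ are independent Gaussians and their difference is Gaussian with variance equal to the number of type-$1$ items given to players $1$ and $2$ combined; I then need this count to be $\Omega(\tau)$ with good probability — but here is a subtlety: the algorithm might give \emph{no} type-$1$ items to players $1$ or $2$.

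\textbf{Handling the "algorithm avoids these players" case.} This is the main obstacle and where the argument needs care. If the algorithm rarely allocates type-$1$ items to players $1$ or $2$, I cannot get envy from type-$1$ items between them — so I need the bound to come from somewhere. The fix is to choose $\mu^*$ so that \emph{all} players have identical distributions on \emph{all} item types (the fully symmetric instance $\mu^*_{ik}\equiv c$). Then by a pigeonhole/averaging argument, for any allocation of the $\tau$ items, some pair of players $i,i'$ has $|A_i^\tau|$ and $|A_{i'}^\tau|$ summing to $\Omega(\tau/n^2)$ ... no — better: among $n$ players partitioning $\tau$ items, some player gets $\geq \tau/n$ items and some gets $\leq \tau/n$ items, but I want a \emph{fixed} pair. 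Cleanest: fix players $1$ and $2$; condition on the allocation; let $q = |A_1^\tau| + |A_2^\tau|$. If $q \geq \tau/n^2$ (say), the anti-concentration above gives envy $\gtrsim \sqrt{\tau}$ between them with constant probability, after rescaling constants. If $q < \tau/n^2$ for players $1,2$... then I should have picked a better pair; so instead I run the argument for the pair achieving the max of $|A_i|+|A_j|$ over all pairs $i\neq j$, which is always $\geq 2\tau/n$ (the two largest bundles), and note the anti-concentration constant is uniform over which pair it is since all distributions are identical. The resulting probability bound is a constant depending on $n$ (through the $\tau/n$ factor inside the square root, which only affects constants, and through a union bound over $\binom n2$ pairs to identify which pair is "large" — actually no union bound is needed since we condition on the allocation first and then the pair is determined). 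I expect the final constant $1/16$ to fall out of a clean Gaussian tail computation: $\Pr(|N(0,\sigma^2)| > \sigma) \approx 0.317 > 1/16$ with room to spare for the small-variance corrections, after absorbing the $\tau \mapsto \tau/n$ loss by noting $\sqrt{\tau/n} \cdot (\text{const})$ still exceeds $\sqrt\tau$ only if... hmm, it does \emph{not}, so the statement's $\sqrt\tau$ must really be obtained with the full-variance pair, i.e. I must argue the two largest bundles already give variance $\Omega(\tau)$ with the hidden constant possibly $<1$, which then forces a rescaling — I suspect the paper's $\sqrt\tau$ is loose by an $n$-dependent constant and the $1/16$ absorbs this, so I would state the Gaussian computation carefully and let the constants land where they land, flagging that $\sqrt\tau$ here hides a $1/\mathrm{poly}(n)$ factor. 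The main obstacle, then, is exactly this bookkeeping: ensuring that regardless of the algorithm's (possibly adversarial, $\mu^*$-aware) allocation choices, enough of the $\tau$ items land on a single identifiable pair of players to drive the anti-concentration, and pinning the constant to beat $1/16$.
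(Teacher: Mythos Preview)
Your core intuition---that because the algorithm commits to a recipient before seeing the value, the ``difference of bundle values'' behaves like a mean-zero random walk amenable to anti-concentration---is exactly right. But you overcomplicate by working in general $n,m$ and then trying to chase down which pair of players receives enough items. Since the theorem is a \emph{lower bound}, you get to pick the instance: the paper simply takes $n=2$, $m=1$, and Gaussian values $\mathcal N(\mu,1)$. With two players and one type, every item lands on player $1$ or player $2$, so your ``algorithm avoids these players'' obstacle vanishes entirely, and with it the $\mathrm{poly}(n)$ loss you flag at the end (which, as you yourself notice, would not recover the stated $\sqrt\tau$).

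The paper then uses a clean coupling you don't quite reach: for each round $t$, let $v_t^a$ be the value of the \emph{assigned} player and $v_t^u$ the value of the \emph{unassigned} player. Because the allocation decision precedes the value revelation and both players' values are i.i.d.\ $\mathcal N(\mu,1)$, the sequences $(v_t^a)$ and $(v_t^u)$ are each i.i.d.\ $\mathcal N(\mu,1)$ regardless of what the algorithm does. One then checks directly that
\[
(\text{envy of }1)+(\text{envy of }2)=\sum_{t\le\tau} v_t^u-\sum_{t\le\tau} v_t^a,
\]
which is the difference of two independent $\mathcal N(\tau\mu,\tau)$ variables; with constant probability this exceeds $2\sqrt\tau$, forcing one player's envy above $\sqrt\tau$. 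No martingale machinery, no pigeonhole over pairs, no union bound---and the constant $1/16$ drops out of two independent one-sided Gaussian tail events. Your speculation that ``$\sqrt\tau$ hides a $1/\mathrm{poly}(n)$ factor'' is therefore not the right reading of the statement; rather, the instance is chosen so no such factor appears.
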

\begin{proof}

    Assume there are two players and only one item type, and assume that all values are drawn from $\mathcal{N}(\mu, 1)$. Fix $\tau \ge \log^2(T)$. As in Algorithm \ref{algo:OIA}, define $i_t \in \{1,2\}$ as the player allocated the item at time $t$. The realized envy of the two players can be written as:
    \begin{equation}\label{eq:envy_player_1_t}
      \text{Realized Envy of Player $1$ at time $\tau$} =  \left( \sum_{t=0}^\tau \mathbbm{1}_{i_t = 1} \cdot V_1(j_t)\right)  - \left( \sum_{t=0}^\tau \mathbbm{1}_{i_t = 2} \cdot V_1(j_t)\right)
    \end{equation}
    \begin{equation}\label{eq:envy_player_2_t}
       \text{Realized Envy of Player $2$ at time $\tau$} =  \left( \sum_{t=0}^\tau \mathbbm{1}_{i_t = 2} \cdot V_2(j_t)\right)  - \left( \sum_{t=0}^\tau \mathbbm{1}_{i_t = 1} \cdot V_2(j_t)\right).
    \end{equation}

    Let $v_i^a$ be the value of the assigned player for item $i$, and $v_i^u$ be the value of the unassigned player for item $i$. Note that $v_i^a, v_i^u \sim N(\mu, 1)$, because at the time of allocation, $\alg$ does not know either player's value for the item, and player values are therefore independent of the assignment.
     
    By this coupling argument, Equations  \eqref{eq:envy_player_1_t} and \eqref{eq:envy_player_2_t} can be rewritten as:
    \begin{equation}\label{eq:envy_player_1_tb}
        \text{Realized Envy of Player $1$ at time $\tau$} = \left( \sum_{t=0}^\tau 1_{i_t = 1} \cdot v_i^a\right)  - \left( \sum_{i=0}^\tau 1_{i_t = 2} \cdot v_i^u\right)
    \end{equation}
    \begin{equation}\label{eq:envy_player_2_tb}
       \text{Realized Envy of Player $2$ at time $\tau$} =  \left( \sum_{t=0}^\tau 1_{i_t = 2} \cdot v_i^a\right)  - \left( \sum_{t=0}^\tau 1_{i_t = 1} \cdot v_i^u\right).
    \end{equation}
    By the Central Limit Theorem, $\sum_{i=0}^\tau v_i^a$ and $\sum_{i=0}^\tau v_i^u$ are both $N(\tau \cdot \mu, \tau)$. Therefore, for any $\tau$, with probability at least $1/16$, we have that
    \[
        \sum_{t=0}^\tau v_t^a \le \tau \mu - \sqrt{\tau} \quad \text{and}
    \]
    \[
        \sum_{t=0}^\tau v_t^u \ge \tau \mu + \sqrt{\tau}.
    \]
    Putting these equations together, we have that 
    \[
        \sum_{t=0}^\tau v_t^a + 2\sqrt{\tau} \leq\sum_{t=0}^\tau v_t^u.
    \]
    However, this result with Equations \eqref{eq:envy_player_1_tb} and \eqref{eq:envy_player_2_tb} implies that the envy must be at least $\sqrt{\tau}$ for any choice of $\{i_t\}$. Therefore, we have shown that with probability at least $1/16$, the envy at time $\tau$ will be at least $\sqrt{\tau}$ for any possible algorithm.
\end{proof}
\begin{thm}\label{lemma:log2T_envy}
     Suppose $\mu^*$ is known. Also assume that all of the value distributions are bounded by a constant $B$. If $\alg$ is deterministic (i.e. $\alg(H_t)$ is a deterministic function of $H_t$) and satisfies envy-freeness in expectation, then with probability $1-o(1/T)$, the realized envy of $A^\tau(\alg)$ at every time $\tau \in [T]$ is at most $\sqrt{\tau}\log(T)$.
\end{thm}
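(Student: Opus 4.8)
The plan is to show that, for each ordered pair of distinct players $(i,i')$, the process tracking $i$'s realized envy toward $i'$ is a supermartingale with bounded increments, and then combine a Hoeffding/Azuma tail bound with a union bound over all times $\tau\in[T]$ and all pairs. First I would fix $i\neq i'$ and set $Z_t := \mathbbm{1}_{i_t=i'}V_i(j_t) - \mathbbm{1}_{i_t=i}V_i(j_t)$, the increment at step $t$ of $V_i(A^\tau_{i'}) - V_i(A^\tau_i)$, so that $M_\tau := \sum_{t=1}^\tau Z_t = V_i(A^\tau_{i'}) - V_i(A^\tau_i)$ with $M_0=0$. Let $\mathcal{F}_t$ be the $\sigma$-algebra generated by $(k_s,i_s,(V_{i''}(j_s))_{i''\in N})_{s\le t}$; then $H_t$ is $\mathcal{F}_{t-1}$-measurable and, since $\alg$ is deterministic, so is $X_t=\alg(H_t)$.

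The crucial step is the computation $\E[Z_t\mid\mathcal{F}_{t-1}]\le 0$. This uses two facts: (i) at the time of allocation the realized values $(V_{i''}(j_t))_{i''}$ are independent of the chosen player $i_t$ given $(\mathcal{F}_{t-1},k_t)$ — the same coupling observation used in the proof of Theorem~\ref{lemma:no_low_envy} — so that $\E[\mathbbm{1}_{i_t=i'}V_i(j_t)\mid\mathcal{F}_{t-1},k_t] = (X_t)_{i'k_t}\,\mu^*_{ik_t}$, and likewise with $i$ in place of $i'$; and (ii) averaging over $k_t\sim\mathrm{Unif}([m])$ gives $\E[Z_t\mid\mathcal{F}_{t-1}] = \tfrac1m\big((X_t)_{i'}\cdot\mu^*_i - (X_t)_i\cdot\mu^*_i\big)\le 0$, where the inequality is exactly envy-freeness in expectation (Definition~\ref{def:ef}) applied to $X_t$ and $H_t$. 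Hence $(M_\tau)_{\tau\ge0}$ is a supermartingale. Since at most one of $i_t=i$, $i_t=i'$ can hold and $|V_i(j_t)|\le B$, the increments satisfy $|Z_t|\le B$, so Hoeffding's lemma applied conditionally ($\E[e^{\eta Z_t}\mid\mathcal{F}_{t-1}]\le e^{\eta\,\E[Z_t\mid\mathcal{F}_{t-1}]+\eta^2B^2/2}\le e^{\eta^2B^2/2}$ for $\eta>0$) and iterated gives $\E[e^{\eta M_\tau}]\le e^{\eta^2B^2\tau/2}$; Markov's inequality with $\eta=\log(T)/(B^2\sqrt\tau)$ then yields $\Pr\big(M_\tau\ge\sqrt\tau\,\log(T)\big)\le\exp\!\big(-\log^2(T)/(2B^2)\big)$ for every $\tau\in[T]$. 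Union bounding over the $\le n^2$ ordered pairs and the $T$ values of $\tau$, the probability that some $M_\tau^{i,i'}$ exceeds $\sqrt\tau\,\log(T)$ is at most $n^2T\exp(-\log^2(T)/(2B^2))$, which is $o(1/T)$ because $\log^2 T$ dominates $\log T$ (e.g.\ once $\log T\ge 8B^2$ it is at most $n^2T^{-3}$). On the complementary event, for every $\tau$ the realized envy of $A^\tau(\alg)$, namely $\max_{i\neq i'}\big(V_i(A^\tau_{i'})-V_i(A^\tau_i)\big)=\max_{i\neq i'}M^{i,i'}_\tau$, is at most $\sqrt\tau\,\log(T)$, which is the claim.

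The main obstacle is the supermartingale identification, i.e.\ verifying $\E[Z_t\mid\mathcal{F}_{t-1}]\le 0$: this is the step that genuinely uses both the independence of item values from the allocation decision and the definition of envy-freeness in expectation, and getting the conditioning right (what is $\mathcal{F}_{t-1}$-measurable, where the average over $k_t$ enters) is where care is needed. Everything after that is a standard bounded-difference concentration argument. One secondary point: because we want the bound to hold at every $\tau$ with the sharper scaling $\sqrt\tau$ rather than only at $\tau=T$, I take a plain union bound over all $\tau\in[T]$ — the extra factor of $T$ is harmless against the $\exp(-\log^2(T)/(2B^2))$ tail — though alternatively one could apply a maximal inequality to the exponential supermartingale $\exp(\eta M_\tau-\eta^2B^2\tau/2)$ together with a dyadic peeling over the range of $\tau$; the union bound is simpler here.
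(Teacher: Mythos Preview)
Your proof is correct and essentially equivalent in spirit to the paper's, but you organize the concentration step differently. The paper writes the envy $E_\tau$ between a fixed pair $(i,i')$ as an explicit function $f$ of the primitive random variables $\{(k_t,\xi_t,\{v_{ti''}\}_{i''})\}_{t<\tau}$ (with $\xi_t$ the uniform auxiliary variable that realizes the sampling from $(X_t)_{\cdot,k_t}$), asserts a coordinatewise bounded-difference condition $|f-f'|\le B$, and applies McDiarmid's inequality together with $\E[f]\le 0$ from envy-freeness in expectation; the union bound over pairs and over $\tau\in[T]$ is the same as yours. Your Azuma/Hoeffding supermartingale formulation is the more natural packaging here: because $X_t=\alg(H_t)$ depends on the earlier randomness, perturbing a single step $s$ can cascade through $H_{s+1},X_{s+1},\ldots$, which makes the plain McDiarmid bounded-difference hypothesis delicate to verify, whereas the filtration-based argument you give sidesteps this entirely by conditioning on $\mathcal{F}_{t-1}$ and using only $|Z_t|\le B$. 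Both routes hinge on exactly the same two ingredients, namely the conditional inequality $(X_t)_{i'}\cdot\mu^*_i\le (X_t)_i\cdot\mu^*_i$ from Definition~\ref{def:ef} and boundedness of the per-step contribution, and both yield the identical tail $\exp(-\log^2(T)/(2B^2))$ and the same $n^2T\cdot T^{-\Theta(\log T)}=o(1/T)$ union bound.
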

\begin{proof}
    We will bound the realized envy of $A(\alg)$ between any two specific players $i, i'$, which will then imply a bound on the realized envy of $A(\alg)$ by a union bound. The key observation is that each round of Algorithm \ref{algo:OIA} consists of first a random draw from $\mathcal{D}$ to determine the item type $k_t$ and then a draw from $\xi_t \sim \text{Unif}([0,1])$ which determines the player to which the item is allocated based on $X_t = \alg(H_t)$. Formally, the $t^{\mathrm{th}}$ item is allocated to player $i$ if
    \[
        \xi_t \in \left[\sum_{i''=1}^{i-1} (X_t)_{i''k_t} \: \:, \: \: \sum_{i''=1}^{i} (X_t)_{i''k_t} \right].
    \]
    Define $\{v_{ti''}\}_{i''=1}^n$ as the values of the players for the item at time $t$. This is the final source of randomness in round $t$. Therefore, the allocation of any player at time $\tau$ is a function of the random variable sequence $\{(k_t, \xi_t, \{v_{ti''}\}_{i''=1}^n)\}_{t=0}^{\tau-1}$.

    Let $E_\tau$ represent the envy accrued by player $i$ for player $i'$ up until but not including time $\tau$. Then 
    
    \begin{align*}
        E_\tau &= \sum_{t=0}^{\tau-1} v_{ti} \cdot \sum_{k = 1}^m \mathbbm{1}_{k_t = k} \cdot \left(\mathbbm{1}_{\xi_t \in [\sum_{s=1}^{i'-1} (X_t)_{sk_t},\sum_{s=1}^{i'} (X_t)_{sk_t} ]} -\mathbbm{1}_{\xi_t \in [\sum_{s=1}^{i-1} (X_t)_{sk_t},\sum_{s=1}^{i} (X_t)_{sk_t} ]} \right) \\
        &:= f\left(\{(k_t, \xi_t,\{v_{ti''}\}_{i''=1}^n)\}_{t=0}^{\tau-1}\right).
    \end{align*}
    
    Now we will apply McDiarmid's inequality to the function $f\left(\{(k_t, \xi_t,\{v_{ti''}\}_{i''=1}^n)\}_{t=0}^{\tau-1}\right)$. First, we show that the bounded condition is satisfied. If $\{(k_t, \xi_t,\{v_{ti''}\}_{i''=1}^n)\}_{t=0}^{\tau-1}$ and $\{(k'_t, \xi'_t,\{v'_{ti''}\}_{i''=1}^n)\}_{t=0}^{\tau-1}$ differ only at the $s^{\mathrm{th}}$ element for any $s \in [0,\tau-1]$, then
    \[
        \left| f\left(\{(k_t, \xi_t,\{v_{ti''}\}_{i''=1}^n)\}_{t=0}^{\tau-1}\right) - f\left(\{(k'_t, \xi'_t,\{v_{ti''}\}_{i''=1}^n)\}_{t=0}^{\tau-1}\right)\right| \le B.
    \]
    Therefore, we can apply McDiarmid's inequality to get that 
    \begin{align*}
        \Pr\left(\left| f\left(\{(k_t, \xi_t,\{v_{ti''}\}_{i''=1}^n)\}_{t=0}^{\tau-1}\right) - \E\left[f\left(\{(k_t, \xi_t,\{v_{ti''}\}_{i''=1}^n)\}_{t=0}^{\tau-1}\right)\right] \right| \geq \sqrt{\tau}\log(T)\right) &\leq e^{-\log^2(T)/(2B^2)} \\
        &\leq T^{-\log(T)/(2B^2)}.
    \end{align*}
    Since $\alg$ is envy-free in expectation, we know that $\E\left[f\left(\{(k_t, \xi_t,\{v_{ti''}\}_{i''=1}^n)\}_{t=0}^{\tau-1}\right)\right] \leq 0$. Therefore, we must have that
    \[
    \Pr\left(f\left(\{(k_t, \xi_t,\{v_{ti''}\}_{i''=1}^n)\}_{t=0}^{\tau-1}\right) \ge \sqrt{\tau}\log(T)\right) \le T^{-\log(T)/(2B^2)}.
    \]
    Taking a union bound over all pairs $i,i'$ and all $\tau \in [T]$, we have that
    \[
        \Pr(\exists \tau : \text{Realized envy at time $\tau$ is $\ge \sqrt{\tau}\log(T)$ }) \le n^2T \cdot T^{-2\log(T)/b^2} = o(1/T).
    \]
    Therefore, with probability $1-o(1/T)$, the realized envy at evert time $\tau$ is at most $\sqrt{\tau}\log(T)$.
\end{proof}
Note that Theorem \ref{lemma:no_low_envy} does not contradict the results of \citet{BKPP+24}, who achieve envy-freeness of the realized allocation with high probability when the true player values for the items are known at time of allocation (as opposed to our model which only knows the item type). When player values for item $j_t$ are known before assignment, Theorem \ref{lemma:no_low_envy} does not apply.

We also defined envy-freeness in expectation as a constraint which needs to hold at every time step $t$. In some fair division applications, we may only care about ``fairness'' of the total allocation at the end of the process. Therefore, we could instead only require that no player is envious in expectation of any other player at the end of all $T$ rounds. However, Theorem \ref{lem:efe_up_to_equals_efe_at} shows that this is equivalent to maintaining envy-freeness in expectation at all times $t \in [T]$ when maximizing social welfare.

\begin{thm}\label{lem:efe_up_to_equals_efe_at}
     Suppose $\mu^*$ is known. Let $\mathcal{E}$ be the class of all algorithms that are envy-free in expectation and let $\mathcal{F}$ be the class of all algorithms which satisfy $ \E[V_i(A_{i}^T)] \geq \displaystyle \max_{i' \in [n]}\E[V_i(A_{i'}^T)]$ for all $i$. Then 
     \[
        \max_{\alg \in \mathcal{F}} \E[\sw(A(\alg))] = \max_{\alg \in \mathcal{E}} \E[\sw(A(\alg))].
    \]
\end{thm}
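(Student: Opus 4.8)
The plan is to prove the two directions $\max_{\alg \in \mathcal{F}} \E[\sw(A(\alg))] \ge \max_{\alg \in \mathcal{E}} \E[\sw(A(\alg))]$ and $\max_{\alg \in \mathcal{F}} \E[\sw(A(\alg))] \le \max_{\alg \in \mathcal{E}} \E[\sw(A(\alg))]$ separately. The first inequality is immediate: any $\alg \in \mathcal{E}$ satisfies $(X_t)_i \cdot \mu^*_i \ge (X_{t})_{i'} \cdot \mu^*_i$ for every $t$ and every $H_t$, so taking expectations over the history and summing over $t$ gives $\E[V_i(A_i^T)] \ge \E[V_i(A_{i'}^T)]$; hence $\mathcal{E} \subseteq \mathcal{F}$ and the max over the larger class $\mathcal{F}$ is at least the max over $\mathcal{E}$. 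So the real content is the reverse inequality: every social-welfare value achievable by an end-of-horizon-envy-free algorithm is also achievable by an algorithm that is envy-free in expectation at every single step.

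For the reverse direction, I would take an arbitrary $\alg \in \mathcal{F}$ and construct an algorithm $\alg'$ that is \emph{time- and history-independent}, is envy-free in expectation at every $t$, and has the same expected social welfare. The key object is the ``average fractional allocation'' induced by $\alg$: for each item type $k$ and player $i$, define
\[
\bar{X}_{ik} \;:=\; \frac{1}{T}\sum_{t=1}^{T} \Pr\bigl(i_t = i \,\big|\, k_t = k\bigr),
\]
where the probability is over the randomness of $\alg$ and the history (using that $k_t \sim \mathrm{Unif}([m])$ is independent of the history so the conditioning is well-defined). One checks $\sum_i \bar{X}_{ik} = 1$, so $\bar X$ is a valid fractional allocation. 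Let $\alg'$ be the stationary algorithm that uses $\bar X$ at every time step regardless of history. Two things must then be verified. First, $\E[\sw(A(\alg'))] = \E[\sw(A(\alg))]$: by linearity of expectation and the item-arrival model, $\E[\sw(A(\alg))] = \sum_{t}\sum_k \frac{1}{m}\sum_i \Pr(i_t=i\mid k_t=k)\,\mu^*_{ik} = T\sum_k \frac{1}{m}\sum_i \bar X_{ik}\mu^*_{ik} = \E[\sw(A(\alg'))]$. Second, $\alg' \in \mathcal{E}$: since $\alg \in \mathcal{F}$ we have $\E[V_i(A_i^T)] \ge \E[V_i(A_{i'}^T)]$, and unwinding both sides in terms of $\Pr(i_t = \cdot \mid k_t = \cdot)$ and dividing by $T$ gives exactly $\bar X_i \cdot \mu^*_i \ge \bar X_{i'} \cdot \mu^*_i$ for all $i,i'$ — which, because $\alg'$ uses the fixed allocation $\bar X$ at every step, is precisely envy-freeness in expectation at every $t$ and every history. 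Hence $\max_{\alg\in\mathcal F}\E[\sw] \le \E[\sw(A(\alg'))] \le \max_{\alg\in\mathcal E}\E[\sw]$, completing the argument.

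The main obstacle — and the step to be most careful about — is the ``unwinding'' that lets us pass between the end-of-horizon quantities $\E[V_i(A_{i'}^T)]$ and the averaged per-type allocation probabilities $\bar X_{ik}$. This requires (i) that the item type $k_t$ at each step is drawn independently of the past and of $\alg$'s internal randomness, so that $\Pr(i_t = i, k_t = k) = \frac{1}{m}\Pr(i_t = i \mid k_t = k)$ is well-defined and the conditional allocation probabilities average correctly; and (ii) a clean application of linearity of expectation together with the tower property to write $\E[V_i(A_{i'}^T)] = \sum_{t}\sum_k \frac{1}{m}\,\mu^*_{ik}\,\Pr(i_t = i' \mid k_t = k)$, using that the realized value $V_i(j_t)$ has mean $\mu^*_{ik_t}$ and is independent of the allocation decision $i_t$ (the allocator never observes $V_i(j_t)$ for $i \ne i_t$, and for the envy computation the relevant values are the potential values player $i$ would have, which are independent of who actually received the item). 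Once this dictionary is set up, the remark in the excerpt that ``there exists another algorithm $\alg'$ that is time- and history-independent, envy-free in expectation at every time step, and achieves the same social welfare'' falls out directly, and the theorem follows.
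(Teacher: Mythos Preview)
Your proposal is correct and follows essentially the same approach as the paper: you show $\mathcal{E}\subseteq\mathcal{F}$ for one direction, and for the other you average the allocation probabilities of an arbitrary $\alg\in\mathcal{F}$ over time and history to obtain a stationary fractional allocation $\bar X$, then verify that the resulting time- and history-independent algorithm has the same expected social welfare and is envy-free in expectation at every step. The paper writes the averaged allocation as $\frac{1}{T}\sum_{s}\int \Pr_{\alg_s}(i,k,H_s)\,dH_s$, which is exactly your $\bar X_{ik}=\frac{1}{T}\sum_t \Pr(i_t=i\mid k_t=k)$, and the two verifications proceed identically via linearity of expectation and the independence of $k_t$ from the history.
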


\begin{proof}
    By definition, $\mathcal{E} \subseteq \mathcal{F}$, which proves one direction of the desired equality. We will now show that for any $\alg \in \mathcal{F}$, there exists an algorithm $\alg' \in \mathcal{E}$ such that 
    \[
        \E[\sw(A(\alg'))] = \E[\sw(A(\alg))].
    \]
    First, observe that by the definition of $\mathcal{F}$, we have that for all $i,i'$,
    \begin{equation}\label{eq:efe_upto_t}
        \frac{1}{T} \E \left[\sum_{t \in [T]} \sum_{k} \Pr_{\alg}(i,k,H_t)\mu_{ik}\Pr_{\mathcal{D}}(k)\right] \ge \frac{1}{T} \E \left[\sum_{t \in [T]} \sum_{k} \Pr_{\alg}(i',k,H_t)\mu_{ik}\Pr_{\mathcal{D}}(k)\right].
    \end{equation}
    By linearity of expectation, Equation $\eqref{eq:efe_upto_t}$ is equivalent to
    \begin{equation}\label{eq:uptoT_assum}
        \frac{1}{T} \sum_{t \in [T]} \sum_{k} \int_{H_t} \Pr_{\alg}(i,k,H_t)dH_t \cdot  \mu_{ik} \Pr_{\mathcal{D}}(k) \ge \frac{1}{T} \sum_{t \in [T]} \sum_{k} \int_{H_t} \Pr_{\alg}(i',k,H_t)dH_t \cdot \mu_{ik} \Pr_{\mathcal{D}}(k).
    \end{equation}    
    Furthermore, by definition
    \begin{align*}
        \E[\sw(\alg)] &= \E\left[\sum_{i=1}^n \sum_{t \in [T]} \sum_{k} \Pr_{\alg}(i,k, H_t) \mu_{ik} \Pr_{\mathcal{D}}(k) \right]\\
        &=\sum_{i=1}^n \sum_{t \in [T]} \sum_{k} \int_{H_t} \Pr_{\alg}(i,k, H_t)dH_t \cdot \mu_{ik} \cdot \Pr_{\mathcal{D}}(k) \\
        &=\sum_{i=1}^n \sum_{k} \left(\sum_{t \in [T]} \int_{H_t} \Pr_{\alg}(i,k, H_t)dH_t \right) \cdot \mu_{ik} \cdot \Pr_{\mathcal{D}}(k).
    \end{align*}
    We will construct the algorithm $\alg'$ as follows. Suppose $\alg'$ is time-independent and history-independent, such that for all $t, H_t$,
    \[
        \Pr_{\alg'_t}(i,k,H_t) = \frac{1}{T}\sum_{s\in[T]} \int \Pr_{\alg_s}(i,k,H_s)dH_s.
    \]
    The expected social welfare of $\alg'$ is
    \begin{align*}
        \E[\sw(\alg')] &= \E\left[\sum_{i=1}^n \sum_{t \in [T]} \sum_{k} \Pr_{\alg'_t}(i,k, H_t) \mu_{ik} \Pr_{\mathcal{D}}(k) \right]\\ 
        &= \E\left[\sum_{i=1}^n \sum_{t \in [T]} \sum_{k} \frac{1}{T}\sum_{s\in[T]} \int \Pr_{\alg_s}(i,k,H_s)dH_s \cdot \mu_{ik} \cdot \Pr_{\mathcal{D}}(k) \right]\\
        &= \E\left[\sum_{i=1}^n \sum_{k}\left(\sum_{s\in[T]} \int \Pr_{\alg_s}(i,k,H_s)dH_s\right) \cdot \mu_{ik} \cdot  \Pr_{\mathcal{D}}(k) \right]\\
        &=\E[ \sw(\alg)].
    \end{align*}
    Therefore, $\alg$ and $\alg'$ have the same expected social welfare. Finally, we need to show that $\alg' \in \mathcal{E}$. This is equivalent to showing that for all $t$ and $H_t$,
    \[
        \sum_k \Pr_{\alg'_t}(i,k,H_t)\mu_{ik}\Pr_{\mathcal{D}}(k) \ge \sum_k \Pr_{\alg'_t}(i',k,H_t)\mu_{ik}\Pr_{\mathcal{D}}(k).
    \]
    Starting with the LHS and plugging in the definition of $\alg'$, we have that 
    \begin{align*}
         \sum_k \Pr_{\alg'_t}(i,k,H_t)\mu_{ik}\Pr_{\mathcal{D}}(k) &=  \sum_k \frac{1}{T}\sum_{s\in[T]} \int \Pr_{\alg_s}(i,k,H_s)dH_s  \mu_{ik} \Pr_{\mathcal{D}}(k)  \\
         &= \frac{1}{T} \sum_k \sum_{s\in[T]} \int \Pr_{\alg_s}(i,k,H_s)dH_s  \mu_{ik} \Pr_{\mathcal{D}}(k) \\
         &\ge \frac{1}{T} \sum_k \sum_{s\in[T]} \int \Pr_{\alg_s}(i',k,H_s)dH_s  \mu_{ik} \Pr_{\mathcal{D}}(k)  && \text{[Equation \eqref{eq:uptoT_assum}]}\\
         &= \sum_{k} \frac{1}{T} \sum_{s\in[T]} \int \Pr_{\alg_s}(i',k,H_s)dH_s  \mu_{ik} \Pr_{\mathcal{D}}(k) \\
         &=  \sum_{k} \Pr_{\alg'_t}(i',k,H_t)  \mu_{ik} \Pr_{\mathcal{D}}(k),
    \end{align*}
    as desired. Therefore, we have shown that $\alg' \in \mathcal{E}$.
\end{proof}

\begin{thm}\label{lemma:poly_efeupto}
    The algorithm which maximizes expected social welfare subject to $\efe$ up to time $T$ is time-independent, history-independent, and can be calculated in polynomial time.
\end{thm}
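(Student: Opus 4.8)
The plan is to obtain this as essentially a corollary of Theorem~\ref{lem:efe_up_to_equals_efe_at}. Recall that in the notation of that theorem, $\mathcal{F}$ is the class of algorithms satisfying $\E[V_i(A_i^T)]\ge\max_{i'}\E[V_i(A_{i'}^T)]$ for all $i$ — i.e., $\efe$ up to time $T$ — and the content of Theorem~\ref{lem:efe_up_to_equals_efe_at} is that $\max_{\alg\in\mathcal{F}}\E[\sw(A(\alg))]=\max_{\alg\in\mathcal{E}}\E[\sw(A(\alg))]$. Crucially, the \emph{construction} in that proof takes an arbitrary $\alg\in\mathcal{F}$ and produces an algorithm $\alg'$ that is simultaneously time-independent, history-independent, a member of $\mathcal{E}\subseteq\mathcal{F}$, and has the same expected social welfare. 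The first step of the proof is therefore just to record that the optimizer can be taken to be a single fixed fractional allocation: there is a matrix $X\in\mathbb{R}^{n\times m}$ with $X_t=X$ for all $t$ attaining $\max_{\alg\in\mathcal{F}}\E[\sw(A(\alg))]$.

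The second step is to make the objective and constraints explicit for a fixed allocation $X$. Since $\mathcal{D}=\mathrm{Unif}([m])$, for any $i$ we get $\E[V_i(A_i^T)]=\sum_{t\in[T]}\sum_k X_{ik}\mu^*_{ik}\Pr_{\mathcal{D}}(k)=\tfrac{T}{m}\,(X)_i\cdot\mu^*_i$, and likewise $\E[V_i(A_{i'}^T)]=\tfrac{T}{m}\,(X)_{i'}\cdot\mu^*_i$, so membership of $X$ in $\mathcal{F}$ is exactly the condition $(X)_i\cdot\mu^*_i\ge(X)_{i'}\cdot\mu^*_i$ for all $i,i'$ — that is, $X$ satisfies the constraint family $\efeconst(\mu^*)$ of Remark~\ref{remark_1}. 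Moreover $\E[\sw(A(\alg))]=\tfrac{T}{m}\frob{X}{\mu^*}$. Hence, combining with the first step, the optimal expected social welfare subject to $\efe$ up to time $T$ equals $\tfrac{T}{m}$ times the optimum of LP~\eqref{lp:ymu} taken at $\mu=\mu^*$, and a maximizing algorithm is the one that plays the fixed fractional allocation $X=Y^{\mu^*}$ at every time step — which is by definition time-independent and history-independent.

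The final step is the polynomial-time claim. Unlike LP~\eqref{eq:lp_etc_finallp}, the LP here is LP~\eqref{lp:ymu} with the \emph{known} matrix $\mu^*$ plugged in, so it has only $nm$ variables and $n^2+m$ linear constraints (and all coefficients lie in $\{\pm\mu^*_{ik}\}\cup\{1\}$, hence have bit-complexity polynomial in that of $\mu^*$). Any polynomial-time linear programming method — interior point, or the ellipsoid method — then computes $Y^{\mu^*}$ in time polynomial in $n$, $m$, and the input size, and playing $Y^{\mu^*}$ at every round is the desired algorithm.

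I do not anticipate a genuine obstacle, since the structural work is already done in Theorem~\ref{lem:efe_up_to_equals_efe_at}; the only point needing a little care is checking that the maximum over $\mathcal{F}$ is actually \emph{attained} by a fixed allocation rather than merely approached. This is fine: from any $\alg\in\mathcal{F}$ the construction yields a fixed allocation of equal social welfare, so the supremum over fixed allocations equals the supremum over $\mathcal{F}$, and the former is attained because it is the maximum of the continuous linear function $\frob{X}{\mu^*}$ over the compact feasible polytope of LP~\eqref{lp:ymu}.
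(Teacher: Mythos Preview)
Your proposal is correct and follows essentially the same approach as the paper's proof: invoke Theorem~\ref{lem:efe_up_to_equals_efe_at} to reduce to a fixed fractional allocation, then observe that the optimum is exactly LP~\eqref{lp:ymu} with the envy-freeness constraints, which is a polynomial-size LP. Your version is more explicit about the attainment of the maximum and the constraint count, but the structure is identical.
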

\begin{proof}
By Theorem \ref{lem:efe_up_to_equals_efe_at}, there exists an optimal algorithm that satisfies envy-freeness in expectation and that is time-independent and history-independent. To find the best such fractional allocation, all we must do is solve LP  \eqref{lp:ymu} with the envy-freeness in expectation constraints.
\end{proof}

\subsection{Proportionality}\label{app:prop_realized}

Theorems \ref{lemma:no_low_envy}--\ref{lemma:poly_efeupto} also have equivalent forms for proportionality. We define the realized proportionality gap as the equivalent of envy for the proportionality constraints. This implies that a proportional allocation has non-positive proportionality gap. 

\begin{definition}\label{def:realized_prop}
    The realized proportionality gap of allocation $A$ at time $\tau$ is $ \max_{i}  \frac{1}{n}\sum_{i'} V_i(A_{i'}^\tau) -  V_i(A_{i}^\tau)$.
\end{definition}
As in Theorems \ref{lemma:no_low_envy} and \ref{lemma:log2T_envy} , the following two results give that algorithms which satisfy proportionality in expectation are within a $\log(T)$ factor of optimal for the realized proportionality gap.

\begin{thm}\label{lemma:no_low_envy_prop}
    Suppose $\mu^*$ is known. For any algorithm $\alg$ and for any $\tau \in [T]$, with probability at least $1/16$ the allocation $A^\tau(\alg)$ has realized proportionality gap of more than $\sqrt{\tau}$.  
\end{thm}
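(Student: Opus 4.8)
\textbf{Proof plan for Theorem~\ref{lemma:no_low_envy_prop}.} The plan is to mirror the structure of the proof of Theorem~\ref{lemma:no_low_envy}, adapting the coupling argument to the proportionality gap. First I would reduce to the simplest instance: $n = 2$ players and a single item type, with all values drawn i.i.d.\ from $\mathcal{N}(\mu, 1)$. For two players, the realized proportionality gap of player $i$ at time $\tau$ is $\frac{1}{2}\bigl(V_i(A_1^\tau) + V_i(A_2^\tau)\bigr) - V_i(A_i^\tau) = \frac{1}{2}\bigl(V_i(A_{i'}^\tau) - V_i(A_i^\tau)\bigr)$, i.e.\ exactly half the realized envy of player $i$ for the other player $i'$. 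So in the two-player, one-type case, the proportionality gap is \emph{half} the realized envy, and a lower bound on envy immediately gives a lower bound on the proportionality gap.

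The technical heart is the same coupling as in Theorem~\ref{lemma:no_low_envy}: since the algorithm does not observe any player's value before allocating, the value of the assigned player $v_t^a$ and the value of the unassigned player $v_t^u$ at each time $t$ are i.i.d.\ $\mathcal{N}(\mu,1)$, independent of the assignment decision $i_t$. By the Central Limit Theorem (or an exact Gaussian computation), $\sum_{t=0}^\tau v_t^a$ and $\sum_{t=0}^\tau v_t^u$ are each $\mathcal{N}(\tau\mu, \tau)$, and with probability at least $1/16$ we simultaneously have $\sum_{t=0}^\tau v_t^a \le \tau\mu - \sqrt{\tau}$ and $\sum_{t=0}^\tau v_t^u \ge \tau\mu + \sqrt{\tau}$, hence $\sum_{t=0}^\tau v_t^u - \sum_{t=0}^\tau v_t^a \ge 2\sqrt{\tau}$. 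On this event, for any assignment sequence $\{i_t\}$, one of the two players has realized envy at least $\sqrt{\tau}$ (the one who is unassigned more often in terms of cumulative unassigned value; concretely, the envy of player $1$ plus the envy of player $2$ equals $\sum_t v_t^u - \sum_t v_t^a \ge 2\sqrt{\tau}$ when there is exactly one item per step, since each item contributes $v_t^u$ to one player's envy and $-v_t^a$ to... — more carefully, realized envy of player $1$ is $\sum_{t: i_t=1} v_t^a - \sum_{t: i_t = 2} v_t^u$ and that of player $2$ is $\sum_{t: i_t=2} v_t^a - \sum_{t: i_t = 1} v_t^u$; these do not simply add, so instead I would argue as in the original proof that $\sum_t v_t^a + 2\sqrt\tau \le \sum_t v_t^u$ forces at least one player's envy to be $\ge \sqrt\tau$ regardless of $\{i_t\}$). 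Dividing by $2$, that player's realized proportionality gap is at least $\sqrt{\tau}/2$.

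The one gap to close is the constant: the direct reduction gives a gap of $\sqrt{\tau}/2$, not $\sqrt{\tau}$. I would fix this by tightening the CLT step — choosing the deviation thresholds to be $2\sqrt{\tau}$ instead of $\sqrt{\tau}$ (which still holds with constant probability, at the cost of replacing $1/16$ by a smaller absolute constant), so that $\sum_t v_t^u - \sum_t v_t^a \ge 4\sqrt\tau$ on the good event, forcing envy $\ge 2\sqrt\tau$ and hence proportionality gap $\ge \sqrt\tau$; to recover exactly the constant $1/16$ one instead works with a slightly larger $\tau_0$ or absorbs the factor into the $\sqrt{}$ via rescaling, a routine adjustment. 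The main obstacle, such as it is, is purely bookkeeping: making sure the "one of the two players must have large gap" step is stated for arbitrary (possibly randomized, history-dependent) $\{i_t\}$ and that the proportionality gap is the max over players, so that exhibiting a single bad player suffices. No new ideas beyond Theorem~\ref{lemma:no_low_envy} are needed.
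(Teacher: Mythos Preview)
Your approach is essentially identical to the paper's: reduce to $n=2$, one item type, Gaussian values, observe that the proportionality gap is exactly half the realized envy, and invoke the coupling argument from Theorem~\ref{lemma:no_low_envy}. The paper's own proof is a two-line reduction stating that the proportionality-gap expressions ``only differ from those of realized envy by a scalar factor, and therefore the rest of the proof follows exactly as in Theorem~\ref{lemma:no_low_envy}.''

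You are in fact more careful than the paper on one point: you correctly flag that the factor of $1/2$ means the direct reduction yields proportionality gap $\ge \sqrt{\tau}/2$ with probability $1/16$, not $\ge \sqrt{\tau}$ as stated. The paper's proof glosses over this and does not address the discrepancy. Your proposed patches (doubling the deviation threshold at the cost of the probability constant, or absorbing the factor into $\sqrt{\tau}$) are the natural fixes; none of them simultaneously recovers both the exact bound $\sqrt{\tau}$ and the exact constant $1/16$, but since the theorem is a qualitative impossibility statement this is immaterial. So your proposal is correct and matches the paper, and your attention to the constant is warranted.
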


\begin{proof}
    As in the proof of Theorem \ref{lemma:no_low_envy}, 
    assume there are two players and only one item type, and assume that all values are drawn from $\mathcal{N}(\mu, 1)$. Then the realized proportionality gap of the two players can be written as:
    \begin{equation}
      \text{Realized proportionality gap of Player $1$ at time $\tau$} = \frac{1}{2} \left( \sum_{t=0}^\tau \mathbbm{1}_{i_t = 1} \cdot V_1(j_t)\right)  - \frac{1}{2} \left( \sum_{t=0}^\tau \mathbbm{1}_{i_t = 2} \cdot V_1(j_t)\right)
    \end{equation}
    \begin{equation}
       \text{Realized proportionality gap of Player $2$ at time $\tau$} =  \frac{1}{2} \left( \sum_{t=0}^\tau \mathbbm{1}_{i_t = 2} \cdot V_2(j_t)\right)  - \frac{1}{2} \left( \sum_{t=0}^\tau \mathbbm{1}_{i_t = 1} \cdot V_2(j_t)\right)
    \end{equation}
    These equations only differ from those of realized envy by a scalar factor, and therefore the rest of the proof follows exactly as in Theorem \ref{lemma:no_low_envy}.
\end{proof}

\begin{thm}\label{lemma:log2T_envy_prop}
     Suppose $\mu^*$ is known. Also assume that all of the value distributions are bounded by a constant $B$. If $\alg$ is deterministic (i.e. $\alg(H_t)$ is a deterministic function of $H_t$) and satisfies proportionality in expectation, then with probability $1-o(1/T)$, the realized proportionality gap of $A^\tau(\alg)$ at every time $\tau \in [T]$ is at most $\sqrt{\tau}\log(T)$.
\end{thm}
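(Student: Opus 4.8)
The plan is to mirror the proof of Theorem~\ref{lemma:log2T_envy} essentially verbatim, replacing the ``realized envy'' quantity by the ``realized proportionality gap'' and invoking proportionality in expectation where envy-freeness in expectation was used. First I would fix two players $i,i'$ and bound the realized proportionality gap contribution of $i$ against $i'$; a union bound over all pairs $i,i'$ and all times $\tau\in[T]$ will then yield the claim. As in the earlier proof, I would expose the three sources of randomness in round $t$ of Algorithm~\ref{algo:OIA}: the item type $k_t\sim\mathcal D$, the uniform draw $\xi_t\sim\mathrm{Unif}([0,1])$ used to convert the fractional allocation $X_t=\alg(H_t)$ into a concrete assignment, and the player values $\{v_{ti''}\}_{i''=1}^n$ for the item at time $t$. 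Since $\alg$ is deterministic, the entire partial allocation, and hence the quantity
\[
G_\tau \;:=\; \sum_{t=0}^{\tau-1} v_{ti}\sum_{k=1}^m \mathbbm 1_{k_t=k}\Bigl(\tfrac1n\sum_{i''=1}^n\mathbbm 1_{\xi_t\in[\sum_{s=1}^{i''-1}(X_t)_{sk_t},\,\sum_{s=1}^{i''}(X_t)_{sk_t}]} - \mathbbm 1_{\xi_t\in[\sum_{s=1}^{i-1}(X_t)_{sk_t},\,\sum_{s=1}^{i}(X_t)_{sk_t}]}\Bigr),
\]
is a deterministic function $f$ of the independent block sequence $\{(k_t,\xi_t,\{v_{ti''}\}_{i''=1}^n)\}_{t=0}^{\tau-1}$.

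Next I would verify the bounded-differences condition for McDiarmid's inequality: changing a single block $(k_s,\xi_s,\{v_{si''}\})$ changes $G_\tau$ by at most $B$, since $v_{si}\le B$ and the bracketed coefficient lies in $[-1,1]$ (it is a difference of indicators, where the first term is itself an average of indicators that cannot overlap the second). McDiarmid then gives
\[
\Pr\bigl(\,\bigl|f - \E[f]\bigr| \ge \sqrt{\tau}\log(T)\,\bigr) \le e^{-\log^2(T)/(2B^2)} \le T^{-\log(T)/(2B^2)}.
\]
Because $\alg$ satisfies proportionality in expectation, at every step $t$ and history $H_t$ we have $(X_t)_i\cdot\mu^*_i \ge \tfrac1n\sum_{i''}(X_t)_{i''}\cdot\mu^*_i$, which upon taking expectations (with $k_t\sim\mathcal D=\mathrm{Unif}([m])$ and $v_{ti}$ independent of the assignment at time $t$) gives $\E[f]\le 0$; hence $\Pr(f\ge\sqrt{\tau}\log(T))\le T^{-\log(T)/(2B^2)}$.

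Finally I would take a union bound over all $n^2$ ordered pairs and all $\tau\in[T]$, getting failure probability at most $n^2 T\cdot T^{-\log(T)/(2B^2)} = o(1/T)$, so with probability $1-o(1/T)$ the realized proportionality gap at every time $\tau$ is at most $\sqrt{\tau}\log(T)$. The main (and only mild) obstacle is the bookkeeping in writing $G_\tau$ correctly for the proportionality gap --- in particular confirming that the coefficient multiplying $v_{ti}$ still lies in $[-1,1]$ so the bounded-differences constant remains $B$, after which everything reduces to the argument already carried out in Theorem~\ref{lemma:log2T_envy}.
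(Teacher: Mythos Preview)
Your proposal is correct and follows exactly the paper's approach, which is simply to rerun the McDiarmid argument of Theorem~\ref{lemma:log2T_envy} with the envy quantity replaced by the proportionality gap. One cosmetic point: since the proportionality gap for player $i$ already aggregates over all other players (as your formula for $G_\tau$ correctly reflects), there is no second index $i'$ to fix and the union bound need only run over the $n$ players and $T$ times rather than $n^2$ pairs---but the extra factor of $n$ is harmless for the $o(1/T)$ conclusion.
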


\begin{proof}
    In this proof, we can let $E_t$ be the accrued ``proportionality gap'' of any player $i$. Then as in Theorem \ref{lemma:log2T_envy}, an application of McDiarmid's inequality allows us to bound the realized proportionality gap with high probability for all $\tau$.
\end{proof}

The following two theorems are analogs of Theorem \ref{lem:efe_up_to_equals_efe_at} and Theorem \ref{lemma:poly_efeupto} for envy-freeness in expectation. Together, these theorems imply that maximizing social welfare subject to proportionality in expectation at every time step is equivalent to maximizing social welfare subject to proportionality in expectation only at the end of round $T$.

\begin{thm}\label{lem:efe_up_to_equals_efe_at_prop}
     Suppose $\mu^*$ is known. Let $\mathcal{E}$ be the class of all algorithms that are proportional in expectation and let $\mathcal{F}$ be the class of all algorithms which satisfy $ \E[V_i(A_{i}^T)] \geq \displaystyle \frac{1}{n} \sum_{i' \in [n]}\E[V_i(A_{i'}^T)]$ for all $i$. Then 
     \[
        \max_{\alg \in \mathcal{F}} \E[\sw(A(\alg))] = \max_{\alg \in \mathcal{E}} \E[\sw(A(\alg))].
    \]
\end{thm}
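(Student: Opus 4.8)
The plan is to mirror the proof of Theorem \ref{lem:efe_up_to_equals_efe_at} almost verbatim, since proportionality in expectation differs from envy-freeness in expectation only in that the comparison is against the average $\frac{1}{n}\sum_{i'} (X_t)_{i'} \cdot \mu^*_i$ rather than the maximum $\max_{i'} (X_t)_{i'} \cdot \mu^*_i$, and both are linear functionals of the allocation probabilities. First I would observe that $\mathcal{E} \subseteq \mathcal{F}$ trivially — an algorithm that is proportional in expectation at every time step, when averaged over all $T$ steps, certainly satisfies the weaker end-of-horizon condition — which gives $\max_{\alg \in \mathcal{E}} \E[\sw(A(\alg))] \le \max_{\alg \in \mathcal{F}} \E[\sw(A(\alg))]$.

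For the reverse inequality, I would take an arbitrary $\alg \in \mathcal{F}$ and define the time- and history-independent algorithm $\alg'$ exactly as in the proof of Theorem \ref{lem:efe_up_to_equals_efe_at}: $\Pr_{\alg'_t}(i,k,H_t) = \frac{1}{T}\sum_{s\in[T]} \int \Pr_{\alg_s}(i,k,H_s)\,dH_s$. The social welfare computation is unchanged and shows $\E[\sw(\alg')] = \E[\sw(\alg)]$. The only step that needs to be adapted is the verification that $\alg' \in \mathcal{E}$. Here the $\mathcal{F}$-membership of $\alg$ unrolls (via linearity of expectation, as in Equation \eqref{eq:uptoT_assum}) to
\[
    \frac{1}{T}\sum_{t\in[T]}\sum_k \int_{H_t}\Pr_{\alg}(i,k,H_t)\,dH_t\cdot\mu_{ik}\Pr_{\mathcal{D}}(k) \ge \frac{1}{n}\sum_{i'\in[n]}\frac{1}{T}\sum_{t\in[T]}\sum_k \int_{H_t}\Pr_{\alg}(i',k,H_t)\,dH_t\cdot\mu_{ik}\Pr_{\mathcal{D}}(k),
\]
and substituting the definition of $\alg'$ into the proportionality-in-expectation condition $\sum_k \Pr_{\alg'_t}(i,k,H_t)\mu_{ik}\Pr_{\mathcal{D}}(k) \ge \frac{1}{n}\sum_{i'}\sum_k \Pr_{\alg'_t}(i',k,H_t)\mu_{ik}\Pr_{\mathcal{D}}(k)$ reduces it to exactly this inequality. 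Because $\alg'$ is constant in $t$ and $H_t$, the condition then holds at every time step.

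There is essentially no substantive obstacle: the averaging argument that turns an end-of-horizon guarantee into a per-step guarantee for a history-independent algorithm is agnostic to whether the right-hand side is a $\max$ or an average, since both are preserved under the linear averaging that defines $\alg'$. The one point that deserves a sentence of care is that the average $\frac{1}{n}\sum_{i'}(\cdot)$ commutes with the time-averaging in the definition of $\alg'$ — but this is immediate from linearity. I would therefore state that the proof is identical to that of Theorem \ref{lem:efe_up_to_equals_efe_at} after replacing ``$\max_{i'\in[n]}$'' by ``$\frac{1}{n}\sum_{i'\in[n]}$'' throughout, and simply highlight the modified chain of inequalities establishing $\alg' \in \mathcal{E}$.
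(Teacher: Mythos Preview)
Your proposal is correct and takes essentially the same approach as the paper: define the time- and history-independent $\alg'$ by averaging, invoke the social-welfare equality from Theorem~\ref{lem:efe_up_to_equals_efe_at}, and verify $\alg' \in \mathcal{E}$ by the same chain of inequalities with $\max_{i'}$ replaced by $\frac{1}{n}\sum_{i'}$. The paper's own proof is even terser than yours, simply presenting the modified chain and citing Theorem~\ref{lem:efe_up_to_equals_efe_at} for the rest.
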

\begin{proof}
    Suppose $\alg \in \mathcal{F}$. We will define $\alg'$ as in Theorem \ref{lem:efe_up_to_equals_efe_at} to be
    \[
            \Pr_{\alg'_t}(i,k,H_t) = \frac{1}{T}\sum_{s\in[T]} \int \Pr_{\alg_s}(i,k,H_s)dH_s.
    \]
    By this construction, $\alg' \in \mathcal{E}$ because
    \begin{align*}
         &\sum_k \Pr_{\alg'_t}(i,k,H_t)\mu_{ik}\Pr_{\mathcal{D}}(k) \\
         &=  \sum_k \frac{1}{T}\sum_{s\in[T]} \int \Pr_{\alg_s}(i,k,H_s)dH_s  \mu_{ik} \Pr_{\mathcal{D}}(k)  \\
         &= \frac{1}{T} \sum_k \sum_{s\in[T]} \int \Pr_{\alg_s}(i,k,H_s)dH_s  \mu_{ik} \Pr_{\mathcal{D}}(k) \\
         &\ge \frac{1}{Tn} \sum_{i'=1}^n \sum_k \sum_{s\in[T]} \int \Pr_{\alg_s}(i',k,H_s)dH_s  \mu_{ik} \Pr_{\mathcal{D}}(k) && \text{[$\alg \in \mathcal{F}$]}\\
         &=\sum_{i'=1}^n \sum_{k} \frac{1}{Tn} \sum_{s\in[T]} \int \Pr_{\alg_s}(i',k,H_s)dH_s  \mu_{ik} \Pr_{\mathcal{D}}(k) \\
         &= \frac{1}{n}\sum_{i'=1}^n \sum_{k} \Pr_{\alg'_t}(i',k,H_t)  \mu_{ik} \Pr_{\mathcal{D}}(k).
    \end{align*}
    Because $\mathcal{E} \subseteq \mathcal{F}$ and because we showed in Theorem \ref{lem:efe_up_to_equals_efe_at} that $\alg'$ and $\alg$ have the same expected social welfare, the desired result follows.
\end{proof}

\begin{thm}\label{lemma:poly_efeupto_prop}
    The algorithm which maximizes expected social welfare subject to $\pexp$ up to time $T$ is time-independent, history-independent, and can be calculated in polynomial time.
\end{thm}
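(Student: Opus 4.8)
The plan is to mirror exactly the argument used for Theorem \ref{lemma:poly_efeupto}, substituting the proportionality analog of Theorem \ref{lem:efe_up_to_equals_efe_at}. First I would invoke Theorem \ref{lem:efe_up_to_equals_efe_at_prop}: it shows that $\max_{\alg \in \mathcal{F}} \E[\sw(A(\alg))] = \max_{\alg \in \mathcal{E}} \E[\sw(A(\alg))]$, where $\mathcal{E}$ is the class of algorithms proportional in expectation at every time step and $\mathcal{F}$ is the class proportional in expectation only at time $T$. More importantly, the construction inside that proof takes an arbitrary $\alg \in \mathcal{F}$ and produces $\alg' \in \mathcal{E}$ that is time- and history-independent and achieves the same expected social welfare. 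Applying this to an optimal $\alg \in \mathcal{F}$ yields a time-independent, history-independent algorithm in $\mathcal{E}$ that attains the optimal value, which establishes the first two claims.

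Next I would argue the computational claim. Since the optimum is attained by a time- and history-independent algorithm, it uses a single fixed fractional allocation $X$ at every step, and maximizing expected social welfare over such algorithms subject to proportionality in expectation is precisely the program LP \eqref{lp:ymu} instantiated with the proportionality-in-expectation constraint family $\pexpconst(\mu^*)$ from Remark \ref{remark_2}. That LP has $n$ fairness constraints together with $m$ normalization equalities over $nm$ variables, so it is of polynomial size and can be solved in polynomial time by any standard LP algorithm.

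The only subtlety worth spelling out — and I would state it explicitly rather than leave it implicit — is that the reduction to LP \eqref{lp:ymu} requires knowing that restricting to time/history-independent algorithms is without loss of optimality, which is exactly what Theorem \ref{lem:efe_up_to_equals_efe_at_prop} delivers; without it one would a priori have to optimize over a much richer (history-dependent) policy class. I do not anticipate any real obstacle here: all the heavy lifting was done in Theorem \ref{lem:efe_up_to_equals_efe_at_prop}, and the present statement is essentially a corollary plus the observation that proportionality in expectation is encoded by only $n$ linear constraints (Remark \ref{remark_2}), keeping the LP polynomial-sized.
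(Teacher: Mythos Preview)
Your proposal is correct and follows essentially the same approach as the paper: invoke Theorem~\ref{lem:efe_up_to_equals_efe_at_prop} to obtain a time- and history-independent optimal algorithm, then observe that finding the best fixed fractional allocation amounts to solving LP~\eqref{lp:ymu} with the proportionality-in-expectation constraints. The paper's proof is terser, but your additional remarks about Remark~\ref{remark_2} and the polynomial size of the LP are consistent elaborations of the same argument.
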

\begin{proof}
By Theorem \ref{lem:efe_up_to_equals_efe_at_prop}, there exists an optimal algorithm that satisfies proportionality in expectation that is time-independent and history-independent. To find the best such fractional allocation, all we must do is solve LP  \eqref{lp:ymu} with the proportionality in expectation constraints.
\end{proof}

\section{Additional Model Notes}\label{app:additional_model_notes}

\subsection{Choice of \texorpdfstring{$\mathcal{D}$}{D}}\label{sec:arbitrary_D}

In the body of the paper, we focus on the case when $\mathcal{D}$ is the uniform distribution over item types. However, our results generalize to any distribution $\mathcal{D}$ which does not depend on $T$. In this case, the social welfare of a fractional allocation $X$ becomes $\sum_{i,k} X_{ik}\Pr_{\mathcal{D}}(k)\mu_{ik}$. For a matrix $\nu \in \mathbb{R}^{n \times m}$, define $f_\mathcal{D}(\nu) = \nu' \in \mathbb{R}^{n \times m}$, where $\nu'_{ik} =  n \cdot \Pr_{\mathcal{D}}(k) \cdot \mu_{ik}$. For mean values $\mu^*$, define $\mu' = f_\mathcal{D}(\mu^*)$. Then social welfare of a fractional allocation $X$ with means $\mu^*$ and distribution $\mathcal{D}$ is then simply $\frac{1}{n}\frob{X}{\mu'}$ as in the uniform $\mathcal{D}$ case. Similarly, the envy-freeness in expectation or proportionality in expectation constraints on $X$ with means $\mu^*$ and item distribution $\mathcal{D}$ can be represented as $\frob{B_\ell(\mu')}{X} \ge c_\ell$, which is an equivalent form to the constraints when $\mathcal{D}$ is uniform. Therefore, for arbitrary $\mathcal{D}$ we can use Algorithm \ref{algo:ETC}  with only two slight modifications. The first is we must transform the $\hat{\mu}, \mu$, and other components of the linear programs using the function $f_\mathcal{D}$. The second modification is that we potentially need more exploration steps (larger $T$) in the warm-up period to guarantee the same level of estimation of $\mu^*$, depending on the value of $\min_{i,k} \Pr_{\mathcal{D}}(k)$. However, since we require that $\mathcal{D}$ does not depend on $T$, this will not change the overall regret of the algorithm.

\subsection{Lower Bound on Means}\label{app:bounded_from_zero}

In this section, we show that if the means of player values can be arbitrarily close to zero, then it can be impossible to achieve a regret of $o(T)$.

\begin{thm}\label{thm:bounded_from_0}
     For $\epsilon > 0$, there does not exist an algorithm $\alg$ such that for any possible $\mu^* \in [0,\epsilon]^{n \times m}$, for every $t$ the fractional allocation $X_t$ chosen by $\alg$ satisfies envy-freeness in expectation with probability greater than $1/2$ and the regret of $\alg$ is $o(T)$. The same result also holds for proportionality in expectation.
\end{thm}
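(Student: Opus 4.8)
The plan is to establish the impossibility by constructing a pair of instances that are statistically indistinguishable to the algorithm during an initial segment but that have conflicting optimal allocations, so that any algorithm either violates fairness on one instance or incurs linear regret on the other. First I would reduce to the simplest nontrivial case: $n = 2$ players and $m = 1$ item type, with all means forced to lie in $[0,\epsilon]$. The key structural observation is that envy-freeness in expectation with a single item type forces $(X_t)_1 \mu^*_1 \ge (X_t)_2 \mu^*_1$ and $(X_t)_2 \mu^*_2 \ge (X_t)_1 \mu^*_2$; since $\mu^*_i > 0$, both inequalities together force $(X_t)_1 = (X_t)_2 = 1/2$ at \emph{every} step, regardless of $\mu^*$. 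So an EFE algorithm for $n=2, m=1$ must play UAR always, and its expected per-step social welfare is $\tfrac12(\mu^*_1 + \mu^*_2)$. Meanwhile, the optimal EFE-constrained benchmark $Y^{\mu^*}$ is also UAR with the same value — so in this exact case regret is actually zero, which means I need an instance where the \emph{unconstrained-looking} optimum differs, i.e. I must compare against $Y^{\mu^*}$ which itself is UAR. Wait — the subtlety is that regret in Definition~\ref{def:regret} is measured against $Y^{\mu^*}$, the best \emph{fair} allocation, so the real obstruction must come from the requirement that $\alg$ satisfy the constraints \emph{for the true} $\mu^*$ while not knowing it.

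The correct approach, then, is to exploit that the fairness constraints depend on the unknown $\mu^*$: an allocation that is fair for one candidate mean vector may be badly unfair for another, yet the algorithm cannot tell them apart from observed values. Concretely, I would take $n = 2$, $m = 2$, and consider two instances $\mu^A$ and $\mu^B$ that agree on player 1's means and on the observable statistics the algorithm is likely to see, but differ in player 2's means in a way that flips which fractional allocation is envy-free. Because means are constrained to $[0,\epsilon]$, the ``signal'' distinguishing $\mu^A$ from $\mu^B$ is of magnitude $O(\epsilon)$ while the value distributions have unit variance (sub-Gaussian), so distinguishing them to the accuracy needed to pick the right fair allocation requires $\Omega(1/\epsilon^2)$ samples of player 2 — but samples of player 2 only accrue when player 2 is actually allocated items, and the optimal policy on one of the instances may allocate to player 2 only a vanishing fraction of the time. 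Formally, I would use a change-of-measure / Pinsker argument: if the algorithm satisfies EFE with probability $> 1/2$ on instance $\mu^A$, then with probability bounded below it plays allocations far from $Y^{\mu^B}$'s fair region on instance $\mu^B$, and since any allocation violating the $\mu^B$-constraints that still has high welfare is impossible, the algorithm must play low-welfare (near-UAR) allocations on $\mu^B$, costing $\Omega(T)$ regret.

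The main obstacle I anticipate is making the two-instance construction genuinely force a tradeoff: I must ensure that (a) on instance $\mu^B$, the fair benchmark $Y^{\mu^B}$ has strictly higher welfare than any allocation that is also fair for $\mu^A$ (so hedging is costly), and (b) the KL divergence between the observation laws under the two instances stays $O(T \epsilon^2)$ along the algorithm's trajectory, so that for $\epsilon$ small relative to $1/\sqrt T$ — or rather, for the statement's fixed $\epsilon$, by rescaling the construction's ``gap'' down to $O(1/\sqrt T)$ inside $[0,\epsilon]$ — the algorithm genuinely cannot separate them. The cleanest route is probably to let the true gap be $\delta = \Theta(T^{-1/2})$ (which lies in $[0,\epsilon]$ once $T$ is large), show that telling the instances apart with constant confidence needs $\omega(T)$ player-2 pulls, and show that the EFE-satisfying algorithm on $\mu^A$ must therefore, on $\mu^B$, either violate EFE or sacrifice $\Omega(\delta) = \Omega(T^{-1/2})$ welfare per step — wait, that only gives $\Omega(\sqrt T)$, not $\Omega(T)$. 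To get linear regret I instead keep $\epsilon$ \emph{fixed} and small, make the welfare gap between the $\mu^B$-fair optimum and the ``safe for both'' allocation a constant $c(\epsilon) > 0$, and argue the algorithm cannot drive the indistinguishability probability to zero as $T \to \infty$ because the relevant player is pulled $o(T)$ times under the policy that is optimal-and-fair on $\mu^B$; the residual constant probability of being on the ``wrong'' instance then yields $\Omega(T)$ expected regret. Finally, the proportionality case follows by the same construction since, as noted after Definition~\ref{def:proportionality} and in the proof of Theorem~\ref{lemma:no_low_envy_prop}, the proportionality-gap quantities differ from the envy quantities only by a scalar factor, so the identical two-instance argument applies verbatim with $\efeconst$ replaced by $\pexpconst$.
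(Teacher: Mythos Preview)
Your proposal has the right high-level shape (two statistically indistinguishable instances on which any algorithm must either violate fairness or play a suboptimal-but-safe allocation), but it never lands on a construction that actually delivers $\Omega(T)$ regret, and the final paragraph's fix is circular.

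The concrete gap is this. You place the distinguishing information in player~2's means and appeal to a KL/Pinsker argument with unit-variance noise; as you yourself compute, that route only yields an $\Omega(\sqrt{T})$ lower bound. Your attempted repair (``keep $\epsilon$ fixed, argue the relevant player is pulled $o(T)$ times under the policy that is optimal-and-fair on $\mu^B$'') assumes the algorithm plays near the $\mu^B$-optimal policy in order to conclude it cannot distinguish the instances, but the theorem must cover \emph{every} algorithm, including one that deliberately oversamples player~2 to learn. With fixed constant means and constant variance, such an algorithm \emph{can} separate the two instances in $O(1)$ rounds, so no linear lower bound follows.

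The paper's construction avoids this entirely by putting the distinguishing information in player~1, whose means are taken to be $\{0,\,1/T^2\}$ (in two configurations) with \emph{Bernoulli} value distributions. Then with probability at least $1/2$ player~1 never realizes a nonzero value across all $T$ rounds, so the two instances are \emph{literally} indistinguishable on that event, regardless of how the algorithm allocates. Because one instance has player~1 valuing only item type~1 and the other only item type~2, the unique allocation that is envy-free under both is $\mathrm{UAR}$; yet on instance $\mu^2$ the optimal envy-free allocation gives all of item type~1 to player~2 (player~1 does not care), so $\mathrm{UAR}$ incurs $\Omega(T)$ regret there. The zero mean is essential: it is what breaks the ``EFE forces UAR'' rigidity you observed in the $m=1$ case and creates a non-UAR envy-free optimum on one instance. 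Your proposal never exploits means equal to zero or vanishing with $T$, which is precisely the mechanism the theorem is meant to highlight.
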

\begin{proof}
    W.l.o.g. assume that $\epsilon =1$. The proof also holds for any other constant $\epsilon$. Suppose the underlying value distributions are Bernoulli, that we have two players and two item types, and assume $T \geq 2$. We will consider two cases for $\mu^*$ and show that no algorithm can with probability greater than $1/2$ satisfy the constraints and have regret of $o(T)$ for both of these cases of $\mu^*$.

    First, let 
    \[
   \mu^{1} =  \begin{bmatrix}
                1/T^2 & 0 \\
                1 & 0.5 
            \end{bmatrix}
            \quad \text{and}
    \]
    \[
   \mu^{2} =  \begin{bmatrix}
                0 & 1/T^2 \\
                1 & 0.5
            \end{bmatrix}.
    \]
    If $\mu^* = \mu^1$ or $\mu^* = \mu^2$, then player $1$ will not have a realized value of $1$ for any of the $T$ items with probability at least $1/2$. Therefore, no algorithm can differentiate between $\mu^* = \mu^1$ and $\mu^* = \mu^2$ with probability at least $1/2$. The only fractional allocation that is envy-free for both $\mu^* = \mu^1$ and $\mu^* = \mu^2$  is the uniform at random allocation. However, this allocation has regret of $\Omega(T)$ for $\mu^* = \mu^2$, as the best fractional allocation when $\mu^* = \mu^2$ is the fractional allocation
    \[
        Y^{\mu^2} =  \begin{bmatrix}
                0 & 0.5 \\
                1 & 0.5
            \end{bmatrix}.
    \]
    This proves the desired result that no algorithm can be envy-free for $\mu^*$ and have $o(T)$ regret for both possible realizations of $\mu^*$. Similarly, the uniform at random allocation is the only proportional allocation in this example, and therefore the same result holds.
\end{proof}

\section{Proof of Theorem \ref{thm:23foref}}\label{app:proof_for_etc_alg}

Observations \ref{ef-satisfies-equal-treatment-guarantee}, \ref{prop-and-ef-satisfy-lipschitz}, and \ref{ef-satisfies-zero-exchangeability} give that the proportionality in expectation constraints and the envy-freeness in expectation constraints satisfy Properties \ref{def:notion_of_fairness}, \ref{def:lipschitz}, and \ref{def:zeros} respectively. Lemmas \ref{lem:ef-transform} and \ref{prop-strong-transform}  respectively give that the proportionality in expectation constraints and the envy-freeness in expectation constraints satisfy Property \ref{def:fairness_to_UAR}. These results combined with Lemma \ref{thm:T23_regret} directly prove Theorem \ref{thm:23foref}.

\begin{lemma}\label{thm:T23_regret}
    Let $\left\{\{(B_\ell(\mu), c_\ell)\}_{\ell=1}^{L}\right\}_{\mu \in [a,b]^{n \times m}}$ be a family of constraints that satisfy Properties \ref{def:notion_of_fairness}, \ref{def:lipschitz}, \ref{def:zeros}, and \ref{def:fairness_to_UAR}.  Then with probability $1-1/T$, Algorithm \ref{algo:ETC}  satisfies constraints $\{(B_\ell(\mu^*), c_\ell)\}_{\ell=1}^{L}$ and has regret of $\tilde{O}(T^{2/3} )$ for constraints $\{(B_\ell(\mu^*), c_\ell)\}_{\ell=1}^{L}$. 
\end{lemma}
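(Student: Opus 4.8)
## Proof Proposal

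\textbf{Overall approach.} The plan is to follow the structure already laid out in the proof sketch of Theorem~\ref{thm:23foref}, but now taking care of all the probabilistic and LP details. The argument splits into two parts: (i) a high-probability concentration statement establishing that after the $T^{2/3}$ warm-up rounds the empirical means $\hat\mu$ are close to $\mu^*$ and the confidence widths $\epsilon$ are small, and (ii) a deterministic argument, conditioned on that good event, showing both feasibility of $\hat X$ for the true constraints and the $\tilde O(T^{2/3})$ regret bound. The only ingredients needed beyond the four Properties are Hoeffding's inequality (for sub-Gaussian concentration and for counting the number of times each arm is sampled) and basic LP/triangle-inequality manipulations.

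\textbf{Step 1: concentration during warm-up.} First I would show that during the first $T^{2/3}$ rounds, since every $X_t = \mathrm{UAR}$, each (player, item) pair $(i,k)$ is the realized sample with probability exactly $\tfrac1{nm}$ per step (the item type is uniform and then the item is allocated uniformly). A Hoeffding/Chernoff bound on the binomial counts gives $N_{ik} = \Omega\!\left(\tfrac{T^{2/3}}{nm}\right)$ simultaneously for all $i,k$ except with probability $\le \tfrac1{2T}$ (the $\log^2(4Tnm)$ factor inside $\epsilon$ gives ample room here). Conditioned on this, $\epsilon_{ik} = \sqrt{\log^2(4Tnm)/(2N_{ik})} = \tilde O(T^{-1/3})$, hence $\norm{\epsilon}_1 = \tilde O(T^{-1/3})$. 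Second, by sub-Gaussianity of the value distributions and the choice of $\epsilon_{ik}$, a union bound over all $nm$ entries gives $\Pr(\mu^* \in \hat\mu \pm \epsilon) \ge 1 - \tfrac1{2T}$. Combining, the good event $\mathcal G := \{\norm{\epsilon}_1 = \tilde O(T^{-1/3})\} \cap \{\mu^* \in \hat\mu \pm \epsilon\}$ holds with probability $\ge 1 - \tfrac1T$. Everything below is conditioned on $\mathcal G$.

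\textbf{Step 2: feasibility of $\hat X$ for the true constraints.} This is immediate and is where $\mu^* \in \hat\mu\pm\epsilon$ is used: LP~\eqref{eq:lp_etc_finallp} enforces $\frob{B_\ell(\mu)}{X}\ge c_\ell$ for \emph{all} $\mu\in\hat\mu\pm\epsilon$, and $\mu^*$ is one such $\mu$, so $\hat X$ satisfies $\{(B_\ell(\mu^*),c_\ell)\}_{\ell=1}^L$. Since the warm-up allocations are $\mathrm{UAR}$, they satisfy the constraints for every $\mu$ by Property~\ref{def:notion_of_fairness}; hence $\alg$ satisfies the constraints at all $t$.

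\textbf{Step 3: regret of the commit phase.} Here I apply Property~\ref{def:fairness_to_UAR} with $\mu = \mu^*$ and $\gamma = 2K\norm{\epsilon}_1 = \tilde O(T^{-1/3})$, where $K$ is the Lipschitz constant from Property~\ref{def:lipschitz}; note $\gamma < \gamma_0$ for $T$ large enough. This yields an allocation $X'$ with $\frob{X'}{\mu^*} \ge \frob{Y^{\mu^*}}{\mu^*} - C_{\mathrm{P}\ref{def:fairness_to_UAR}}\gamma$ such that each constraint $\ell$ either (option 1) has slack $\ge 2K\norm{\epsilon}_1$ under $\mu^*$, or (option 2) treats all involved players equally. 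I then verify $X'$ is feasible for LP~\eqref{eq:lp_etc_finallp}: for option-1 constraints, the displayed chain of inequalities in the sketch (add and subtract $\frob{B_\ell(\mu^*)}{X'}$, bound the difference by $\norm{B_\ell(\mu)-B_\ell(\mu^*)}_1 \le K\norm{\mu-\mu^*}_1 \le 2K\norm{\epsilon}_1$ using Property~\ref{def:lipschitz} and $0\le X'_{ik}\le 1$, then invoke the slack) shows $\frob{B_\ell(\mu)}{X'}\ge c_\ell$ for all $\mu\in\hat\mu\pm\epsilon$; for option-2 constraints, Property~\ref{def:zeros} says the zero pattern of $B_\ell(\cdot)$ is the same for every $\mu$, so $X'$ treats all involved players equally for every $\mu$, and Property~\ref{def:notion_of_fairness} then gives $\frob{B_\ell(\mu)}{X'}\ge c_\ell$. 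Hence $X'$ is feasible for LP~\eqref{eq:lp_etc_finallp}, so by optimality of $\hat X$, $\frob{\hat X}{\hat\mu}\ge\frob{X'}{\hat\mu}$. Since $\norm{\mu^*-\hat\mu}_1\le\norm{\epsilon}_1$ and $0\le X_{ik}\le 1$, passing from $\hat\mu$ back to $\mu^*$ costs only $O(\norm{\epsilon}_1)$ on each side, so $\frob{X'}{\mu^*} - \frob{\hat X}{\mu^*} = \tilde O(\norm{\epsilon}_1)$. A triangle inequality then gives per-step regret $\frob{Y^{\mu^*}}{\mu^*} - \frob{\hat X}{\mu^*} = \tilde O(\norm{\epsilon}_1) = \tilde O(T^{-1/3})$, so the commit phase (at most $T$ steps) contributes $\tilde O(T^{2/3})$. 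The warm-up phase contributes at most $(b-a)T^{2/3} = O(T^{2/3})$ since per-step values lie in $[a,b]$. Summing gives total regret $\tilde O(T^{2/3})$, completing the proof.

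\textbf{Main obstacle.} The conceptually substantive content has already been front-loaded into Property~\ref{def:fairness_to_UAR} (proved in Lemmas~\ref{lem:ef-transform} and~\ref{prop-strong-transform}); what remains is mostly bookkeeping. The one place requiring genuine care is making the two-sided ``transfer through $\hat\mu$'' argument in Step 3 fully rigorous — i.e. quantifying precisely how feasibility of $X'$ for LP~\eqref{eq:lp_etc_finallp} plus optimality of $\hat X$ under $\hat\mu$ translates into a welfare guarantee under $\mu^*$, and checking that the slack budget $2K\norm{\epsilon}_1$ is exactly what is needed to absorb the Lipschitz error $\norm{B_\ell(\mu)-B_\ell(\mu^*)}_1$ uniformly over $\mu\in\hat\mu\pm\epsilon$ (not just $\mu=\hat\mu$). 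Getting the constants consistent there, and confirming the $\log^2$ factor in $\epsilon$ is large enough to drive both the $N_{ik}$ lower bound and the sub-Gaussian tail bound below $\tfrac1{2T}$, is the only delicate part.
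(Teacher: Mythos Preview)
Your proposal is correct and follows essentially the same approach as the paper's proof: concentration for $N_{ik}$ and $\hat\mu$ to establish the good event with probability $1-1/T$, feasibility of $\hat X$ for $\mu^*$ directly from $\mu^*\in\hat\mu\pm\epsilon$, and then the key step of applying Property~\ref{def:fairness_to_UAR} with $\gamma=2K\norm{\epsilon}_1$ to produce an $X'$ that is feasible for LP~\eqref{eq:lp_etc_finallp} (via the option-1/option-2 split using Properties~\ref{def:notion_of_fairness}, \ref{def:lipschitz}, \ref{def:zeros}) and hence dominated by $\hat X$ under $\hat\mu$, yielding the $\tilde O(T^{-1/3})$ per-step regret. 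The ordering differs slightly (you handle feasibility before regret, the paper does the reverse), but the logic, the choice of $\gamma$, and the chain of inequalities are identical.
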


\begin{proof}
    First, we note that the regret of the first $T^{2/3}$ steps can be bounded by 
    \begin{equation}\label{eq:warmup_regret}
        \sum_{t=0}^{T^{2/3}-1} \frob{Y^{\mu^*}}{\mu^*} - \frob{X_t}{\mu^*} \le T^{2/3}(b-a) = O(T^{2/3}).
    \end{equation}
    By Property \ref{def:notion_of_fairness}, the uniform at random allocation satisfies constraints $\{(B_\ell(\mu), c_\ell\}_{\ell=1}^{L}$. Therefore, $X_t$ satisfies the constraints for all $t < T^{2/3}$. Furthermore, because the fractional allocation was uniform at random for the first $T^{2/3}$ steps, we have that for sufficiently large $T$,
    \begin{align*}
        &\Pr\left(\norm{\epsilon}_1 \le nm\sqrt{nm\log^2(4nmT)} \cdot T^{-1/3}\right) \\
        &\ge \Pr\left(\forall i \in [n], k \in [m], \epsilon_{ik} \le   \sqrt{nm\log^2(4nmT)} \cdot T^{-1/3}\right) \\
        &= \Pr \left(\forall i \in [n], k \in [m], N_{ik} \ge \frac{T^{2/3}}{2nm} \right) \\
        &= \Pr \left(\forall i \in [n], k \in [m], N_{ik} \ge \frac{T^{2/3}}{nm} - \frac{T^{2/3}}{2nm} \right) \\
         &\ge \Pr \left(\forall i \in [n], k \in [m], N_{ik} \ge \frac{T^{2/3}}{nm} -  \sqrt{\log(4nmT)} \cdot T^{1/3} \right) \\
        &\ge 1 - nm e^{-2\log(4nmT)} \\
        &\ge 1-\frac{1}{2T}, \numberthis \label{eq:large_N_prob}
    \end{align*}
    where the second to last inequality is by Hoeffding's Inequality and a union bound. This implies that with probability $1-\frac{1}{2T}$, $\norm{\epsilon}_1 = \tilde{O}(T^{-1/3})$. Because the values are drawn from a Sub-Gaussian distribution, there exists a constant $c > 0$ (which depends on the distribution of the values) such that by Hoeffding's inequality, 
    \begin{align*}
        \Pr\left(\forall i \in [n],k \in [m],  |\hat{\mu}_{ik} - \mu^*_{ik}| \le \epsilon_{ik}\right) &\ge 1 - 2nme^{-c\log^2(4nmT)} \\
        &\ge  1-2nm\left(\frac{1}{4nmT}\right)^{c\log(4nmT)} \\
        &\ge 1-\frac{1}{2T}. && \text{[For sufficiently large $T$]} \numberthis \label{eq:diff_in_mu_prob}
    \end{align*}
    For the rest of this proof, we will assume that
        \begin{equation}\label{eq:large_N}
            \norm{\epsilon}_1 \le \tilde{O}(T^{-1/3})
    \end{equation}
    and
    \begin{equation}\label{eq:diff_in_mu}
       \forall i \in [n],k \in [m],  |\hat{\mu}_{ik} - \mu^*_{ik}| \le \epsilon_{ik},
    \end{equation}  
    which by Equations \eqref{eq:large_N_prob} and \eqref{eq:diff_in_mu_prob} happens with probability $1-1/T$. 
    
    If $K$ is the Lipschitz constant for this family of constraints, then by Equation \eqref{eq:large_N}, $2K\norm{\epsilon}_1 \le \tilde{O}(T^{-1/3}) \le \gamma_0$ for sufficiently large $T$, where $\gamma_0$ is from Property \ref{def:fairness_to_UAR}.  Therefore, taking $\gamma = 2K\norm{\epsilon}_1$ in Property \ref{def:fairness_to_UAR} gives that there exists some fractional allocation $X'$ such that 
    \begin{equation}\label{eq:Xprime_firstcond}
         |\frob{\mu^*}{Y^{\mu^*}} -\frob{\mu^*}{X'}| \le O(\norm{\epsilon}_1),
    \end{equation}
   and such that for every constraint $\ell \in [L]$, either $\forall i_1,i_2 \in \{i : B_\ell(\mu^*)_i \ne \textbf{0}\}$, $X'_{i_1} = X'_{i_2}$ or      
    \begin{equation}\label{eq:first_caseB}
            \frob{B_\ell(\mu^*)}{X'} \ge  c_\ell + 2K\norm{\epsilon}_1.
    \end{equation}
    For any $\mu \in \hat{\mu} \pm \epsilon$, we have that $\norm{\mu - \mu^*}_1 \le 2\norm{\epsilon}_1$ by Equation \eqref{eq:diff_in_mu} and the triangle inequality. By the Lipschitz continuity assumption (Property \ref{def:lipschitz}), this implies that for all $\mu \in \hat{\mu} \pm \epsilon$, 
    \begin{equation}\label{eq:Bellcomp}
        \norm{B_\ell(\mu) - B_\ell(\mu^*)}_1 \le 2K\norm{\epsilon}_1.
    \end{equation}
    Therefore, if Equation \eqref{eq:first_caseB} holds for a constraint $\ell$, then for any $\mu \in \hat{\mu} \pm \epsilon$, 
    \begin{align*}
        \frob{B_\ell(\mu)}{X'} &= \frob{B_\ell(\mu)}{X'} - \frob{B_\ell(\mu^*)}{X'} +  \frob{B_\ell(\mu^*)}{X'} \\
        &= \frob{B_\ell(\mu) - B_\ell(\mu^*)}{X'} + \frob{B_\ell(\mu^*)}{X'} \\
        &\ge  \frob{B_\ell(\mu^*)}{X'} - \norm{B_\ell(\mu) - B_\ell(\mu^*)}_1 && \text{[$0 \le X'_{ik} \le 1, \: \forall i,k$]}\\
        &\ge \frob{B_\ell(\mu^*)}{X'} - 2K \norm{\epsilon}_1 && \text{[Equation \eqref{eq:Bellcomp}]}\\
        &\ge c_\ell. && \text{[Equation \eqref{eq:first_caseB}]}
    \end{align*}
    Therefore, we have shown that if Equation \eqref{eq:first_caseB} holds for a constraint $\ell$, then the fractional allocation $X'$ satisfies constraint $(B_\ell(\mu), c_\ell)$ for all $\mu \in \hat{\mu} \pm \epsilon$. If Equation \eqref{eq:first_caseB} does not hold for a constraint $\ell$, then $\forall i_1,i_2 \in \{i : B_\ell(\mu^*)_i \ne \textbf{0}\}$, $X'_{i_1} = X'_{i_2}$. Because $\{i : B_\ell(\mu^*)_i \ne \textbf{0}\} = \{i : B_\ell(\mu)_i \ne \textbf{0}\}$ by Property \ref{def:zeros}, this implies by Property \ref{def:notion_of_fairness} that $\frob{B_\ell(\mu)}{X'} \ge c_\ell$ for all $\mu$. Therefore, we have shown that $X'$ satisfies $\{(B_\ell(\mu), c_\ell)\}_{\ell=1}^{L}$ for all $\mu \in \hat{\mu} \pm \epsilon$, and thus $X'$ satisfies the constraints in LP \eqref{eq:lp_etc_finallp}. 
    
    Because $\hat{X}$ is the optimal solution to LP \eqref{eq:lp_etc_finallp}, we have that 
    \[
        \frob{X'}{\hat{\mu}} \le \frob{\hat{X}}{\hat{\mu}}.
    \]
    By Equation \eqref{eq:diff_in_mu}, this implies that 
    \begin{equation}\label{eq:Xprime_close_to_hatx}
     \frob{X'}{\mu^*} - \frob{\hat{X}}{\mu^*}      \le \norm{\epsilon}_1.
    \end{equation}
    Therefore,
    \begin{align*}
        \frob{Y^{\mu^*}}{\mu^*} - \frob{\hat{X}}{\mu^*} &= \frob{Y^{\mu^*}}{\mu^*} -\frob{X'}{\mu^*} + \frob{X' }{\mu^*}-  \frob{\hat{X}}{\mu^*} \\
        &\le O\left(\norm{\epsilon}_1 + \norm{\epsilon}_1 \right) && \text{[Equations \eqref{eq:Xprime_firstcond} and \eqref{eq:Xprime_close_to_hatx}]}\\
        &\le \tilde{O}(T^{-1/3}). && \text{[Equation \eqref{eq:large_N}]} \numberthis \label{eq:regret}
    \end{align*}
    Combining with Equation \eqref{eq:warmup_regret}, this gives a total regret of
    \begin{align*}
        \sum_{t=0}^{T-1} \left( \frob{Y^{\mu^*}}{\mu^*} - \frob{X^t}{\mu^*}\right) &\le O(T^{2/3}) + \sum_{t=T^{2/3}}^{T} \left( \frob{Y^{\mu^*}}{\mu^*} - \frob{X^t}{\mu^*}\right) && \text{[Equation \eqref{eq:warmup_regret}]} \\
        &= O(T^{2/3}) + T \cdot \tilde{O}(T^{-1/3}) && \text{[Equation \eqref{eq:regret}]}  \\
        &= \tilde{O}(T^{2/3}).
    \end{align*}
    Lastly, we must show that the constraints are satisfied by the fractional allocation used by the algorithm for $t \ge T^{2/3}$. This is because if Equation \eqref{eq:diff_in_mu} holds, then any solution to LP \eqref{eq:lp_etc_finallp} must satisfy the constraints $\{(B_\ell(\mu^*), c_\ell\}_{\ell=1}^L$, and therefore the fractional allocation used by the algorithm for all $t \ge T^{2/3}$ will satisfy these constraints. Recall that all of the above relies on Equations \eqref{eq:diff_in_mu} and \eqref{eq:large_N} holding, which happens with probability $1-1/T$ as desired.
\end{proof}

\section{Proof of Lemma \ref{prop-strong-transform}}\label{prop_proofs}

For proportionality in expectation, LP \eqref{lp:ymu} can be rewritten as the following linear program.
\begin{align*}
    Y^\mu := \arg\max & \: \:  \frob{X}{\mu} \\
    \text{s.t. } & X_i \cdot \mu_i  - \frac{1}{n}\norm{\mu_i}_1 \ge 0 \quad \forall i \in [n] \\
    & \sum_{i} X_{ik} = 1 \quad \forall k  \numberthis \label{lp:ymuprop}
\end{align*}

In order to show that the proportionality constraints satisfy Property \ref{def:fairness_to_UAR}, we want to construct an $X'$ such that

\begin{enumerate}
    \item  $X'$ decreases the social welfare relative to $Y^\mu$ by $O(\gamma)$ and
    \item  For every constraint $i \in [n]$, either $X' = \mathrm{UAR}$ or   
        \begin{equation}\label{eq:def12_req_v2}
             X_i' \cdot \mu_i - \frac{1}{n} \norm{\mu_i}_1  \ge \gamma.
        \end{equation}
\end{enumerate}

LP \eqref{lp:ymuprop} has $n$ constraints, one corresponding to each player. Define 
\begin{equation}\label{eq:def_of_S}
    S_i =  Y_i^\mu \cdot \mu_i - \frac{1}{n} \norm{\mu_i}_1.
\end{equation}
 $S_i$ is the slack on the $i$th constraint when using the optimal solution $Y^\mu$. Now we have two cases depending on $\sum_{i=1}^n S_i$, the total amount of slack across all $n$ players.

\textbf{Case 1:} $\sum_{i=1}^n S_i \le \frac{b}{a}n\gamma$

Let $X' = \mathrm{UAR}$. This will result in an decrease of social welfare of at most $\frac{b}{a}n\gamma$ compared to $Y^\mu$. To see why, note that the slack of constraint $i$ is equivalent to how much player $i$ prefers their fractional allocation in $Y^\mu$ to $\mathrm{UAR}$. Therefore, switching to $\mathrm{UAR}$ from $Y^\mu$ decreases the total social welfare by $S_i$ per player, and therefore decreases the total social welfare by $\sum_{i=1}^n S_i \le \frac{b}{a}n\gamma = O(\gamma)$. Furthermore, $X' = \mathrm{UAR}$ clearly satisfies the other condition because every player is treated equally.

\vspace{5mm}
\textbf{Case 2:} $\sum_{i=1}^n S_i > \frac{b}{a}n\gamma$

Intuitively, in this case we want to redistribute the slack from the constraints with a lot of slack to the constraints without much slack. To do this, construct $X'$ as follows. Define
\begin{equation}\label{eq:def_of_Y}
    \Delta_{ik} := \frac{Y_{ik}^\mu}{\sum_{k'=1}^m Y_{ik'}^\mu} \cdot  \frac{S_i}{\sum_{i'=1}^n S_{i'}} \cdot \frac{n\gamma}{a}.
\end{equation}
By construction, we have that 
\begin{equation}\label{eq:sum_of_Ys}
    \sum_{i=1}^n \sum_{k=1}^m \Delta_{ik} = \frac{n\gamma}{a}.
\end{equation}
Because $\sum_{i=1}^n S_i \ge \frac{b}{a}n\gamma$, we also have that
\begin{equation}\label{eq:ratio_of_S}
    \frac{S_i}{\sum_{i'} S_{i'}} \cdot \frac{n\gamma}{a} \le \frac{S_i}{b}.
\end{equation}
Furthermore, for every $i$,  $S_i \le Y_i^\mu \cdot \mu_i$ by definition of $S_i$. Because $\mu_{ik} \le b$, this implies that $\frac{S_i}{b} \le \sum_{k=1}^m Y_{ik}^\mu$. With Equation \eqref{eq:ratio_of_S}, this implies that $\frac{S_i}{\sum_{i'} S_{i'}} \cdot \frac{n\gamma}{a} \le  \sum_{k=1}^m Y_{ik}^\mu$. With Equation \eqref{eq:def_of_Y}, this implies that $\Delta_{ik} \le Y^\mu_{ik}$. Finally, we note that
\begin{align*}
    \Delta_i \cdot \mu_i &= \frac{Y_{i}^\mu \cdot \mu_i}{\sum_{k'=1}^m Y_{ik'}^\mu} \cdot  \frac{S_i}{\sum_{i'=1}^n S_{i'}} \cdot \frac{n\gamma}{a}  &&\text{[Equation \eqref{eq:def_of_Y}]}\\
    &\le \left(\frac{\sum_{k=1}^m Y_{ik}^\mu \mu_{ik} }{b\sum_{k'=1}^m Y_{ik'}^\mu}   \right)S_i && \text{[Equation \eqref{eq:ratio_of_S}]}\\ 
    &\le \left(\frac{\sum_{k=1}^m Y_{ik}^\mu }{\sum_{k'=1}^m Y_{ik'}^\mu}   \right)S_i && \text{[$\mu_{ik} \le b$]}\\
    &= S_i. \numberthis \label{eq:Y_to_S}
\end{align*}
Now we are ready to construct $X'$. Let
\begin{equation}\label{eq:def_of_Xprime}
    X'_{ik} := Y^\mu_{ik} - \Delta_{ik} + \frac{1}{n}\sum_{i'=1}^n \Delta_{i'k}.
\end{equation}
In order for this to be a valid allocation, we need that $X'_{ik} \ge 0$, which is true because we showed above that $\Delta_{ik} \le Y_{ik}^\mu$. We also need that $\sum_{i=1}^n X'_{ik} = 1$, which follows from
\[
    \sum_{i=1}^n X'_{ik} = \sum_{i=1}^n \left( Y^\mu_{ik} - \Delta_{ik} + \frac{1}{n}\sum_{i'=1}^n \Delta_{i'k} \right) = 1 - \sum_{i=1}^n \Delta_{ik} + \sum_{i'=1}^n \Delta_{i'k} = 1.
\]
Next, we will show that Equation \eqref{eq:def12_req_v2} is satisfied for all $i$ for fractional allocation $X'$. Starting with the left hand side of Equation \eqref{eq:def12_req_v2}, we have
\begin{align*}
    &X_i' \cdot \mu_i -  \frac{1}{n} \norm{\mu_i}_1 \\
    &= Y^\mu_{i} \cdot \mu_i - \Delta_{i} \cdot \mu_i + \sum_{k=1}^m \frac{1}{n}\sum_{i'=1}^n \Delta_{i'k}\mu_{ik} -  \frac{1}{n} \norm{\mu_i}_1 &&\text{[Eq \eqref{eq:def_of_Xprime}]} \\
    &= S_i - \Delta_{i} \cdot \mu_i + \frac{1}{n}\sum_{i'=1}^n \Delta_{i'} \cdot \mu_i  && \text{[Eq \eqref{eq:def_of_S}]}\\
    &\ge \frac{1}{n}\sum_{i'=1}^n \Delta_{i'}\cdot \mu_i && \text{[Eq \eqref{eq:Y_to_S}]}\\
    &\ge  \frac{a}{n}\sum_{i'=1}^n  \sum_{k=1}^m \Delta_{i'k} && \text{[$\mu_{ik} \ge a$]}\\
    &= \gamma. && \text{[Eq \eqref{eq:sum_of_Ys}]}
\end{align*}
Furthermore, we can bound the decrease in social welfare between fractional allocation $X'$ and $Y^\mu$ by
\begin{align*}
    \frob{Y^\mu}{\mu} - \frob{X'}{\mu} &= \frob{Y^\mu - X'}{\mu} \\
    &\le  \sum_{i=1}^n \Delta_i \cdot \mu_i && \text{[Equation \eqref{eq:def_of_Xprime}]}\\
    &\le b\sum_{i=1}^n  \sum_{k=1}^m \Delta_{ik} && \text{[$\mu_{ik} \le b$]}\\
    &\le \frac{b}{a}n\gamma. && \text{[Equation \eqref{eq:sum_of_Ys}]}
\end{align*}

Therefore, we have shown that $X'$ has the desired properties, and thus the proportionality constraints satisfy Property \ref{def:fairness_to_UAR}. \qed

\section{Proof of Lemma \ref{ef-satisfies-equal-treatment-guarantee}}\label{app:envy-freeness-graph-alg}

For Section \ref{app:envy-freeness-graph-alg} only, we will assume w.l.o.g. that $a = 1$ and that $b \ge 1$. This is without loss of generality because envy-freeness in expectation constraints and social welfare are both scale invariant. Therefore, scaling every player's values (and mean values) by $1/a$ will give an equivalent problem where $a = 1$.

To prove Lemma \ref{ef-satisfies-equal-treatment-guarantee}, will show that we can transform $Y^\mu$ into a fractional allocation $X'$ which satisfies Property \ref{def:fairness_to_UAR} through Algorithm \ref{algo:adding-slack}. Algorithm \ref{algo:adding-slack} iterates over the following types of `envy-with-slack' graphs, which track whether a player prefers their allocation by at least $\alpha$ over another player's allocation. 

\begin{definition}
    Let $\opzero(\mu, X, \alpha)$ be the subroutine which returns the directed graph with vertices $N$ and edges generated as follows. Suppose $G = \opzero(\mu, X, \alpha)$. Then a directed edge from $i$ to $i'$ exists in $G$ if and only if 
    \[
        \langle X_i, \mu_i \rangle - \langle X_{i'}, \mu_i \rangle < \alpha.
    \]
\end{definition}

At a high level, Algorithm \ref{algo:adding-slack} constructs 'envy-with-slack' graphs with progressively smaller $\alpha$, with $\alpha \geq \gamma$ for all iterations. The algorithm operates on sets of nodes called \textit{equivalence classes}, where every pair of nodes in an equivalence class has the same allocation. We represent the set of equivalence classes in a fractional allocation $X$ as $\mathcal{S}(X)$.

\begin{definition}
    Let $\mathcal{S}(X)$ be the set of equivalence classes of fractional allocation $X$, where two nodes $i, i'$ are part of the same equivalence class $S \in \mathcal{S}(X)$ if and only if $X_i = X_{i'}$. 
\end{definition}
Algorithm \ref{algo:adding-slack} generally makes progress by either 1) merging two equivalence classes, or 2) removing an edge from the graph. We formalize this model below.

Each iteration $r$ begins with some allocation $X^r$ and a slack value $\alpha^r$. Algorithm \ref{algo:adding-slack} then generates from these parameters a directed graph $G^r = \opzero(\mu, X^r, \alpha^r)$, which is the 'envy-with-slack' graph for allocation $X^r$ given means $\mu$. As in standard graph notation, for a graph $G$ we define $V(G)$ as the vertices of $G$ and $E(G)$ as the edges of $G$. Each edge $e =(i, i') \in E(G^r)$ has a weight $w_e = \langle X_i, \mu_i \rangle - \langle X_{i'}, \mu_i \rangle$. For a set of vertices $S$, we use $\graphout(S)$ to represent the edges with head in $S$ and tail in $N \backslash S$, and similarly, we use $\graphin(S)$ to represent the edges with tail in $S$ and head in $N \backslash S$. For notational convenience, we let $\graphin(i) = \graphin(\{i\})$, and $\graphout(i) = \graphout(\{i\})$. Throughout this section, we will use the notation $\sw(X, \mu) = \langle X, \mu \rangle_F$.

An overview of the algorithm is as follows. In each iteration, Algorithm \ref{algo:adding-slack} generally performs one of three operations and removes an edge by decreasing $\alpha$. First, if there exists an equivalence class $S$ with at least one incoming edge but no outgoing edges, then operation $\opone$ transfers allocation probability from nodes in $S$ to all other nodes. If such an equivalence class does not exist, then Algorithm \ref{algo:adding-slack} finds a specific type of cycle in the `envy-with-slack' graph. If there exists some node which has an edge to some but not all of the nodes in the cycle, then operation $\optwo$ gives each node in the cycle half of its current allocation and half of the next node's allocation. If such a node does not exist and all edges in the graph have low enough weight, then operation $\opthree$ instead creates a new equivalence class by merging all equivalence classes that the nodes in the cycle belong to. Such a merge may lead to envy, which is removed by a call to Algorithm \ref{algo:remove-envy}. We define each of the three operations formally below, where each operation returns a new allocation $X'$. 

\begin{definition}
    Let $S \in \mathcal{S}(X)$. Define $X_{Sk} = X_{ik}$ for every $i \in S$. Let \newline $\opone(S, \alpha, X)$ be the subroutine which returns $X'$, where
    \begin{align*}
        X'_{ik} &= X_{ik} - \frac{(n-|S|)\alpha X_{Sk}}{2bn\sum_{k'} X_{Sk'}} \quad \forall \: i \in S \\
        \text{and} \\
        X'_{ik} &= X_{ik} + \frac{|S|\alpha X_{Sk}}{2bn\sum_{k'} X_{Sk'}}  \qquad \forall \: i \in N \backslash S \\
    \end{align*}
\end{definition}

\begin{definition}
    Let $C$ be a cycle in a graph $G = \opzero(\mu, X, \alpha)$ and $\nextnode(i)$ be the node which $i$ points to in $C$. Then the subroutine $\optwo(C, X)$ returns $X'$, where
    \begin{align*}
        X'_{ik} &= \frac{X_{ik} + X_{\nextnode(i)k}}{2} \quad \: \forall \: i \in V(C) \\
        X'_{ik} &= X_{ik} \qquad \qquad  \forall \: i \in N \backslash V(C)\\
    \end{align*}
\end{definition}

\begin{definition}
    Let $Q$ be a clique in a graph $G = \opzero(\mu, X, \alpha)$. Then the subroutine $\opthree(Q, X)$ returns $X'$, where
    \begin{align*}
        X'_{ik} &= \frac{\sum_{i' \in Q} X_{i'k}}{|Q|} \quad \: \: \forall \: i \in Q \\
        X'_{ik} &= X_{ik} \qquad \quad \forall \: i \in N \backslash Q\\
    \end{align*}
\end{definition}

We also define two intermediary operations. Note that the $\argmin$ function may return the empty set, a singleton, or a set with multiple elements.

\begin{definition}
    Let $G = \opzero(\mu, X, \alpha)$ with edge set $E$. For each equivalence class $S \in \mathcal{S}(X)$, let $E_S = \argmin_{e \in \graphout(S)} w_e$, where the size of $E_S$ may be $0, 1$, or $> 1$. Let $E' = \sum_S E_S$, and let $G' = (N, E')$. Then the subroutine $\opfour(G, \mathcal{S}(X))$ returns a cycle $C$ of $G'$ where $V(C)$ contains at most one member of each equivalence class $S \in \mathcal{S}(X)$ (and returns $\emptyset$ if no such cycle exists).
\end{definition}

\begin{definition}
    Let $G = \opzero(\mu, X, \alpha)$ and let $U \subset N$. Let $\opfive(U, \beta, X)$ be the subroutine which returns $X'$, where
    \begin{align*}
        X'_{ik} &= (1 - |N \backslash U|\cdot \beta) \cdot X_{ik} \quad \forall \: i \in U \\
        X'_{ik} &= X_{ik} + \beta \cdot \sum_{i' \in U} X_{i'k}  \qquad \forall \: i \in N \backslash U \\
    \end{align*}
\end{definition}

We are now ready to present Algorithm $\ref{algo:adding-slack}$.

\begin{algorithm}[htb!]
\caption{Achieving Envy with Slack}
\label{algo:adding-slack}
\begin{algorithmic}
\State Require $Y^{\mu}, \mu$
\State Let $\alpha^0 \gets \gamma \cdot (e^{n^2\log(4bn^4) + n^3\log(2n)}, X^0 \gets Y^\mu, G^0 \gets \opzero(\mu, X^0, \alpha^0), r \gets 0$.
\While{$\exists (u, v) \in E(G^r)$ s.t. $u,v$ are in different equivalence classes}
    \If{$\exists \: S \in \mathcal{S}(X^r)$ s.t. $\graphin(S) \neq \emptyset$ and $\graphout(S) = \emptyset$}
        \State $\rhd$ $X^{r+1} = \opone(S, \alpha^r, X^r)$
        \State $\rhd$ $\alpha^{r+1} = \tfrac{\alpha^r}{2b}$
    \Else
        \State $\rhd$ $C = \opfour(G^r, \mathcal{S}(X^r))$
        \If{$\exists \: u \in N$ and  $\exists i, i' \in V(C)$ such that $(u, i) \in E$ and $(u, i') \notin E$}
            \State $\rhd$ $X^{r+1} = \optwo(C, X^r)$
            \State $\rhd$ $\alpha^{r+1} = \tfrac{\alpha^r}{2}$
        \ElsIf{$\exists \: e \in E(G^r)$ s.t. $w_e \geq \frac{\alpha^r}{4bn^4}$}
            \State $\rhd$ $X^{r+1} = X^r$
            \State $\rhd$ $\alpha^{r+1} = \tfrac{\alpha^r}{4bn^4}$
        \Else
            \State $\rhd$ $\mathcal{S}' = \{S \in \mathcal{S}(X^r): S \cap V(C) \ne \emptyset\}$
            \State $\rhd$ $Q = \bigcup_{S \in \mathcal{S}'} S$
            \State $\rhd$ $X^{\avg} = \opthree(Q, X^r)$ 
            \State $\rhd$ $\alpha^{\avg} = \tfrac{\alpha^r}{n}$
            \State $\rhd$ $G^{\avg} = \opzero(\mu, X^{\avg}, \alpha^{\avg})$ \emph{// $G^{\avg}$ defined for analysis only.}
            \State $\rhd$ $X' = X^{\avg}$
            \State $\rhd$ $G' = \opzero(\mu, X', 0)$
            \While{$\exists e \in E(G')$}
                \State $\rhd$ $X' = \opsix(\mu, X')$
                \State $\rhd$ $G' = \opzero(\mu, X', 0)$
            \EndWhile
            \State $\rhd$ $X^{r+1} = X'$
            \State $\rhd$ $\alpha^{r+1} = \tfrac{\alpha^{\avg}}{2}$
        \EndIf
    \EndIf
    \State $\rhd$ $ G^{r+1} \gets \opzero(\mu, X^{r+1}, \alpha^{r+1})$
    \State $\rhd$ $r = r + 1$
\EndWhile \\
\Return $X^r$
\end{algorithmic}
\end{algorithm}

Algorithm \ref{algo:adding-slack} calls $\opsix$, which is equivalent to calling Algorithm \ref{algo:remove-envy}. Algorithm \ref{algo:remove-envy} will require the following definition.
\begin{definition}
    \sloppy Let $\opzeroenvy(\mu, X)$ be the subroutine which returns the directed graph with vertices $N$ and edges generated as follows. Suppose $G = \opzeroenvy(\mu, X)$. Then a directed edge from $i$ to $i'$ exists in $G$ if and only if
    \[
        \langle X_i, \mu_i \rangle - \langle X_{i'}, \mu_i \rangle \leq 0.
    \]
\end{definition}
Note that this definition is exactly the same as that of $\opzero(\mu, X, 0)$, except with a weak instead of a strict inequality. We now present Algorithm \ref{algo:remove-envy}.

\begin{algorithm}[htb!]
\caption{Remove Envy (also referred to as subroutine $\opsix(\mu, X^0)$)}
\label{algo:remove-envy}
\begin{algorithmic}
\State Require $\mu, X^0$
\State Let $G^0 \gets \opzeroenvy(\mu, X^0), r \gets 0$.
\If{$\left|\{e \in E(G^0): w_e < 0\} \right|$ = 0}
    
    \Return $X^0$
\EndIf
\State $\rhd$ Choose $(u, v) \in \{e \in E(G^0): w_e < 0\}$
\State $\rhd$ $U \gets  \{v' \in E(G^0): w_{(u, v')} < 0\}$
\While{$w_{(u, v)} < 0$ and there does not exist a cycle in $G^r$ containing $(u,v)$}
    \State $\rhd$ \mbox{$\forall \: i \in U$, $\beta_i = \displaystyle \min_{i' \in N \backslash U, \beta \geq 0} \big(\beta$ s.t. $\opfive(U, \beta, X^r)$ returns $X'$ where $\langle X'_i, \mu_i \rangle = \langle X'_{i'}, \mu_i \rangle\big)$} 
    \State $\rhd$ $\beta_{u} = \beta$ s.t. $\opfive(U, \beta, X^r)$ returns $X'$ where $ \langle X'_{u}, \mu_{u} \rangle = \langle X'_{v}, \mu_{u} \rangle$
    \State $\rhd$ $i^* = \argmin_{i \in (\{u\} \cup U)} \beta_i$
    \State $\rhd$ $X^{r+1} = \opfive(U, \beta_{i^*}, X^r)$
    \State $\rhd$ $U = U \cup \{i^*\}$
    \State $\rhd$ $G^{r+1} \gets \opzeroenvy(\mu, X^{r+1})$
    \State $\rhd$ $r = r + 1$
\EndWhile 
\If{$w_{(u,v)} < 0$}
    \State $\rhd$ $C \gets $ cycle in $G^r$ containing $(u, v)$
    \State \emph{// Let $\prevnode(i)$ and $\nextnode(i)$ be the nodes before and after $i$ in $C$, respectively.}
    \State $\rhd$ $\forall \: i \in V(C), X'_i = X^r_{\nextnode(i)}$ 
\Else
    \State $\rhd$ $X' = X^r$
\EndIf \\ 
\Return $X'$
\end{algorithmic}
\end{algorithm}

We begin by proving some helpful lemmas. It will be convenient to define the following term.
\begin{definition}
    Let $X$ be an \textbf{envy-free} fractional allocation for $\mu$ if for all $i, i' \in N$,
    \[
        \langle X_i, \mu_i \rangle - \langle X_{i'}, \mu_i \rangle \geq 0.
    \]
\end{definition}

\begin{lemma}\label{lem:remove-incoming-edge-less-edges}
    Let $X$ be an envy-free fractional allocation for $\mu$, and let $G = \opzero(\mu, X, \alpha)$ with edge set $E$. Suppose that there exists some equivalence class $S \in \mathcal{S}(X)$ such that $\graphin(S) \neq \emptyset$ and $\graphout(S) = \emptyset$ in $G$, and let $X' = \opone(S, \alpha, X)$. Let $G' = \opzero(\mu, X', \alpha/2b)$ with edge set $E'$. Then $|E'| < |E|$.
\end{lemma}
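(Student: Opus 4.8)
\textbf{Setup and overall strategy.} The plan is to show two things: (i) every edge of $G'$ is already an edge of $G$ (no new edges are created), and (ii) at least one edge of $G$ disappears in $G'$ — namely any edge in $\graphin(S)$, which is nonempty by hypothesis. Together these give $|E'| < |E|$. Since $X$ is envy-free, all weights $w_e$ in $G$ are $\ge 0$, and the edge set of $G = \opzero(\mu, X, \alpha)$ consists of exactly the ordered pairs $(i,i')$ with $\langle X_i,\mu_i\rangle - \langle X_{i'},\mu_i\rangle < \alpha$. First I would record how the operation $\opone(S,\alpha,X)$ changes each relevant dot product. Writing $\delta := \frac{\alpha}{2bn\sum_{k'} X_{Sk'}}$, each node $i\in S$ has $X'_{ik} = X_{ik} - (n-|S|)\delta X_{Sk}$ and each node $j\notin S$ has $X'_{jk} = X_{jk} + |S|\delta X_{Sk}$; note the total mass shifted out of $S$ equals the total shifted in, so $X'$ is a valid fractional allocation, and $X'_i$ is still common across $i\in S$ (so $S$ remains an equivalence class, though possibly it is ready to merge — that does not matter here).

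\textbf{Step 1: the incoming edges to $S$ vanish.} Take any edge $(i',i)\in\graphin(S)$, so $i\in S$, $i'\notin S$, and $\langle X_{i'},\mu_{i'}\rangle - \langle X_i,\mu_{i'}\rangle < \alpha$. After the operation, $\langle X'_{i'},\mu_{i'}\rangle = \langle X_{i'},\mu_{i'}\rangle + |S|\delta\langle X_S,\mu_{i'}\rangle$ and $\langle X'_i,\mu_{i'}\rangle = \langle X_i,\mu_{i'}\rangle - (n-|S|)\delta\langle X_S,\mu_{i'}\rangle$, so the gap $\langle X'_{i'},\mu_{i'}\rangle - \langle X'_i,\mu_{i'}\rangle$ \emph{increases} by $n\delta\langle X_S,\mu_{i'}\rangle = n\delta\langle X_i,\mu_{i'}\rangle$ (using $X_S = X_i$ for $i\in S$). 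I need this increase to push the gap above the new threshold $\alpha/2b$; equivalently, the gap must become $\ge \alpha/2b$. Since the original gap was $\ge 0$ (envy-freeness), it suffices that $n\delta\langle X_i,\mu_{i'}\rangle \ge \alpha/2b$. Now $\langle X_i,\mu_{i'}\rangle \ge a\sum_{k'}X_{ik'} = a\sum_{k'}X_{Sk'}$ because every $\mu_{i'k}\ge a$; plugging in $\delta = \frac{\alpha}{2bn\sum_{k'}X_{Sk'}}$ gives $n\delta\langle X_i,\mu_{i'}\rangle \ge \frac{a\alpha}{2b} \ge \frac{\alpha}{2b}$ (here using $a\ge 1$, which the appendix assumes w.l.o.g.; or more generally $\ge \frac{a\alpha}{2b}$, which still exceeds $\alpha/2b$ only if $a\ge1$ — I should double-check the normalization, but under the stated convention $a=1$ this is exactly tight). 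Hence $(i',i)\notin E'$. Since $\graphin(S)\neq\emptyset$, at least one edge is removed.

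\textbf{Step 2: no new edges appear.} I must check that for every ordered pair $(p,q)$ that is \emph{not} in $E$, it is still not in $E'$, i.e. $\langle X'_p,\mu_p\rangle - \langle X'_q,\mu_p\rangle \ge \alpha/2b$. Split into cases according to how $p,q$ sit relative to $S$. The key structural fact is $\graphout(S)=\emptyset$: for every $i\in S$ and every $j\notin S$, $(i,j)\notin E$, so $\langle X_i,\mu_i\rangle - \langle X_j,\mu_i\rangle \ge \alpha$ — players in $S$ strictly (by $\alpha$) prefer their own bundle to every outside bundle. Case $p\in S$, $q\notin S$: the gap $\langle X'_p,\mu_p\rangle-\langle X'_q,\mu_p\rangle$ equals the old gap minus $n\delta\langle X_S,\mu_p\rangle$; the old gap is $\ge\alpha$, and $n\delta\langle X_S,\mu_p\rangle = \frac{\alpha\langle X_S,\mu_p\rangle}{2b\sum X_{Sk'}} \le \frac{\alpha\cdot b\sum X_{Sk'}}{2b\sum X_{Sk'}} = \alpha/2$, so the new gap is $\ge\alpha-\alpha/2=\alpha/2\ge\alpha/2b$. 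Case $p\notin S$, $q\in S$: either $(p,q)\in E$ already (then no claim needed) or $(p,q)\notin E$, old gap $\ge\alpha$; the operation changes this gap by $+n\delta\langle X_S,\mu_p\rangle\ge 0$, so it stays $\ge\alpha\ge\alpha/2b$. Case $p,q$ both in $S$: $X'_p=X'_q$, gap $=0$; but such a pair is never a non-edge issue since... actually gap $0<\alpha/2b$, so $(p,q)\in E'$ — however $(p,q)\in E$ too (gap was $0$), so this is fine, no new edge. Case $p,q$ both outside $S$: both get the same additive shift $+|S|\delta X_{Sk}$ in every coordinate, so $\langle X'_p,\mu_p\rangle - \langle X'_q,\mu_p\rangle = \langle X_p,\mu_p\rangle - \langle X_q,\mu_p\rangle$ exactly — the gap is unchanged, so $(p,q)\in E' \iff (p,q)\in E$. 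This exhausts all cases and shows $E'\subseteq E$.

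\textbf{Main obstacle.} The delicate point is Step~1: making the quantitative inequality $n\delta\langle X_i,\mu_{i'}\rangle \ge \alpha/2b$ come out with the right constant so that the removed incoming edges really clear the \emph{halved-and-then-some} threshold $\alpha/2b$, while simultaneously in Step~2 the outgoing-direction perturbation (magnitude $\le\alpha/2$) is small enough not to destroy the $\ge\alpha$ margin guaranteed by $\graphout(S)=\emptyset$. Both hinge on the bounds $a\le\mu_{ik}\le b$ (with the appendix's convention $a=1$) and on the precise coefficient $\frac{1}{2bn\sum_{k'}X_{Sk'}}$ chosen in the definition of $\opone$; I would be careful to verify the normalization is consistent with $a=1$, and if not, state the bound as $\ge \frac{a\alpha}{2b}$ and note $\alpha^{r+1}=\alpha/(2b)$ is chosen precisely so that $\frac{a\alpha}{2b}\ge\frac{\alpha}{2b}$. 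Everything else is bookkeeping on dot products.
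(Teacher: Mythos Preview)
Your proposal is correct and follows essentially the same route as the paper: the same case split on the positions of $p,q$ relative to $S$, the same two key computations (the outgoing-direction gap decreases by at most $\alpha/2$ via $\mu_{ik}\le b$, and the incoming-direction gap increases by at least $\alpha/2b$ via $\mu_{i'k}\ge a=1$), and the same conclusion that $E'\subseteq E$ while all of $\graphin(S)$ drops out. Your caution about the $a=1$ normalization is well-placed; the paper makes exactly that w.l.o.g.\ assumption at the start of this appendix.
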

\begin{proof}
    We first show that if an edge $e \notin E$, then $e \notin E'$. Note that for any $i, i'$ such that $(i, i') \notin E$, 
    \begin{equation*}
        \langle X_{i}, \mu_{i} \rangle - \langle X_{i'}, \mu_{i} \rangle \geq \alpha > \frac{\alpha}{2b}.
    \end{equation*}
    Therefore, to show that an edge $(i, i')$ not in $E$ is also not in $E'$, it suffices to show that $\langle X'_{i}, \mu_{i} \rangle - \langle X'_{i'}, \mu_{i} \rangle \geq \langle X_{i}, \mu_{i} \rangle - \langle X_{i'}, \mu_{i} \rangle$.
    
    The subroutine $\opone(S, \alpha, X)$ only transfers weight from $S$ to $N \backslash S$, which implies that no node in $N \backslash S$ will gain an edge to a node in $S$. Formally,
    \begin{equation}\label{eq:remove-incoming-edge-1}
        \langle X'_{i'}, \mu_{i'} \rangle - \langle X'_{i}, \mu_{i'} \rangle > \langle X_{i'}, \mu_{i'} \rangle - \langle X_{i}, \mu_{i'} \rangle \quad \forall \: i \in S, i' \in N \backslash S.
    \end{equation}
    Every pair of nodes $i, i' \in S$ has their fractional allocation reduced by the same amount, so
    \begin{equation}\label{eq:remove-incoming-edge-2}
        \langle X'_{i}, \mu_{i} \rangle - \langle X'_{i'}, \mu_{i} \rangle = \langle X_{i}, \mu_{i} \rangle - \langle X_{i'}, \mu_{i} \rangle  \quad \forall \: i, i'\in S.
    \end{equation}
    Similarly, every pair of nodes $i, i' \in N \backslash S$ has their fractional allocation increased by the same amount, so
    \begin{equation}\label{eq:remove-incoming-edge-3}
        \langle X'_{i}, \mu_{i} \rangle - \langle X'_{i'}, \mu_{i} \rangle = \langle X_{i}, \mu_{i} \rangle - \langle X_{i'}, \mu_{i} \rangle  \quad \forall \: i, i'\in N \backslash S.
    \end{equation}
    
    Finally, we observe that for any $i \in S$ and $i' \in N \backslash S$,
    \begin{align}
         \langle X'_{i}, \mu_{i} \rangle - \langle X'_{i'}, \mu_{i} \rangle &= \left(\langle X_{i}, \mu_{i} \rangle - \sum_{k \in [m]} \frac{(n-|S|)\alpha X_{Sk}\mu_{ik}}{2bn\sum_{k'} X_{Sk'}} \right) - 
         \left(\langle X_{i'}, \mu_{i} \rangle + \sum_{k \in [m]}\frac{|S|\alpha X_{Sk}\mu_{ik}}{2bn\sum_{k'} X_{Sk'}}\right) \nonumber \\
         &= \langle X_{i}, \mu_{i} \rangle - \langle X_{i'}, \mu_{i} \rangle - \sum_{k \in [m]}\frac{n \alpha X_{Sk}\mu_{ik}}{2bn\sum_{k'} X_{Sk'}} \nonumber \\
         &\ge \langle X_{i}, \mu_{i} \rangle - \langle X_{i'}, \mu_{i} \rangle - \sum_{k \in [m]}\frac{n \alpha X_{Sk}}{2n\sum_{k'} X_{Sk'}} \nonumber \\
         &= \langle X_{i}, \mu_{i} \rangle - \langle X_{i'}, \mu_{i} \rangle - \frac{\alpha}{2} \nonumber \\
         &\geq \alpha - \frac{\alpha}{2} \nonumber \\
         &\geq \frac{\alpha}{2b}. \nonumber
    \end{align}
    where the second inequality is because $(i,i') \not\in E$. Therefore, by definition of $G', (i, i') \notin E'$.

    Recall that $\graphin(S) \neq \emptyset$ in $G$. We will now show that $\graphin(S) = \emptyset$ in $G'$. Observe that for $i \in S$ and $i' \in N \backslash S$, 
    \begin{align*}
        \langle X'_{i'}, \mu_{i'} \rangle - \langle X'_{i}, \mu_{i'} \rangle &= \left(\langle X_{i'}, \mu_{i'} \rangle + \sum_{k \in [m]} \frac{|S|\alpha X_{Sk}\mu_{i'k}}{2bn\sum_{k'} X_{Sk'}} \right) - \left(\langle X_{i}, \mu_{i'} \rangle - \sum_{k \in [m]}  \frac{(n-|S|)\alpha X_{Sk}\mu_{i'k}}{2bn\sum_{k'} X_{Sk'}}\right) \\
        &= \langle X_{i'}, \mu_{i'} \rangle - \langle X_{i}, \mu_{i'} \rangle + \sum_{k \in [m]} \frac{n \alpha X_{Sk}\mu_{i'k}}{2bn\sum_{k'} X_{Sk'}} \\
        &\geq \langle X_{i'}, \mu_{i'} \rangle - \langle X_{i}, \mu_{i'} \rangle + \frac{\alpha}{2b}\\
        &\geq \frac{\alpha}{2b}.
    \end{align*}
    Therefore, $(i', i) \notin E$, which implies $\graphin(S) = \emptyset$ in $G'$. We conclude that all edges in $E'$ exist in $E$, and at least one edge in $E$ does not exist in $E'$, which implies that $|E'| < |E|$.
\end{proof}

\begin{lemma}\label{lem:remove-incoming-edge-no-envy}
    Let $X$ be an envy-free fractional allocation for $\mu$, and let $G = \opzero(\mu, X, \alpha)$ with edge set $E$. Suppose that there exists some equivalence class $S \in \mathcal{S}(X)$ such that $\graphin(S) \neq \emptyset$ and $\graphout(S) = \emptyset$ in $G$, and let $X' = \opone(S, \alpha, X)$. Then $X'$ is an envy-free allocation for $\mu$.
\end{lemma}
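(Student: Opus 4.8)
The plan is to verify directly that $X' = \opone(S,\alpha,X)$ satisfies the envy-freeness inequality $\langle X'_i,\mu_i\rangle - \langle X'_{i'},\mu_i\rangle \ge 0$ for every ordered pair $(i,i')$, splitting into four cases according to whether $i$ and $i'$ lie in $S$ or in $N\setminus S$. The useful structural fact is that $\opone$ subtracts the \emph{same} vector (a scalar multiple of $X_S$) from every node of $S$ and adds the \emph{same} vector to every node of $N\setminus S$, so relative differences are preserved on each side; only the cross terms need genuine attention, and for those the hypothesis $\graphout(S)=\emptyset$ supplies the needed slack. Along the way one also checks that column sums are preserved, so $X'$ is still a valid fractional allocation (exactly as in the bookkeeping behind Lemma~\ref{lem:remove-incoming-edge-less-edges}).

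For $i,i'\in S$ we have $X'_i = X'_{i'}$ (both equal $X_S$ minus the identical correction), so the inequality holds with equality. For $i,i'\in N\setminus S$ the vectors $X'_i - X'_{i'}$ and $X_i - X_{i'}$ coincide, hence $\langle X'_i,\mu_i\rangle - \langle X'_{i'},\mu_i\rangle = \langle X_i,\mu_i\rangle - \langle X_{i'},\mu_i\rangle \ge 0$ by envy-freeness of $X$. For $i\in N\setminus S$ and $i'\in S$, note that $\opone$ only increases $X_i$ coordinatewise and only decreases $X_{i'}$ coordinatewise, and $\mu_i$ is nonnegative, so $\langle X'_i,\mu_i\rangle \ge \langle X_i,\mu_i\rangle \ge \langle X_{i'},\mu_i\rangle \ge \langle X'_{i'},\mu_i\rangle$, again using envy-freeness of $X$ in the middle step.

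The remaining case, $i\in S$ and $i'\in N\setminus S$, is the only one that uses $\graphout(S)=\emptyset$, and it is essentially the computation already displayed in the proof of Lemma~\ref{lem:remove-incoming-edge-less-edges}: expanding the definition of $\opone$, the quantity $\langle X'_i,\mu_i\rangle - \langle X'_{i'},\mu_i\rangle$ equals $\langle X_i,\mu_i\rangle - \langle X_{i'},\mu_i\rangle - \sum_{k}\frac{n\alpha X_{Sk}\mu_{ik}}{2bn\sum_{k'}X_{Sk'}}$, and the subtracted term is at most $\alpha/2$ because $\mu_{ik}\le b$. Since an edge $(i,i')$ with $i\in S$, $i'\notin S$ would lie in $\graphout(S)$ and would exist iff $\langle X_i,\mu_i\rangle - \langle X_{i'},\mu_i\rangle < \alpha$, the assumption $\graphout(S)=\emptyset$ forces $\langle X_i,\mu_i\rangle - \langle X_{i'},\mu_i\rangle \ge \alpha$; combining gives $\langle X'_i,\mu_i\rangle - \langle X'_{i'},\mu_i\rangle \ge \alpha-\alpha/2 = \alpha/2 \ge 0$. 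That exhausts all cases, so $X'$ is envy-free for $\mu$. I do not expect any real obstacle here — the whole argument is a short case check — the only points requiring care are tracking the opposite signs of the two correction vectors (subtracted on $S$, added on $N\setminus S$) and recognizing that $\graphout(S)=\emptyset$ is precisely the statement that no node of $S$ comes within $\alpha$ of envying a node outside $S$.
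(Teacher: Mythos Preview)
Your proof is correct and follows essentially the same approach as the paper. The paper organizes the case split according to whether $(i,i')\in E$ (invoking Lemma~\ref{lem:remove-incoming-edge-less-edges} for non-edges and Equations~\eqref{eq:remove-incoming-edge-1}--\eqref{eq:remove-incoming-edge-3} for edges), whereas you organize it directly by membership of $i,i'$ in $S$ versus $N\setminus S$; but the underlying computations---equal corrections preserving within-side differences, and the $\graphout(S)=\emptyset$ hypothesis supplying $\alpha$ slack against a correction of at most $\alpha/2$---are identical.
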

\begin{proof}
    Let $G' = \opzero(\mu, X', \alpha/2b)$ with edge set $E'$. It suffices to show that $w_e \geq 0 \: \forall e \in E'$. By Lemma \ref{lem:remove-incoming-edge-less-edges}, if an edge $(i, i') \notin E$, then 
    \[
        \langle X'_i, \mu_i \rangle - \langle X'_{i'}, \mu_i \rangle \geq \tfrac{\alpha}{2b} \geq 0.
    \]
   Consider any $i \in S$ and $i' \in N \backslash S$. Then there must not exist an edge $(i, i') \in E$ by assumption. Also by assumption, for any $i, i'$ such that $(i, i') \in E$, we have that $\langle X_{i},  \mu_{i} \rangle - \langle X_{i'}, \mu_{i} \rangle \geq 0$. By a direct application of Equations \eqref{eq:remove-incoming-edge-1}, \eqref{eq:remove-incoming-edge-2}, and \eqref{eq:remove-incoming-edge-3}, we can conclude that $\langle X'_{i}, \mu_{i} \rangle - \langle X'_{i'}, \mu_{i} \rangle \geq 0$ as well.
\end{proof}

\begin{lemma}\label{lem:remove-incoming-edge-sw}
    \sloppy Let $G = \opzero(\mu, X, \alpha)$ with edge set $E$. Suppose that there exists some equivalence class $S \in \mathcal{S}(X)$ such that $\graphin(S) \neq \emptyset$ and $\graphout(S) = \emptyset$ in $G$, and let $X' = \opone(S, \alpha, X)$. Then 
    \[
        \sw(X, \mu) - \sw(X', \mu) \leq \frac{\alpha n}{2}.
    \]
\end{lemma}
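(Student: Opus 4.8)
The plan is to compute $\sw(X,\mu)-\sw(X',\mu)=\langle X-X',\mu\rangle_F$ directly from the definition of $\opone$. Since $\opone(S,\alpha,X)$ only moves fractional allocation mass from the nodes of $S$ to the nodes of $N\setminus S$, the matrix $X-X'$ has two types of rows: for $i\in S$ we have $X_{ik}-X'_{ik}=\frac{(n-|S|)\alpha X_{Sk}}{2bn\sum_{k'}X_{Sk'}}\ge 0$, and for $i\in N\setminus S$ we have $X_{ik}-X'_{ik}=-\frac{|S|\alpha X_{Sk}}{2bn\sum_{k'}X_{Sk'}}\le 0$. I would split the Frobenius inner product over these two blocks.

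For the $N\setminus S$ block, every entry of $X-X'$ is nonpositive and every $\mu_{ik}\ge 0$, so this block contributes a nonpositive amount to $\sw(X,\mu)-\sw(X',\mu)$ and can simply be discarded. For the $S$ block, I would apply the upper bound $\mu_{ik}\le b$ and exploit the cancellation of $b$ against the $b$ in the denominator together with $\sum_k X_{Sk}=\sum_{k'}X_{Sk'}$:
\[
\sum_{i\in S}\sum_{k}(X_{ik}-X'_{ik})\mu_{ik}
\;\le\; b\sum_{i\in S}\sum_{k}\frac{(n-|S|)\alpha X_{Sk}}{2bn\sum_{k'}X_{Sk'}}
\;=\; |S|\cdot\frac{(n-|S|)\alpha}{2n}.
\]
Finally, since $|S|\le n$ gives $\frac{|S|(n-|S|)}{2n}\le\frac{n-|S|}{2}\le\frac{n}{2}$, we obtain $\sw(X,\mu)-\sw(X',\mu)\le\frac{\alpha n}{2}$, as claimed.

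I do not anticipate a genuine obstacle; the argument is just bookkeeping on the definition of $\opone$. The only point that warrants an explicit remark is the well-definedness of the transfer when $\sum_{k'}X_{Sk'}=0$: in that case $X_{Sk}=0$ for every $k$, so all transferred quantities vanish, $X'=X$, and the inequality holds trivially. Note also that no envy-freeness hypothesis on $X$ is needed here (unlike in Lemmas~\ref{lem:remove-incoming-edge-less-edges} and \ref{lem:remove-incoming-edge-no-envy}), since the bound follows purely from the nonnegativity of $\mu$ and the structure of the reallocation.
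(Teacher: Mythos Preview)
Your proposal is correct and follows essentially the same route as the paper: both arguments split $\sw(X,\mu)-\sw(X',\mu)$ over $S$ and $N\setminus S$, discard the nonpositive $N\setminus S$ contribution via $\mu_{ik}\ge 0$, and bound the $S$ contribution using $\mu_{ik}\le b$ together with the cancellation $\sum_k X_{Sk}/\sum_{k'}X_{Sk'}=1$. Your intermediate bound $\tfrac{|S|(n-|S|)\alpha}{2n}$ is in fact slightly sharper than the paper's $\tfrac{|S|\alpha}{2}$ (the paper relaxes $n-|S|\le n$ one step earlier), but both finish at $\tfrac{\alpha n}{2}$.
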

\begin{proof}
    Observe that 
    \begin{align*}
        \sw(X', \mu) &= \sum_{i \in S} \langle X'_i, \mu_i \rangle + \sum_{i \in N \backslash S} \langle X'_i, \mu_i \rangle \\
        &\geq \sum_{i \in S} \left(1 - \frac{(n - |S|)\alpha}{2bn\sum_{k'} X_{Sk'}}\right) \cdot \langle X_i, \mu_i \rangle + \sum_{i \in N \backslash S} \langle X_i, \mu_i \rangle \\
        &\geq \sum_{i \in S} \left(1 - \frac{\alpha}{2b\sum_{k'} X_{Sk'}}\right) \cdot \langle X_i, \mu_i \rangle + \sum_{i \in N \backslash S} \langle X_i, \mu_i \rangle \\
        &= \sum_{i \in [N]} \langle X_i, \mu_i \rangle  -  \frac{\alpha}{2b\sum_{k'} X_{Sk'}}\sum_{i \in S}    \langle X_i, \mu_i \rangle\\
        &= \sum_{i \in [N]} \langle X_i, \mu_i \rangle  -  \frac{\alpha}{2b\sum_{k'} X_{Sk'}}\sum_{i \in S}  \sum_{k} X_{Sk}\mu_{ik} \\
        &\ge \sum_{i \in [N]} \langle X_i, \mu_i \rangle  -  \frac{\alpha}{2\sum_{k'} X_{Sk'}}\sum_{i \in S}  \sum_{k} X_{Sk}\\
        &= \sw(X, \mu) - \frac{\alpha |S|}{2} \\
        &\geq \sw(X, \mu) - \frac{\alpha n}{2}.
    \end{align*}
    This implies that
    \[
        \sw(X, \mu) - \sw(X', \mu) \leq \frac{\alpha n}{2}.
    \]
\end{proof}

\begin{lemma}\label{lem:cycle-shift-less-edges}
    Let $X$ be an envy-free fractional allocation for $\mu$ and let $G = \opzero(\mu, X, \alpha)$ with edge set $E$. Let $C = \opfour(G, \mathcal{S}(X))$ and suppose there exists a node $u$ such that for $i, i' \in V(C)$, $(u, i) \in E$ and $(u, i') \notin E$. Let $X' = \optwo(C, X)$ and let $G' = \opzero(\mu, X', \alpha/2)$ with edge set $E'$. Then $|E'| < |E|$. 
\end{lemma}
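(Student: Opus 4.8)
The plan is to prove the stronger bound $|E'|\le|E|-1$ by constructing an explicit injection $\phi\colon E'\to E$ and then exhibiting an edge of $E$ outside its image. The computational engine is a single slack identity. Extend $\nextnode$ to all of $N$ by declaring $\nextnode(i)=i$ for $i\notin V(C)$, and for an allocation $Z$ write $w_Z(i\to j)=\langle Z_i,\mu_i\rangle-\langle Z_j,\mu_i\rangle$, so that $(i,j)$ is an edge of $\opzero(\mu,Z,\beta)$ precisely when $w_Z(i\to j)<\beta$. Since $X'_i=\tfrac12(X_i+X_{\nextnode(i)})$ for every $i$, expanding the inner products gives, for every ordered pair $(p,q)$,
\[
 w_{X'}(p\to q)=\tfrac12\bigl(w_X(p\to q)+w_X(p\to\nextnode(q))-w_X(p\to\nextnode(p))\bigr).
\]
I will lean on two structural facts: (a) $X$ is envy-free, so $w_X(i\to j)\ge0$ for all $i,j$; and (b) by definition of $\opfour$, if $p\in V(C)$ lies in equivalence class $S$ then $(p,\nextnode(p))\in E_S$, hence $w_X(p\to\nextnode(p))=\min_{e\in\graphout(S)}w_e\le w_X(p\to j)$ for every $j\notin S$ (for $j$ with $(p,j)\in E$ this is minimality; for $j$ with $(p,j)\notin E$ it holds since the minimum is $<\alpha$), while $w_X(p\to\nextnode(p))=0$ when $p\notin V(C)$.

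\emph{Definition and correctness of $\phi$.} Call $(p,q)\in E'$ a \emph{reverse edge} if $p,q\in V(C)$ and $\nextnode(q)=p$. Set $\phi(p,q)=(q,p)$ on reverse edges and $\phi(p,q)=(p,\nextnode(q))$ otherwise. For a reverse edge, $(q,p)$ is an edge of the cycle $C$, hence in $E$. For a non-reverse edge, $w_{X'}(p\to q)<\alpha/2$ and the identity give $w_X(p\to\nextnode(q))<\alpha+w_X(p\to\nextnode(p))-w_X(p\to q)$, and facts (a) and (b)---splitting on whether $p\in V(C)$ and whether $q$ shares $p$'s class---bound the right-hand side by $\alpha$; since $\nextnode(q)\ne p$ for a non-reverse edge, this means $(p,\nextnode(q))\in E$. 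Injectivity of $\phi$: the extended $\nextnode$ is a bijection of $N$, so $\phi$ is injective on the non-reverse edges; its values on the reverse edges are edges of $C$, which the generic rule never outputs, and a one-line computation rules out collisions across the two branches.

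\emph{An edge outside the image.} Let $u,i,i'$ be as in the hypothesis. The set of cycle nodes to which $u$ has an out-edge in $G$ is nonempty and proper, so walking around $C$ produces a node $i_\ell\in V(C)$ with $(u,i_\ell)\in E$ but $(u,\prevnode(i_\ell))\notin E$; if $u\in V(C)$ we may in addition take $i_\ell\ne\nextnode(u)$, the single configuration in which this is impossible being handled by a minor variation. Put $e^\star=(u,i_\ell)\in E$. Applying the identity to the pair $(u,\prevnode(i_\ell))$, together with $(u,\prevnode(i_\ell))\notin E$, envy-freeness, and---when $u\in V(C)$---fact (b) applied to $i_\ell$ (which is outside $u$'s class), yields $w_{X'}(u\to\prevnode(i_\ell))\ge\alpha/2$, i.e.\ $(u,\prevnode(i_\ell))\notin E'$. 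The only possible $\phi$-preimages of $e^\star$ are the generic preimage $(u,\prevnode(i_\ell))$ and, when $u\in V(C)$ with $\nextnode(u)=i_\ell$, the reverse edge $(i_\ell,u)$; the former is not in $E'$ and the latter is excluded by our choice of $i_\ell$. Hence $e^\star\notin\phi(E')$, so $|E'|=|\phi(E')|\le|E|-1<|E|$.

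\emph{The main obstacle.} In contrast with $\opone$ (Lemma~\ref{lem:remove-incoming-edge-less-edges}), the operation $\optwo$ genuinely \emph{can} create new edges: averaging a node with its cycle-successor can draw a previously distant node within the threshold $\alpha/2$, and this happens exactly for the reverse edges. Consequently there is no ``no new edges'' statement to lean on; the heart of the argument is the bookkeeping that charges each newly created reverse edge $(p,q)$ to the genuine cycle edge $(q,p)$ alongside which it appears---i.e.\ getting the injectivity and the disjointness of the two branches of $\phi$ exactly right---and then carefully pushing the slack identity through the degenerate cases in which an endpoint lies in $V(C)$ or two endpoints share an equivalence class.
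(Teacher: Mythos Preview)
Your approach differs substantively from the paper's, and in one respect is more careful. The paper argues by first claiming $E'\subseteq E$ and then exhibiting one edge of $E\setminus E'$. As you correctly flag in your final paragraph, the containment $E'\subseteq E$ can fail: a reverse edge $(p,\prevnode(p))$ with $p\in V(C)$ may lie in $E'$ without lying in $E$ (in the paper's first displayed case the line $\langle X'_{i'},\mu_i\rangle=\langle X_{i'},\mu_i\rangle$ is simply wrong when $i'\in V(C)$). Your injection $\phi$, which charges each reverse edge to its companion cycle edge, is exactly the right bookkeeping device; your slack identity is correct, and the verification that $\phi$ lands in $E$ and is injective goes through as written.

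The gap is in producing $e^\star$. When $u\in V(C)$ you need $i_\ell\ne\nextnode(u)$, and you defer the configuration where no such $i_\ell$ exists to ``a minor variation''. That configuration is precisely when $A=\{j\in V(C):(u,j)\in E\}$ is a \emph{single} contiguous arc of the cycle---necessarily beginning at $\nextnode(u)$, since that cycle edge is always present---and it is not a knife-edge case but the generic single-arc picture for every $|V(C)|\ge 3$ and every arc length. In it your only entry boundary is $i_\ell=\nextnode(u)$, so $e^\star=(u,\nextnode(u))$ is a cycle edge and its reverse preimage is $(\nextnode(u),u)$; the identity gives
\[
w_{X'}\bigl(\nextnode(u)\to u\bigr)=\tfrac12\Bigl(w_X\bigl(\nextnode(u)\to u\bigr)-w_X\bigl(\nextnode(u)\to\nextnode^2(u)\bigr)\Bigr),
\]
which need not reach $\alpha/2$, so this $e^\star$ can lie in $\phi(E')$. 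Switching to the exit boundary $i^*$ (the paper's choice, with $(u,i^*)\in E$ and $(u,\nextnode(i^*))\notin E$) does yield $(u,i^*)\in E\setminus E'$, but its non-reverse preimage $(u,\prevnode(i^*))$ may well lie in $E'$, so $(u,i^*)$ is not obviously outside $\phi(E')$ either. The variation is therefore not minor: this case needs its own argument, and you should write it out explicitly.
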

\begin{proof}
    We first show that if an edge $e \notin E$, then $e \notin |E'|$. Suppose that $i \in V(C), i' \in N$, and edge $(i, i') \notin E$. Then
    \begin{align*}
        \langle X'_{i}, \mu_{i} \rangle - \langle X'_{i'}, \mu_{i} \rangle &= \frac{1}{2}\langle X_{i}, \mu_{i} \rangle + \frac{1}{2}\langle X_{\nextnode(i)}, \mu_{i} \rangle - \langle X_{i'}, \mu_{i} \rangle \\
        &\geq \frac{1}{2}\langle X_{i}, \mu_{i} \rangle + \frac{1}{2}\langle X_{i'}, \mu_{i} \rangle - \langle X_{i'}, \mu_{i} \rangle \\
        &= \frac{1}{2}(\langle X_{i}, \mu_{i} \rangle - \langle X_{i'}, \mu_{i} \rangle) \numberthis \label{eq:cycle-shift-less-edges-eq-1} \\
        &\geq \frac{\alpha}{2}.
    \end{align*}
    Now, suppose that $i, i' \notin V(C)$ and edge $(i, i') \notin E$. Then 
    \[
        \langle X'_{i}, \mu_{i} \rangle - \langle X'_{i'}, \mu_{i} \rangle = \langle X_{i}, \mu_{i} \rangle - \langle X_{i'}, \mu_{i} \rangle \geq \alpha. \numberthis \label{eq:cycle-shift-less-edges-eq-2}
    \]
    Finally, suppose that $i \in V(C), i' \in N \backslash V(C)$, and edge $(i', i) \notin E$. Then 
    \begin{align*}
        \langle X'_{i'}, \mu_{i'} \rangle - \langle X'_{i}, \mu_{i'} \rangle &= \langle X_{i'}, \mu_{i'} \rangle - \frac{1}{2}\langle X_{i}, \mu_{i'} \rangle - \frac{1}{2}\langle X_{\nextnode(i)}, \mu_{i'} \rangle \\
        &= \frac{1}{2}(\langle X_{i'}, \mu_{i'} \rangle - \langle X_{i}, \mu_{i'} \rangle) + \frac{1}{2}(\langle X_{i'}, \mu_{i'} \rangle - \langle X_{\nextnode(i)}, \mu_{i'} \rangle) \numberthis \label{eq:cycle-shift-less-edges-eq-3} \\
        &\geq \frac{1}{2}(\alpha) + \frac{1}{2}(0) \\
        &= \frac{\alpha}{2}.
    \end{align*}

    We have shown that if $e \in E'$, then $e \in E$. Now we will show that there exists at least one edge $e$ such that $e \in E$, but $e \notin E'$. Consider the node $u$ described in the lemma statement, and let $i \in V(C)$ be some node such that $(u, i) \in E$ but $(u, \nextnode(i)) \notin E$. Suppose that $u \notin V(C)$. Then
    \begin{align*}
        \langle X'_{u}, \mu_{u} \rangle - \langle X'_{i}, \mu_{u} \rangle &= \langle X_{u}, \mu_{u} \rangle - \frac{1}{2}\langle X_{i}, \mu_{u} \rangle - \frac{1}{2}\langle X_{\nextnode(i)}, \mu_{u} \rangle \\
        &= \frac{1}{2}(\langle X_{u}, \mu_{u} \rangle - \langle X_{i}, \mu_{u} \rangle) + \frac{1}{2}(\langle X_{u}, \mu_{u} \rangle - \langle X_{\nextnode(i)}, \mu_{u} \rangle) \\
        &\geq \frac{1}{2}(\alpha) + \frac{1}{2}(0) \\
        &= \frac{\alpha}{2}.
    \end{align*}
    Suppose that $u \in V(C)$. Then
    \begin{align*}
        \langle X'_{u}, \mu_{u} \rangle - \langle X'_{i}, \mu_{u} \rangle &= \frac{1}{2}\langle X_{u}, \mu_{u} \rangle + \frac{1}{2}\langle X_{\nextnode(u)}, \mu_{u} \rangle - \frac{1}{2}\langle X_{i}, \mu_{u} \rangle - \frac{1}{2}\langle X_{\nextnode(i)}, \mu_{u} \rangle \\
        &= \frac{1}{2}(\langle X_{u}, \mu_{u} \rangle - \langle X_{i}, \mu_{u} \rangle) + \frac{1}{2}(\langle X_{\nextnode(u)}, \mu_{u} \rangle - \langle X_{\nextnode(i)}, \mu_{u} \rangle) \\
        &\geq \frac{1}{2}(\alpha) + \frac{1}{2}(0) \\
        &= \frac{\alpha}{2}.
    \end{align*}
    This implies that $(u, i) \notin E'$, as desired.
\end{proof}

\begin{lemma}\label{lem:cycle-shift-no-envy}
    Let $X$ be an envy-free fractional allocation for $\mu$ and let $G = \opzero(\mu, X, \alpha)$ with edge set $E$. Let $C = \opfour(G, \mathcal{S}(X))$ and let $X' = \optwo(C, X)$.  Then $X'$ is an envy-free allocation for $\mu$.
\end{lemma}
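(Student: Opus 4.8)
The plan is to verify the envy-freeness inequality $\langle X'_i,\mu_i\rangle \ge \langle X'_{i'},\mu_i\rangle$ for every ordered pair $(i,i')$ by a case analysis on membership in $V(C)$, in the style of the proof of Lemma~\ref{lem:cycle-shift-less-edges}. Recall that $X'_j = \tfrac12(X_j + X_{\nextnode(j)})$ for $j\in V(C)$ and $X'_j = X_j$ otherwise, so $\langle X'_{i'},\mu_i\rangle$ is always a convex combination $\sum_j \lambda_j\langle X_j,\mu_i\rangle$ over a support of at most two nodes $j$. The case $i\notin V(C)$ is immediate: if $i'\notin V(C)$ it is exactly envy-freeness of $X$; if $i'\in V(C)$, split $\langle X'_i,\mu_i\rangle-\langle X'_{i'},\mu_i\rangle = \tfrac12(\langle X_i,\mu_i\rangle-\langle X_{i'},\mu_i\rangle)+\tfrac12(\langle X_i,\mu_i\rangle-\langle X_{\nextnode(i')},\mu_i\rangle)$, both terms $\ge 0$ by envy-freeness of $X$ (cf.\ \eqref{eq:cycle-shift-less-edges-eq-1}--\eqref{eq:cycle-shift-less-edges-eq-3}).

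For $i\in V(C)$ I would write $\langle X'_i,\mu_i\rangle-\langle X'_{i'},\mu_i\rangle = \sum_j\lambda_j\big[\tfrac12(\langle X_i,\mu_i\rangle-\langle X_j,\mu_i\rangle)+\tfrac12(\langle X_{\nextnode(i)},\mu_i\rangle-\langle X_j,\mu_i\rangle)\big]$, summing over the support of $X'_{i'}$. The first bracket is always $\ge 0$ by envy-freeness of $X$. For the second the key tool is the defining property of the cycle returned by $\opfour$: $(i,\nextnode(i))$ is a minimum-weight edge of $\graphout(S)$, with $S$ the equivalence class of $i$, which forces $\langle X_j,\mu_i\rangle \le \langle X_{\nextnode(i)},\mu_i\rangle$ for every node $j$ lying in a class $\neq S$ (if $(i,j)\in E$, by minimality of $w_{(i,\nextnode(i))}$; if $(i,j)\notin E$, because $w_{(i,j)}\ge\alpha>w_{(i,\nextnode(i))}\ge 0$). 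Hence for every such $j$ both brackets are $\ge 0$, and the only troublesome nodes in a support are those with $X_j=X_i$, for which $\langle X_j,\mu_i\rangle=\langle X_i,\mu_i\rangle$. When such a $j$ occurs with weight $\lambda_j<1$ — which happens exactly when $i'\in V(C)$, where the support is $\{i',\nextnode(i')\}$ and at most one of those two nodes, namely $i$, can lie in $S$ (so necessarily $i'=\prevnode(i)$) — the resulting deficit $-\tfrac12 w_{(i,\nextnode(i))}$ from $j=i$ is absorbed by the cross-class node $\prevnode(i)$ in the same support, which via minimality contributes slack $\tfrac12 w_{(i,\prevnode(i))}\ge\tfrac12 w_{(i,\nextnode(i))}$ in its first bracket; adding the two contributions gives a nonnegative total.

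The \textbf{main obstacle} is the one remaining configuration: $\lambda_j=1$ for a troublesome node, i.e.\ a node $i'\notin V(C)$ with $X_{i'}=X_i$ for some cycle node $i$ (so $i$ lies in a non-singleton equivalence class of $X$ but is its lone representative in the cycle). Then no cross-class dilution is available and $\langle X'_i,\mu_i\rangle-\langle X'_{i'},\mu_i\rangle=-\tfrac12 w_{(i,\nextnode(i))}$, which is nonnegative only when $w_{(i,\nextnode(i))}=0$. Ruling this out is the real content of the lemma, and I would do so via the invariants of Algorithm~\ref{algo:adding-slack}: a non-singleton equivalence class can only have been produced by an earlier $\opthree$ step followed by $\opsix$, after which the members of that class have exactly zero outgoing envy, so the minimum-weight out-edge $\opfour$ routes through such a node has weight $0$; equivalently, whenever $\optwo$ is actually invoked one checks that every node of the cycle lies in a singleton class, so no such $i'$ exists. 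Pinning down this structural fact — rather than any of the arithmetic above — is where the work lies.
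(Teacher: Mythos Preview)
Your case analysis follows the paper's approach in spirit and is in fact more careful: the paper's proof simply cites Equation~\eqref{eq:cycle-shift-less-edges-eq-1} for the case $i\in V(C)$, $i'\in N$, tacitly assuming $\langle X_{\nextnode(i)},\mu_i\rangle\ge\langle X_{i'},\mu_i\rangle$, which as you observe fails precisely when $i'$ lies in $i$'s equivalence class but outside $V(C)$. So you have correctly isolated a configuration the paper's own argument does not cover.

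The trouble is that your proposed fix cannot work, because the lemma as stated is actually false. Take $n=m=3$, $\mu_1=\mu_2=(3,1,2)$, $\mu_3=(1,3,2)$, and $X_1=X_2=(\tfrac12,0,\tfrac13)$, $X_3=(0,1,\tfrac13)$. This $X$ is envy-free with $w_{(1,3)}=\tfrac12$ and $w_{(3,1)}=\tfrac52$; for $\alpha=3$ all cross-class edges lie in $E$, the min-weight out-edges give $E'=\{(1,3),(2,3),(3,1),(3,2)\}$, and $\opfour$ may legitimately return $C=1\to3\to1$. After $\optwo$, $X'_1=X'_3=\tfrac12(X_1+X_3)$ while $X'_2=X_1$, so $\langle X'_1,\mu_1\rangle-\langle X'_2,\mu_1\rangle=-\tfrac12\,w_{(1,3)}=-\tfrac14<0$: player~$1$ now envies player~$2$. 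Your invariant claims---that non-singleton classes arise only via $\opthree$, or that cycle nodes lie in singleton classes whenever $\optwo$ is invoked---are therefore not merely unproved but unprovable from the lemma's hypotheses; the first already fails for $Y^\mu$ whenever two players have identical mean vectors. A correct resolution must strengthen the lemma's hypotheses (restricting to the precise context in which Algorithm~\ref{algo:adding-slack} actually calls $\optwo$) or modify the operation; no arithmetic on the stated assumptions will close the gap.
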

\begin{proof}
    Let $G' = \opzero(\mu, X', \alpha/2)$ with edge set $E'$. It suffices to show that $w_e \geq 0$ for all $e \in E'$. By Lemma \ref{lem:cycle-shift-less-edges}, if an edge $(i, i') \notin E$, then 
    \[
        \langle X'_i, \mu_i \rangle - \langle X'_{i'}, \mu_i \rangle \geq \tfrac{\alpha}{2} \geq 0.
    \]
    Recall that by assumption, for any $i, i'$ such that $(i, i') \in E$, $\langle X_{i}, \mu_{i} \rangle - \langle X_{i'}, \mu_{i} \rangle \geq 0$. Suppose that $i \in V(C), i' \in N$, and $(i, i') \in E$. Then by Equation \eqref{eq:cycle-shift-less-edges-eq-1},
    \[
        \langle X'_{i}, \mu_{i} \rangle - \langle X'_{i'}, \mu_{i} \rangle \geq \frac{1}{2}(\langle X_{i}, \mu_{i} \rangle - \langle X_{i'}, \mu_{i} \rangle) \geq 0.
    \]
    Suppose instead that $i, i' \notin V(C)$ and edge $(i, i') \in E$. Then by Equation \eqref{eq:cycle-shift-less-edges-eq-2},
    \[
        \langle X'_{i}, \mu_{i} \rangle - \langle X'_{i'}, \mu_{i} \rangle = \langle X_{i}, \mu_{i} \rangle - \langle X_{i'}, \mu_{i} \rangle \geq 0.
    \]
    Finally, suppose that $i \in V(C), i' \in N \backslash V(C)$, and edge $(i', i) \in E$. Then by Equation \eqref{eq:cycle-shift-less-edges-eq-3},
    \begin{align*}
        \langle X'_{i'}, \mu_{i'} \rangle - \langle X'_{i}, \mu_{i'} \rangle &= \frac{1}{2}(\langle X_{i'}, \mu_{i'} \rangle - \langle X_{i}, \mu_{i'} \rangle) + \frac{1}{2}(\langle X_{i'}, \mu_{i'} \rangle - \langle X_{\nextnode(i)}, \mu_{i'} \rangle) \\
        &\geq \frac{1}{2}(0) + \frac{1}{2}(0) \\
        &= 0.
    \end{align*}
\end{proof}

\begin{lemma}\label{lem:cycle-shift-sw}
    Let $X$ be an envy-free fractional allocation for $\mu$ and let $G = \opzero(\mu, X, \alpha)$ with edge set $E$. Let $C = \opfour(G, \mathcal{S}(X))$ and let $X' = \optwo(C, X)$. Then 
    \[
        \sw(X, \mu) - \sw(X', \mu) \leq \frac{n \alpha}{2}.
    \]
\end{lemma}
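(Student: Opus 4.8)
The plan is a direct computation. First I would observe that $\optwo(C, X)$ modifies only the rows indexed by $V(C)$, so every node $i \in N \setminus V(C)$ has $\langle X'_i, \mu_i \rangle = \langle X_i, \mu_i \rangle$ and contributes nothing to $\sw(X,\mu) - \sw(X',\mu)$. Hence it suffices to control $\sum_{i \in V(C)} \bigl( \langle X_i, \mu_i \rangle - \langle X'_i, \mu_i \rangle \bigr)$. (If $\opfour$ returned $\emptyset$ then $X' = X$ and the bound is trivial, so I would assume $C$ is a genuine cycle.)

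Next, for each $i \in V(C)$ I would plug in the definition $X'_{ik} = \tfrac12(X_{ik} + X_{\nextnode(i)k})$ to get
\[
\langle X_i, \mu_i \rangle - \langle X'_i, \mu_i \rangle = \tfrac12 \langle X_i, \mu_i \rangle - \tfrac12 \langle X_{\nextnode(i)}, \mu_i \rangle = \tfrac12 \bigl( \langle X_i, \mu_i \rangle - \langle X_{\nextnode(i)}, \mu_i \rangle \bigr) = \tfrac12\, w_{(i,\nextnode(i))}.
\]

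The key step is bounding $w_{(i,\nextnode(i))}$. By definition $C = \opfour(G, \mathcal{S}(X))$ is a cycle of the graph $G' = (N, E')$ whose edge set $E' = \bigcup_S E_S$ consists of minimum-weight outgoing edges of the equivalence classes of $G = \opzero(\mu, X, \alpha)$; in particular $E' \subseteq E$, so every edge $(i, \nextnode(i))$ of $C$ lies in $E$. By the definition of $\opzero$, an edge $(i,i') \in E$ exactly when $\langle X_i, \mu_i\rangle - \langle X_{i'}, \mu_i\rangle < \alpha$, i.e. $w_{(i,\nextnode(i))} < \alpha$. (Envy-freeness of $X$ also gives $w_{(i,\nextnode(i))} \ge 0$, though that is not needed for this bound.)

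Finally, summing over $i \in V(C)$ and using $|V(C)| \le n$,
\[
\sw(X,\mu) - \sw(X',\mu) = \sum_{i \in V(C)} \tfrac12\, w_{(i,\nextnode(i))} \le |V(C)| \cdot \tfrac{\alpha}{2} \le \frac{n\alpha}{2},
\]
which is the claim. I do not expect a real obstacle here: the only point requiring a little care is confirming $E' \subseteq E$, so that each edge traversed by the cycle is genuinely a weight-below-$\alpha$ edge of the original envy-with-slack graph; everything else is the one-line computation above.
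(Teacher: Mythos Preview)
Your proposal is correct and is essentially the same argument as the paper's: both split off $N\setminus V(C)$, use $\langle X_i,\mu_i\rangle-\langle X'_i,\mu_i\rangle=\tfrac12\,w_{(i,\nextnode(i))}$, bound each $w_{(i,\nextnode(i))}<\alpha$ via membership in $E$, and sum over $|V(C)|\le n$. Your explicit check that $E'\subseteq E$ (so that cycle edges lie in the slack graph) is the one point the paper leaves implicit, and it is straightforward as you note.
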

\begin{proof}
    Observe that 
    \begin{align*}
        \sw(X', \mu) &= \sum_{i \in V(C)} \langle X'_i, \mu_i \rangle + \sum_{i \in N \backslash V(C)} \langle X'_i, \mu_i \rangle \\
        &= \frac{1}{2} \cdot \sum_{i \in V(C)} (\langle X_i, \mu_i \rangle + \langle X_{\nextnode(i)}, \mu_i \rangle) + \sum_{i \in N \backslash V(C)} \langle X_i, \mu_i \rangle \\
        &\geq \frac{1}{2} \cdot \sum_{i \in V(C)} (\langle X_i, \mu_i \rangle + \langle X_i, \mu_i \rangle - \alpha) +  \sum_{i \in N \backslash V(C)} \langle X_i, \mu_i \rangle \\
        &= \sum_{i \in V(C)} \left(\langle X_i, \mu_i \rangle - \frac{\alpha}{2}\right) + \sum_{i \in N \backslash V(C)} \langle X_i, \mu_i \rangle \\
        &\geq \sum_{i \in V(C)} \left(\langle X_i, \mu_i \rangle - \frac{\alpha}{2}\right) + \sum_{i \in N \backslash V(C)} \left(\langle X_i, \mu_i \rangle - \frac{\alpha}{2}\right) \\
        &= \sum_{i \in N} \langle X_i, \mu_i \rangle - \frac{n \alpha}{2}
    \end{align*}
    This implies that
    \[
        \sw(X, \mu) - \sw(X', \mu) \leq \frac{n \alpha}{2}.
    \]
\end{proof}

\begin{lemma}\label{lem:average-clique-not-more-edges}
    \sloppy Let $X$ be an envy-free fractional allocation for $\mu$. Let $G = \opzero(\mu, X, \alpha)$ with edge set $E$, and let $Q$ be a clique in $G$. Let $X' = \opthree(Q, X)$ and $G' = \opzero(\mu, X', \alpha/n)$, with $E'$ the edge set of $G'$. Then $|E'| \leq |E|$. 
\end{lemma}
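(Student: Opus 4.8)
The plan is to establish the stronger claim $E' \subseteq E$, which immediately yields $|E'| \le |E|$. Equivalently, I would prove the contrapositive at the level of individual ordered pairs: if $(i,i') \notin E$, i.e. $\langle X_i, \mu_i \rangle - \langle X_{i'}, \mu_i \rangle \ge \alpha$, then $(i,i') \notin E'$, i.e. $\langle X'_i, \mu_i \rangle - \langle X'_{i'}, \mu_i \rangle \ge \alpha/n$. Since $\opthree(Q, X)$ replaces each row indexed by $Q$ with the $Q$-average and leaves every other row unchanged, the natural case split is by how many of $i, i'$ lie in $Q$.

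Two of the cases are immediate. If $i, i' \notin Q$ then $X'_i = X_i$ and $X'_{i'} = X_{i'}$, so the gap is unchanged and still $\ge \alpha \ge \alpha/n$. If $i, i' \in Q$ then, because $Q$ is a clique of $G$, we already have $(i,i') \in E$, so this subcase is vacuous. If $i \notin Q$ but $i' \in Q$, then $X'_i = X_i$ and $\langle X'_{i'}, \mu_i\rangle = \tfrac{1}{|Q|}\sum_{j \in Q}\langle X_j, \mu_i\rangle$; envy-freeness of $X$ gives $\langle X_i, \mu_i\rangle - \langle X_j, \mu_i\rangle \ge 0$ for every $j \in Q$, and the $j = i'$ summand is $\ge \alpha$, so averaging over $Q$ gives $\langle X'_i, \mu_i\rangle - \langle X'_{i'}, \mu_i\rangle \ge \alpha/|Q| \ge \alpha/n$.

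The remaining case $i \in Q$, $i' \notin Q$ is the crux, and I expect it to be the main obstacle: under the averaging, player $i$'s valuation of their own (new) allocation can decrease, and I must show it decreases by strictly less than $\alpha$ so that the $\alpha$-slack of the non-edge $(i,i')$ is not entirely consumed. The key point is that $Q$ is a clique, hence $(i,j) \in E$ for every $j \in Q$, i.e. $\langle X_i, \mu_i\rangle - \langle X_j, \mu_i\rangle < \alpha$ for all $j \in Q \setminus \{i\}$. Summing $\langle X_j,\mu_i\rangle$ over $j \in Q$ while keeping the $j=i$ term exact and bounding each of the other $|Q|-1$ terms via the clique inequality gives $\langle X'_i, \mu_i\rangle = \tfrac{1}{|Q|}\sum_{j \in Q}\langle X_j, \mu_i\rangle > \langle X_i, \mu_i\rangle - \tfrac{|Q|-1}{|Q|}\alpha$. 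Combining with $X'_{i'} = X_{i'}$ and $\langle X_i, \mu_i\rangle - \langle X_{i'}, \mu_i\rangle \ge \alpha$ yields $\langle X'_i, \mu_i\rangle - \langle X'_{i'}, \mu_i\rangle > \alpha - \tfrac{|Q|-1}{|Q|}\alpha = \alpha/|Q| \ge \alpha/n$. It is essential here to isolate the $j = i$ term: the cruder estimate $\langle X'_i,\mu_i\rangle \ge \langle X_i,\mu_i\rangle - \alpha$ would only give a nonnegative gap rather than one of size $\alpha/n$, which is precisely why the edge threshold in $G'$ must be relaxed from $\alpha$ to $\alpha/n$ (and why the factor $n \ge |Q|$ appears).
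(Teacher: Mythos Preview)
Your proposal is correct and follows essentially the same approach as the paper: you prove $E' \subseteq E$ by the identical four-case split on membership of $i$ and $i'$ in $Q$, using the clique property to bound the drop in $\langle X'_i,\mu_i\rangle$ when $i\in Q$ and envy-freeness to bound $\langle X'_{i'},\mu_i\rangle$ when $i'\in Q$. The isolation of the $j=i$ (respectively $j=i'$) summand to extract the $\alpha/|Q|$ slack is exactly the argument the paper gives.
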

\begin{proof}
    It suffices to show that if an edge $e \notin E$, then $e \notin |E'|$. If $i, i' \in Q$, then edge $(i, i')$ must be in $E$, as $Q$ is a clique. Suppose that $i, i' \in N \backslash Q$ and $(i,i') \not\in E$. Then
    \[
        \langle X'_{i}, \mu_{i} \rangle - \langle X'_{i'}, \mu_{i} \rangle = \langle X_{i}, \mu_{i} \rangle - \langle X_{i'}, \mu_{i} \rangle \geq \alpha.
    \]
    Now, suppose that $i \in Q, i' \in N \backslash Q$, and $(i, i') \notin E$. Then
    \begin{align*}
        \langle X'_{i}, \mu_{i} \rangle - \langle X'_{i'}, \mu_{i} \rangle &= \frac{1}{|Q|} \left(\sum_{i'' \in Q} \langle X_{i''}, \mu_{i} \rangle\right) - \langle X_{i'}, \mu_{i} \rangle \\
        &\geq \langle X_{i}, \mu_{i} \rangle - \alpha\left(\frac{|Q| - 1}{|Q|}\right) - \langle X_{i'}, \mu_{i} \rangle \\
        &\geq \alpha - \alpha\left(\frac{|Q| - 1}{|Q|}\right) \\
        &\geq \frac{\alpha}{n}.
    \end{align*}
    Finally, suppose that $i \in Q, i' \in N \backslash Q$, and $(i', i) \notin E$. Then
    \begin{align*}
        \langle X'_{i'}, \mu_{i'} \rangle - \langle X'_{i}, \mu_{i'} \rangle &= \langle X_{i'}, \mu_{i'} \rangle - \frac{1}{|Q|} \left(\sum_{i'' \in Q} \langle X_{i''}, \mu_{i'} \rangle\right) \\
        &= \frac{1}{|Q|} \left(\sum_{i'' \in Q} \langle X_{i'}, \mu_{i'} \rangle - \langle X_{i''}, \mu_{i'} \rangle\right) \\
        &= \frac{1}{|Q|} \left( \langle X_{i'}, \mu_{i'} \rangle - \langle X_{i}, \mu_{i'} \rangle + \sum_{\{i'' \in Q: i'' \neq i\} } \langle X_{i'}, \mu_{i'} \rangle - \langle X_{i''}, \mu_{i'} \rangle\right) \\
        &\geq \frac{1}{n} (\alpha + 0) \\
        &\geq \frac{\alpha}{n}.
    \end{align*}
\end{proof}

\begin{lemma}\label{lem:average-clique-low-envy}
    Let $X$ be an envy-free fractional allocation for $\mu$. Let $G = \opzero(\mu, X, \alpha)$ with edge set $E$, and let $Q$ be a clique in $G$. Let $X' = \opthree(Q, X)$. Then
    \[
        \langle X'_i, \mu_i \rangle - \langle X'_{i'}, \mu_i \rangle \geq -\alpha \quad \forall i, i' \in N.
    \]
\end{lemma}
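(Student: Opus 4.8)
The plan is a direct case analysis on the positions of $i$ and $i'$ relative to the clique $Q$, relying on only two facts: (i) $X$ is envy-free, so $\langle X_a, \mu_a\rangle \geq \langle X_b, \mu_a\rangle$ for all $a,b \in N$; and (ii) $Q$ is a clique in $\opzero(\mu,X,\alpha)$, so for every pair of distinct $a,b \in Q$ the edge $(a,b)$ is present, i.e.\ $\langle X_a, \mu_a\rangle - \langle X_b, \mu_a\rangle < \alpha$. Recall also that $X' = \opthree(Q,X)$ sets $X'_i = \frac{1}{|Q|}\sum_{i'' \in Q} X_{i''}$ for $i \in Q$ and leaves $X'_i = X_i$ for $i \notin Q$.

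If both $i, i' \in Q$, then $X'_i = X'_{i'}$ by definition of $\opthree$, so the difference is exactly $0 \geq -\alpha$. If $i, i' \notin Q$, then $X'_i = X_i$ and $X'_{i'} = X_{i'}$, and envy-freeness of $X$ gives $\langle X'_i, \mu_i\rangle - \langle X'_{i'}, \mu_i\rangle = \langle X_i, \mu_i\rangle - \langle X_{i'}, \mu_i\rangle \geq 0 \geq -\alpha$.

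The two mixed cases are where the averaging matters. If $i \in Q$ and $i' \notin Q$, then $\langle X'_i, \mu_i\rangle - \langle X'_{i'}, \mu_i\rangle = \frac{1}{|Q|}\sum_{i'' \in Q} \langle X_{i''}, \mu_i\rangle - \langle X_{i'}, \mu_i\rangle$. I lower-bound the average using the clique property: for each $i'' \in Q$ with $i'' \neq i$ the edge $(i, i'')$ is present, so $\langle X_{i''}, \mu_i\rangle > \langle X_i, \mu_i\rangle - \alpha$, while the $i'' = i$ term equals $\langle X_i, \mu_i\rangle$; hence the average is at least $\langle X_i, \mu_i\rangle - \alpha\frac{|Q|-1}{|Q|} \geq \langle X_i, \mu_i\rangle - \alpha$. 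Subtracting $\langle X_{i'}, \mu_i\rangle \leq \langle X_i, \mu_i\rangle$ (envy-freeness) yields the claimed bound $\geq -\alpha$. If instead $i \notin Q$ and $i' \in Q$, then $\langle X'_i, \mu_i\rangle - \langle X'_{i'}, \mu_i\rangle = \langle X_i, \mu_i\rangle - \frac{1}{|Q|}\sum_{i'' \in Q} \langle X_{i''}, \mu_i\rangle$, and envy-freeness of $X$ already gives $\langle X_i, \mu_i\rangle \geq \langle X_{i''}, \mu_i\rangle$ for every $i''$, so the average is at most $\langle X_i, \mu_i\rangle$ and the difference is nonnegative, hence $\geq -\alpha$.

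No step is a genuine obstacle; the only place requiring care is the $i \in Q$, $i' \notin Q$ case, where one must note that the clique edges $(i, i'')$ are oriented so as to control $\langle X_{i''}, \mu_i\rangle$ against exactly the valuation $\mu_i$ appearing in the inner product. As an alternative route, one could first invoke Lemma~\ref{lem:average-clique-not-more-edges} to handle every pair $(i,i') \notin E$ in one shot (those satisfy the stronger bound $\langle X'_i, \mu_i\rangle - \langle X'_{i'}, \mu_i\rangle \geq \alpha/n \geq -\alpha$ on $X'$), which reduces the remaining work to pairs $(i,i') \in E$ and essentially the same bookkeeping as above.
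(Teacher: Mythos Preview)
Your proof is correct and follows essentially the same case analysis as the paper's proof: both in $Q$ (difference zero), neither in $Q$ (envy-freeness of $X$), $i\in Q, i'\notin Q$ (average bounded below via the clique edges then subtract using envy-freeness), and $i\notin Q, i'\in Q$ (envy-freeness bounds the average above). The paper merges your second and fourth cases into a single ``self outside $Q$'' case, but the underlying inequalities are identical.
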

\begin{proof}
    First, suppose  $i' \in N \backslash Q$ and $i \in N$. Then
    \[
        \langle X'_{i'}, \mu_{i'} \rangle - \langle X'_{i}, \mu_{i'} \rangle \geq \min_{i'' \in N} \langle X_{i'}, \mu_{i'} \rangle - \langle X_{i''}, \mu_{i'} \rangle \geq 0.
    \] 
    Suppose instead that $i, i' \in Q$. Then 
    \[
        \langle X'_{i'}, \mu_{i'} \rangle - \langle X'_{i}, \mu_{i'} \rangle = 0.
    \]
    Finally, suppose that $i \in Q, i' \in N \backslash Q$. Then 
    \begin{align*}
        \langle X'_{i}, \mu_{i} \rangle - \langle X'_{i'}, \mu_{i} \rangle &= \frac{1}{|Q|} \left(\sum_{i'' \in Q} \langle X_{i''}, \mu_{i} \rangle\right) - \langle X_{i'}, \mu_{i} \rangle \\
        &\geq \langle X_{i}, \mu_{i} \rangle - \alpha\left(\frac{|Q| - 1}{|Q|}\right) - \langle X_{i'}, \mu_{i} \rangle \\
        &\geq 0 - \alpha\left(\frac{|Q| - 1}{|Q|}\right) \\
        &\geq -\alpha.
    \end{align*}
\end{proof}

\begin{lemma}\label{lem:average-clique-sw}
    Let $G = \opzero(\mu, X, \alpha)$ with edge set $E$, and let $Q$ be a clique in $G$. Let $X' = \opthree(Q, X)$. Then 
    \[
        \sw(X, \mu) - \sw(X', \mu) \leq n \cdot \alpha.
    \]
\end{lemma}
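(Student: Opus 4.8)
The plan is to exploit the fact that the operation $\opthree(Q, X)$ modifies only the rows indexed by $Q$, so the entire loss in social welfare localizes to those rows. Since $X'_i = X_i$ for every $i \in N \backslash Q$, we have
\[
    \sw(X, \mu) - \sw(X', \mu) = \sum_{i \in N} \big(\langle X_i, \mu_i \rangle - \langle X'_i, \mu_i \rangle\big) = \sum_{i \in Q} \big(\langle X_i, \mu_i \rangle - \langle X'_i, \mu_i \rangle\big),
\]
so it suffices to bound each summand by $\alpha$ and then use $|Q| \le n$.

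For a fixed $i \in Q$, I would substitute the definition $X'_{ik} = \frac{1}{|Q|}\sum_{i' \in Q} X_{i'k}$ to rewrite the per-node loss as an average of pairwise differences:
\[
    \langle X_i, \mu_i \rangle - \langle X'_i, \mu_i \rangle = \langle X_i, \mu_i \rangle - \frac{1}{|Q|}\sum_{i' \in Q}\langle X_{i'}, \mu_i \rangle = \frac{1}{|Q|}\sum_{i' \in Q}\big(\langle X_i, \mu_i \rangle - \langle X_{i'}, \mu_i \rangle\big).
\]
The key step is then to invoke the hypothesis that $Q$ is a clique of $G = \opzero(\mu, X, \alpha)$: for every $i' \in Q$ with $i' \ne i$ the directed edge $(i, i')$ is present, which by the definition of $\opzero$ is exactly the inequality $\langle X_i, \mu_i \rangle - \langle X_{i'}, \mu_i \rangle < \alpha$, while the term $i' = i$ contributes $0$. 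Hence the average above is at most $\frac{|Q| - 1}{|Q|}\alpha \le \alpha$, and summing over the (at most $n$) nodes of $Q$ yields $\sw(X,\mu) - \sw(X',\mu) \le n\alpha$.

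I do not expect a genuine obstacle here; this is the easiest of the three lemmas about $\opthree$. The only subtlety worth noting is that the individual differences $\langle X_i, \mu_i \rangle - \langle X_{i'}, \mu_i \rangle$ need not be nonnegative, but since we only want an upper bound on the welfare loss this causes no trouble, and in particular---unlike Lemmas \ref{lem:average-clique-not-more-edges} and \ref{lem:average-clique-low-envy}---no envy-freeness assumption on $X$ is needed.
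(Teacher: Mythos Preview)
Your proposal is correct and follows essentially the same approach as the paper: both localize the welfare loss to nodes in $Q$, use the clique property to bound each per-node loss $\langle X_i,\mu_i\rangle-\langle X'_i,\mu_i\rangle$ by $\alpha$ via the edge-defining inequality $\langle X_i,\mu_i\rangle-\langle X_{i'},\mu_i\rangle<\alpha$, and then sum over at most $n$ nodes. Your write-up is in fact slightly tighter (you obtain $\tfrac{|Q|-1}{|Q|}\alpha$ per node), and your observation that envy-freeness of $X$ is not needed here is correct and matches the paper, which likewise does not assume it in this lemma.
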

\begin{proof}
Observe that
    \begin{align*}
        \sw(X', \mu) &= \sum_{i \in Q} \langle X'_i, \mu_i \rangle + \sum_{i \in N \backslash Q} \langle X'_i, \mu_i \rangle \\
        & \geq \sum_{i \in Q} (\langle X_i, \mu_i \rangle - \alpha) + \sum_{i \in N \backslash Q} \langle X_i, \mu_i \rangle \\
        &\geq \sum_{i \in N} (\langle X_i, \mu_i \rangle - \alpha) \\
        &\geq \sw(X, \mu) - n \cdot \alpha.
    \end{align*}
Rearranging, this implies that
\[
    \sw(X, \mu) - \sw(X', \mu) \leq n \cdot \alpha.
\]
\end{proof}

\begin{lemma}\label{lem:remove-envy-edge}
    Let $G = \opzero(\mu, X, 0)$ with edge set $E$, and let $X' = \opsix(\mu, X)$. Further let $G' = \opzero(\mu, X', 0)$ with edge set $E'$. Then $|E'| < |E|$. 
\end{lemma}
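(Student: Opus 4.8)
The plan is to track the set of strict-envy edges $E_r := E(\opzero(\mu, X^r, 0))$ through the two phases of $\opsix(\mu, X)$ — the \emph{redistribution} loop of $\opfive$-steps, and (if it is reached) the final cycle-shift — and exhibit a net loss of at least one such edge. (We may assume $E \neq \emptyset$, which is always the case when $\opsix$ is invoked inside Algorithm~\ref{algo:adding-slack}.) First I would prove a monotonicity lemma: a single step $X^{r+1} = \opfive(U, \beta_{i^*}, X^r)$ creates no new strict-envy edge, i.e.\ $E_{r+1}\subseteq E_r$. The proof is a case analysis on the ordered pair $(i,i')$: if $i,i'\in U$ the gap $\langle X_i,\mu_i\rangle - \langle X_{i'},\mu_i\rangle$ is scaled by the positive factor $1 - |N\setminus U|\beta_{i^*}$ (positivity following from $\beta_{i^*}\le\beta_u$), so its sign is unchanged; if $i,i'\notin U$ the gap is unchanged; if $i\notin U,\,i'\in U$ the gap only increases; and if $i\in U,\,i'\notin U$ the gap can decrease, but the choice $\beta_{i^*} = \min_{i\in\{u\}\cup U}\beta_i$ makes the step small enough that the gap never crosses strictly below $0$ where it was nonnegative. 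Iterating, $E_r\subseteq E$ throughout the loop, so it suffices to analyze how the loop exits.

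\textbf{Case 1 (loop exits with $w_{(u,v)}\ge 0$):} here $X' = X^r$; since $(u,v)$ was chosen with $w_{(u,v)} < 0$ we have $(u,v)\in E$, while $w_{(u,v)}\ge 0$ for $X^r$ gives $(u,v)\notin E' = E_r$, so $|E'| \le |E| - 1$.

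\textbf{Case 2 (loop exits with a cycle $C$ of $\opzeroenvy(\mu, X^r)$ through $(u,v)$, and $X'_i = X^r_{\nextnode(i)}$ on $V(C)$):} I would first record that $w_{(u,v)} < 0$ here (so $(u,v)\in E_r$ and $\nextnode(u)=v$ in $C$), and that every edge $(i,\nextnode(i))$ of $C$ lies in $\opzeroenvy(\mu, X^r)$, whence $\langle X'_i,\mu_i\rangle = \langle X^r_{\nextnode(i)},\mu_i\rangle \ge \langle X^r_i,\mu_i\rangle$ for all $i\in V(C)$, strictly for $i = u$. Then I would count the strict-envy edges of $X'$ by the type of pair: (i) both endpoints outside $V(C)$ — unchanged; (ii) tail outside $V(C)$, head in $V(C)$ — for a fixed tail its bundle is unchanged and, as $\nextnode$ permutes $V(C)$, the number of envied heads in $V(C)$ is exactly as before; (iii) tail in $V(C)$, head outside — the head's bundle is unchanged and the tail's own value weakly rose, so weakly fewer such edges; (iv) both endpoints in $V(C)$ — the same ``reindex the heads by $\nextnode$, then use that the tail's own value rose'' argument gives weakly fewer cycle nodes envied by each tail, and \emph{strictly} fewer for the tail $u$, because $u$'s own value rose to $\langle X^r_v,\mu_u\rangle$ and thereby drops $v$ out of the set of cycle nodes $u$ envies. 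Summing the four buckets, $|E'| < |E_r| \le |E|$.

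\textbf{Main obstacle.} The hard part is Case 2, and specifically buckets (ii) and (iv): one has to notice that because the shift is a cyclic \emph{permutation} of bundles along $C$, the ``head in $C$'' contributions transform through the bijection $\nextnode$ — exactly invariant when the tail lies outside $C$, and monotone (using that no cycle node's own value dropped) when the tail lies inside — so that the single strict improvement at $u$ cannot be cancelled in any other bucket. The remaining ingredients — the $\opfive$ case analysis, positivity of $1 - |N\setminus U|\beta_{i^*}$, and termination of the redistribution loop — are routine.
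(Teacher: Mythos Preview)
Your proposal is correct and follows essentially the same strategy as the paper: show that each $\opfive$ step preserves (set-wise) the strict-envy edges via the minimality of $\beta_{i^*}$, then split on the two ways the while-loop terminates, using in the cycle case the fact that the shift is a permutation of bundles together with the weak (strict at $u$) increase of each cycle node's own utility. Your bucketed counting argument in Case~2 is in fact more careful than the paper's terser claim that ``no positive envy is introduced'' after the cycle shift---that claim is really a count (out-degree) statement rather than a set containment, which is exactly what your buckets (ii)--(iv) make explicit.
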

\begin{proof}
    First, we show that if an edge $e \notin E$, then $e \notin E'$. In other words, we will show that no new edges with positive envy (negative weight) are added. Observe that within the while loop, $\opsix$ distributes $\beta_{i^*}$ from a set $U$ to $N \backslash U$. If at the beginning of the while loop $\exists i \in U$ such that $w_{i, i'} < 0$ for some $i' \in N \backslash U$, then $\beta_{i^*} = 0$ and $i'$ is added to $U$. Otherwise, by definition $\beta_{i^*}$ is at most the minimum fraction that the set $U$ needs to give away in order to create a new $0$ envy edge between any node in $U$ and a node $i' \in N \backslash U$. Therefore, $\opfive$ cannot create a new edge with positive envy by our choice of $\beta_{i^*}$, which implies that no new edge with positive envy could have been created by the end of the while loop. Now, suppose that $w_{(u, v)} < 0$ at the end of the while loop. Then there is a cycle $C$ in $G^r$ containing $(u, v)$. Then for every $i$, $\langle X'_i, \mu_i \rangle \geq \langle X^r_i, \mu_i \rangle$. The total set of allocations has not changed, so no positive envy is introduced.

    Now, we show that there exists an edge $e \in E$ such that $e \notin E'$. That is, we show that we have removed some edge with positive envy. If $w_{(u, v)} = 0$ at the end of the while loop, then $(u,v) \in E$ and $(u,v) \not\in E'$ so we are done. Otherwise, there is a cycle in $G^r$ containing $(u, v)$. As the overall set of allocations has not changed and $\langle X_v, \mu_u \rangle - \langle X_u, \mu_u \rangle > 0$, we must have 
    \[
    \big | i \in V(C) \text{ s.t. } \langle X'_i, \mu_u \rangle - \langle X'_u, \mu_u \rangle < 0 \big | < \big | i \in V(C) \text{ s.t. } \langle X_i, \mu_u \rangle - \langle X_u, \mu_u \rangle < 0 \big |.
    \]
    Therefore, there exists some edge $e = (u, i)$ for $i \in V(C)$ such that $e \in E$ but $e \notin E'$. 
\end{proof}

\begin{lemma}\label{lem:remove-envy-bounds}
    Let $G = \opzero(\mu, X, 0)$ with edge set $E$. Suppose that
    \[
        -\tfrac{\alpha}{4bn^3} \leq \langle X_i, \mu_i \rangle - \langle X_{i'}, \mu_i \rangle \quad \forall i, i' \in N. 
    \]
    Let $X^0 = X$ and define $X^{\ell} = \opsix(\mu, X^{\ell - 1})$. Then for all $i, i', \ell$,
    \[
        -\tfrac{\alpha}{4n^2} \leq \langle X^{\ell}_{i}, \mu_{i'} \rangle - \langle X^{\ell - 1}_i, \mu_{i'} \rangle \leq \tfrac{\alpha}{4n^3}.
    \]
\end{lemma}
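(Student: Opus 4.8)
The plan is to decompose a single call $X^{\ell}=\opsix(\mu,X^{\ell-1})$ into its two phases --- the \textbf{while loop}, which produces a sequence of $\opfive$-transfers $X^{\ell-1}=Z^0\to Z^1\to\cdots\to Z^R$, and, if it occurs, the final \textbf{cycle shift} $Z^R\to X^{\ell}$ --- and to bound the change of each $\langle X_i,\mu_{i'}\rangle$ contributed by each phase, then add. Throughout I use the normalization $a=1$ of this section, so that $\sum_k Z^r_{jk}\le\langle Z^r_j,\mu_{i'}\rangle\le b\sum_k Z^r_{jk}\cdot\tfrac{1}{1}$ holds with constants $1$ and $b$, the device that converts ``mass'' into ``value'' and vice versa.

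For the while loop I would first record two structural facts. (i) A fresh node of $N\setminus U$ enters $U$ at every iteration that does not terminate the loop, so the loop runs at most $n$ times and in particular $v\in U_r$ and $u\in N\setminus U_r$ at every non-terminal iteration. (ii) The envy gap $g_r:=\langle Z^r_v,\mu_u\rangle-\langle Z^r_u,\mu_u\rangle$ on the triggering edge $(u,v)$ is non-increasing, since each $\opfive(U_r,\beta^{(r)},\cdot)$ with $v\in U_r,\ u\notin U_r$ strictly raises $\langle Z_u,\mu_u\rangle$ and strictly lowers $\langle Z_v,\mu_u\rangle$; hence $0<g_r\le g_0\le\frac{\alpha}{4bn^3}$ by hypothesis. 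The core estimate is a telescoping bound: at iteration $r$ the gap drops by exactly $\beta^{(r)}\big(\sum_{j\in U_r}\langle Z^r_j,\mu_u\rangle+|N\setminus U_r|\,\langle Z^r_v,\mu_u\rangle\big)$, so summing over $r$ gives $\sum_r\beta^{(r)}\sum_{j\in U_r}\langle Z^r_j,\mu_u\rangle\le g_0\le\frac{\alpha}{4bn^3}$, and therefore, for \emph{any} $i'$, $\sum_r\beta^{(r)}\sum_{j\in U_r}\langle Z^r_j,\mu_{i'}\rangle\le b\cdot\frac{\alpha}{4bn^3}=\frac{\alpha}{4n^3}$. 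Since a fixed slot $i$ only gains value (by $\beta^{(r)}\sum_{j\in U_r}\langle Z^r_j,\mu_{i'}\rangle$) while $i\in N\setminus U_r$ and only loses value (by $|N\setminus U_r|\,\beta^{(r)}\langle Z^r_i,\mu_{i'}\rangle\le n\,\beta^{(r)}\sum_{j\in U_r}\langle Z^r_j,\mu_{i'}\rangle$) while $i\in U_r$, and these two regimes occur in that order, we get $-\frac{\alpha}{4n^2}\le\langle Z^R_i,\mu_{i'}\rangle-\langle Z^0_i,\mu_{i'}\rangle\le\frac{\alpha}{4n^3}$ for all $i,i'$; the same sum also yields the transport bound $\norm{Z^R-Z^0}_1\le\frac{\alpha}{2bn^2}$.

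I would also prove the loop invariant $\langle Z^r_i,\mu_i\rangle-\langle Z^r_{i'},\mu_i\rangle\ge-\frac{\alpha}{4bn^3}$ for all $r,i,i'$, by a short case analysis on whether each of $i,i'$ lies in $U_r$: $\opfive$ scales all $U_r$-rows by the same factor, shifts all $(N\setminus U_r)$-rows by a common vector, and --- because $\beta^{(r)}=\min_{i\in\{u\}\cup U_r}\beta_i$ is chosen as the \emph{smallest} fraction at which any $i\in U_r$ reaches parity with $N\setminus U_r$ --- never creates an edge from $U_r$ to $N\setminus U_r$ of negative weight, while edges within $U_r$, within $N\setminus U_r$, or from $N\setminus U_r$ into $U_r$ only move toward $0$ or stay fixed. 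Applied to $Z^R$, this invariant makes every cycle edge $(i,\nextnode(i))$ of the cycle $C$ satisfy $0\le\langle Z^R_{\nextnode(i)},\mu_i\rangle-\langle Z^R_i,\mu_i\rangle\le\frac{\alpha}{4bn^3}$, so setting $X^{\ell}_i=Z^R_{\nextnode(i)}$ changes player $i$'s valuation of slot $i$ by at most $\frac{\alpha}{4bn^3}\le\frac{\alpha}{4n^3}$ and never decreases it. The remaining --- and I expect hardest --- point is the \emph{cross term} $\langle Z^R_{\nextnode(i)},\mu_{i'}\rangle-\langle Z^R_i,\mu_{i'}\rangle$ for $i'\neq i$, which envy structure alone does not control; the approach I would take is to show that every cycle edge other than the triggering $(u,v)$ is a zero-weight (indifference) edge \emph{created} during the loop --- a cycle through $(u,v)$ appears in $\opzeroenvy(\mu,Z^r)$ only once the path $v\rightsquigarrow u$ has been assembled from the zero-weight edges produced by the $\beta_{i^*}$-transfers --- so that the cyclic permutation only swaps allocation rows that were made mutually value-equivalent during the loop, and then to combine this with the transport bound $\norm{Z^R-Z^0}_1\le\frac{\alpha}{2bn^2}$ to confine every $\langle\cdot,\mu_{i'}\rangle$ to the stated window. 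Adding the while-loop and cycle-shift contributions, and noting that each earlier $\opsix$ call only tightens the hypothesis gap (so the invariant persists for all $\ell$), finishes the proof.
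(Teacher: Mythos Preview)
Your while–loop analysis is essentially the paper's argument, only phrased as a telescoping sum rather than ``$u$ is the coordinate-wise maximal gainer.'' The paper first shows by induction that the envy bound $-\tfrac{\alpha}{4bn^3}\le\langle X^{\ell}_i,\mu_i\rangle-\langle X^{\ell}_{i'},\mu_i\rangle$ persists for every $\ell$ (your loop invariant), then observes that $u$ stays in $N\setminus U$ and $v$ in $U$, so $u$'s own-value gain is capped by its initial envy toward $v$, i.e.\ by $\tfrac{\alpha}{4bn^3}$; since $u$ receives at least as much as any node coordinate-wise, every other $\langle X_i,\mu_{i'}\rangle$ can rise by at most $b\cdot\tfrac{\alpha}{4bn^3}=\tfrac{\alpha}{4n^3}$, and can fall by at most $n$ times that. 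That matches your upper and lower bounds on $\langle Z^R_i,\mu_{i'}\rangle-\langle Z^0_i,\mu_{i'}\rangle$ exactly.

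The genuine gap is in your cycle-shift treatment, and neither of the two devices you propose closes it. The ``zero-weight indifference edge'' observation gives only $\langle Z^R_{\nextnode(i)},\mu_i\rangle=\langle Z^R_i,\mu_i\rangle$, i.e.\ indifference \emph{in $i$'s own valuation}; it says nothing about $\langle Z^R_{\nextnode(i)},\mu_{i'}\rangle-\langle Z^R_i,\mu_{i'}\rangle$ for $i'\neq i$, which is precisely the cross term you need. And the transport bound $\norm{Z^R-Z^0}_1\le\tfrac{\alpha}{2bn^2}$ controls only the while-loop drift; the subsequent row permutation can make $\norm{X^{\ell}-Z^R}_1=\Theta(1)$, so it gives no control on $\langle X^\ell_i,\mu_{i'}\rangle-\langle Z^R_i,\mu_{i'}\rangle$. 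Concretely, two rows $Z^R_i$ and $Z^R_{\nextnode(i)}$ can be nearly identical in $\mu_i$-value yet differ by $\Theta(b)$ in $\mu_{i'}$-value, and the permutation then moves $\langle X_i,\mu_{i'}\rangle$ by that amount.

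It is worth noting that the paper's own proof does not separate the phases and does not explicitly address the cycle shift either: its coordinate-wise domination claim $(X^{\ell}_{uk}-X^{\ell-1}_{uk})\ge(X^{\ell}_{ik}-X^{\ell-1}_{ik})$ is justified only by ``$u$ is always a member of $N\setminus U$,'' a while-loop property that need not survive the final permutation (after the shift $u$ holds $Z^R_v$, while $\prevnode(u)$ holds $Z^R_u$). So you have correctly isolated the delicate step; but the fix you sketch---indifference edges plus an $\ell_1$ transport bound---does not resolve it, and a complete argument for the cross terms under the permutation is still missing from your proposal.
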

\begin{proof}
    We first prove by induction on $\ell$ that for all $\ell$,
    \[
        -\tfrac{\alpha}{4bn^3} \leq \langle X^{\ell}_i, \mu_i \rangle - \langle X^{\ell}_{i'}, \mu_i \rangle \quad \forall i, i' \in N. 
    \]
    The base case is true by assumption. Now consider any $\ell$. By the inductive hypothesis, we have that for all $i, i' \in N$, $-\tfrac{\alpha}{4bn^3} \leq \langle X^{\ell - 1}_i, \mu_i \rangle - \langle X^{\ell - 1}_{i'}, \mu_i \rangle$. Suppose for contradiction that there exists some $i, i' \in N$ such that $-\tfrac{\alpha}{4bn^3} > \langle X^{\ell}_i, \mu_i \rangle - \langle X^{\ell}_{i'}, \mu_i \rangle$. This means that the envy of $i$ for $i'$ must have increased to more than during $\tfrac{\alpha}{4bn^3}$ in the $\ell^{th}$ call to $\opsix$. The only way for the envy of $i$ for $i'$ to increase in $\opsix$ is if $i \in U$, $i' \in N \backslash U$, and $\beta_{i^*} > 0$. If $\langle X^{\ell - 1}_i, \mu_i \rangle - \langle X^{\ell - 1}_{i'}, \mu_i \rangle \leq 0$, then $\beta_{i^*} = 0$. Therefore, we must have $\langle X^{\ell - 1}_i, \mu_i \rangle - \langle X^{\ell - 1}_{i'}, \mu_i \rangle > 0$. However, by our choice of $\beta_{i^*}$, $i'$ must then be added to $U$ before $i$ becomes envious of $i'$. Therefore, it is not possible for the envy of $i$ for $i'$ to have increased to more than $\tfrac{\alpha}{4bn^3}$ in the $\ell^{th}$ call to $\opsix$.
    
    We now prove the main lemma. Consider any $\ell$. One stopping condition of the while loop in $\opsix$ is when $w_{(u, v)} = \langle X^r_u, \mu_u \rangle - \langle X^r_v, \mu_u \rangle = 0$. Observe that $v \in U$ and $u \in N \backslash U$ for all iterations $r$ in the while loop. This implies that $u$'s allocation is always increasing, while $v$'s allocation is always decreasing. The increase in $u$'s utility is therefore at most $u$'s current envy towards $v$'s allocation in $X_i^{\ell-1}$, which is upper bounded by $\tfrac{\alpha}{4bn^3}$ by the induction proof above. Formally, 
    \begin{equation}\label{eq:remove-envy-bounds-eq-1}
        \langle X^{\ell}_u, \mu_u \rangle - \langle X^{\ell - 1}_u, \mu_u \rangle \leq \langle X^{\ell - 1}_v, \mu_u \rangle - \langle X^{\ell - 1}_u, \mu_u \rangle \leq \frac{\alpha}{4bn^3}.
    \end{equation}
    Furthermore, over the course of $\opsix$, the allocation of node $u$ is increased at least as much as that of any other node $i$, i.e.
    \[
        (X^{\ell}_{uk} - X^{\ell - 1}_{uk}) \geq (X^{\ell}_{ik} - X^{\ell - 1}_{ik}) \quad \forall i \in N, k \in [m].
    \]
    This is because $u$ is always a member of $N \backslash U$. Therefore, for any $i, i' \in N$,
    \[
        \langle X^{\ell}_i, \mu_{i'} \rangle - \langle X^{\ell - 1}_i, \mu_{i'} \rangle \leq b \cdot \frac{\alpha}{4bn^3} = \frac{\alpha}{4n^3},
    \]
    as $b$ is the largest possible utility ratio between two nodes.

    Again by applying Equation \eqref{eq:remove-envy-bounds-eq-1} and because $b$ is the largest possible utility ratio between two nodes, for any $i,i' \in [N]$, the utility of $i'$ for the allocation transferred from $i$ to $u$ is at most $b \cdot \tfrac{\alpha}{4bn^3} = \tfrac{\alpha}{4n^3}$. Node $i$ could have transferred to at most $n$ nodes during $\opsix$, which implies that node $i'$ has utility of at most $n \cdot \tfrac{\alpha}{4n^3} = \tfrac{\alpha}{4n^2}$ for all of the allocation transferred away from node $i$ during $\opsix$. Therefore,
    \[
        \langle X^{\ell}_i, \mu_{i'} \rangle - \langle X^{\ell - 1}_i, \mu_{i'} \rangle \geq - \frac{\alpha}{4n^2} \quad \forall \: i \in N.
    \]
\end{proof}

\begin{lemma}\label{lem:remove-envy-sw}
    Let $G = \opzero(\mu, X, 0)$ with edge set $E$. Suppose that
    \[
        -\tfrac{\alpha}{4bn^3} \leq \langle X_i, \mu_i \rangle - \langle X_{i'}, \mu_i \rangle \quad \forall i, i' \in N.
    \]
    Let $X^0 = X$ and define $X^{\ell} = \opsix(\mu, X^{\ell - 1})$. Then for all $i, i'$ and for all $\ell \leq n^2$,
    \[
        \sw(X^{0}, \mu) - \sw(X^{\ell}, \mu) \leq \tfrac{n\alpha}{4}.
    \]
\end{lemma}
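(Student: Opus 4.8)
The plan is to reduce the claim to a per-step telescoping bound driven entirely by Lemma~\ref{lem:remove-envy-bounds}. Recall that $\sw(X,\mu) = \langle X,\mu\rangle_F = \sum_{i\in N}\langle X_i,\mu_i\rangle$, so the total loss over $\ell$ applications of $\opsix$ decomposes as
\[
    \sw(X^0,\mu) - \sw(X^\ell,\mu) = \sum_{r=1}^{\ell}\bigl(\sw(X^{r-1},\mu) - \sw(X^{r},\mu)\bigr) = \sum_{r=1}^{\ell}\sum_{i\in N}\bigl(\langle X^{r-1}_i,\mu_i\rangle - \langle X^{r}_i,\mu_i\rangle\bigr).
\]
So it suffices to control each inner difference and then sum.

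First I would invoke Lemma~\ref{lem:remove-envy-bounds}, whose hypothesis $-\tfrac{\alpha}{4bn^3}\le \langle X_i,\mu_i\rangle - \langle X_{i'},\mu_i\rangle$ is exactly the hypothesis of the present lemma (applied to $X = X^0$). That lemma gives, for every $r$ and every $i,i'\in N$, the bound $\langle X^{r}_i,\mu_{i'}\rangle - \langle X^{r-1}_i,\mu_{i'}\rangle \ge -\tfrac{\alpha}{4n^2}$. Specializing to $i' = i$ yields $\langle X^{r-1}_i,\mu_i\rangle - \langle X^{r}_i,\mu_i\rangle \le \tfrac{\alpha}{4n^2}$ for each player $i$. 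Summing over the $n$ players, each single call to $\opsix$ decreases social welfare by at most $n\cdot\tfrac{\alpha}{4n^2} = \tfrac{\alpha}{4n}$.

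Finally I would plug this per-step estimate into the telescoping sum above: for $\ell \le n^2$,
\[
    \sw(X^0,\mu) - \sw(X^\ell,\mu) \le \sum_{r=1}^{\ell}\frac{\alpha}{4n} = \frac{\ell\,\alpha}{4n} \le \frac{n^2\alpha}{4n} = \frac{n\alpha}{4},
\]
which is the desired conclusion. There is no real obstacle here: the only subtlety is making sure the hypothesis needed to invoke Lemma~\ref{lem:remove-envy-bounds} is the one we are handed (it is, verbatim), and that Lemma~\ref{lem:remove-envy-bounds} already carries the inductive argument showing the envy-bound hypothesis propagates across all $\ell$, so we do not need to re-establish it. Everything else is a two-line summation.
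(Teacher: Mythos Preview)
Your proposal is correct and is essentially identical to the paper's own proof: both invoke Lemma~\ref{lem:remove-envy-bounds} (with $i'=i$) to bound the per-step social-welfare loss by $\tfrac{\alpha}{4n}$, then sum over at most $n^2$ steps. Your observation that Lemma~\ref{lem:remove-envy-bounds} already internalizes the induction needed for the hypothesis to persist across iterations is exactly the point.
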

\begin{proof}
    Consider some $\ell$. By Lemma \ref{lem:remove-envy-bounds} we know that 
    \[
        \langle X^{\ell}_i, \mu_i \rangle - \langle X^{\ell - 1}_i, \mu_i \rangle \geq - \frac{\alpha}{4n^2} \quad \forall \: i \in N.
    \]
    Applying the above once for each node, we obtain
    \begin{align*}
        \sw(X^{\ell}, \mu) &= \sum_{i \in N} \langle X^{\ell}_i, \mu_i \rangle  \\
        & \geq \sum_{i \in N} \left(\langle X^{\ell}_i, \mu_i \rangle - \frac{\alpha}{4n^2}
        \right)\\
        &\geq \sw(X^{\ell - 1}, \mu) - n \cdot \frac{\alpha}{4n^2}.
    \end{align*}
    Rearranging, this implies that
    \[
        \sw(X^{\ell - 1}, \mu) - \sw(X^{\ell}, \mu) \leq \frac{\alpha}{4n}.
    \]
    Because $\ell \leq n^2$, we therefore must have
    \[
        \sw(X^{0}, \mu) - \sw(X^{\ell}, \mu) \leq \frac{n\alpha}{4}.
    \]
\end{proof}

\begin{lemma}\label{lem:remove-envy-no-alpha-edges}
    Let $G = \opzero(\mu, X, \alpha)$ with edge set $E$. Suppose that for $S \in \mathcal{S}(X)$, if $\graphin(S) \notin \emptyset$, then $\graphout(S) \notin \emptyset$. Further suppose that 
    \[
        -\tfrac{\alpha}{4bn^3} \leq \langle X_i, \mu_i \rangle - \langle X_{i'}, \mu_i \rangle \quad \forall i, i' \in N 
    \]
    Let $X^0 = X$ and define $X^{\ell} = \opsix(\mu, X^{\ell - 1})$. Finally, for $\ell \le n^2$ let $G' = \opzero(\mu, X^{\ell}, \tfrac{\alpha}{2})$ with edge set $E'$. Then $|E'| \leq |E|$.
\end{lemma}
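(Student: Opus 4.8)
The plan is to prove the stronger claim $E' \subseteq E$, which immediately yields $|E'| \le |E|$. As in the ``no new edges'' arguments of Lemmas~\ref{lem:remove-incoming-edge-less-edges}, \ref{lem:cycle-shift-less-edges}, and \ref{lem:average-clique-not-more-edges}, it suffices to show that if an ordered pair $(i,i')$ is not an edge of $G = \opzero(\mu, X, \alpha)$, then it is not an edge of $G' = \opzero(\mu, X^\ell, \alpha/2)$. By definition of $\opzero$, the hypothesis $(i,i') \notin E$ says $\langle X_i, \mu_i \rangle - \langle X_{i'}, \mu_i \rangle \ge \alpha$, and I must deduce $\langle X^\ell_i, \mu_i \rangle - \langle X^\ell_{i'}, \mu_i \rangle \ge \alpha/2$.

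The main tool is Lemma~\ref{lem:remove-envy-bounds}. Its hypothesis is precisely the low-envy assumption of the present lemma, so it applies, and it tells us that a single call to $\opsix$ changes each quantity $\langle X_a, \mu_b \rangle$ (the value node $b$ assigns to node $a$'s bundle) by at most $\alpha/(4n^3)$ upward and at most $\alpha/(4n^2)$ downward. Summing telescopically over the at most $n^2$ calls taking $X^0 = X$ to $X^\ell$ (with the convention that $\opsix$ returns its input once no envy edge remains, so $X^\ell$ is well defined), I would obtain, for all $a, b \in N$,
\[
    -\frac{\alpha}{4} \;\le\; \langle X^\ell_a, \mu_b \rangle - \langle X_a, \mu_b \rangle \;\le\; \frac{\alpha}{4n} \;\le\; \frac{\alpha}{4}.
\]

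Then, for a non-edge $(i,i')$ of $G$, I would combine these bounds via
\begin{align*}
    \langle X^\ell_i, \mu_i \rangle - \langle X^\ell_{i'}, \mu_i \rangle
    &= \big(\langle X_i, \mu_i \rangle - \langle X_{i'}, \mu_i \rangle\big)
       + \big(\langle X^\ell_i, \mu_i \rangle - \langle X_i, \mu_i \rangle\big)
       - \big(\langle X^\ell_{i'}, \mu_i \rangle - \langle X_{i'}, \mu_i \rangle\big) \\
    &\ge \alpha - \frac{\alpha}{4} - \frac{\alpha}{4} = \frac{\alpha}{2},
\end{align*}
where the first term uses the non-edge assumption, the second uses the lower bound above with $(a,b) = (i,i)$, and the third uses the upper bound with $(a,b) = (i',i)$. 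Hence $(i,i') \notin E'$, establishing $E' \subseteq E$.

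I do not expect a real obstacle here; this is a short consequence of Lemma~\ref{lem:remove-envy-bounds}. The only care needed is bookkeeping: keeping track of which index is the bundle owner and which is the valuer when invoking that lemma, and verifying that accumulating the per-step bounds $\alpha/(4n^3)$ and $\alpha/(4n^2)$ over $\ell \le n^2$ steps gives the aggregate bounds $\alpha/(4n)$ and $\alpha/4$. Note that the equivalence-class hypothesis (that no equivalence class of $X$ is a pure sink in $G$) is not used for this inequality; it is maintained as part of the invariant of Algorithm~\ref{algo:adding-slack} and is needed only by the lemmas downstream that consume this one.
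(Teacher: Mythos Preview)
Your proposal is correct and follows essentially the same approach as the paper: show $E' \subseteq E$ by fixing a non-edge $(i,i')$ of $G$, telescope the per-step bounds of Lemma~\ref{lem:remove-envy-bounds} over $\ell \le n^2$ calls to get $\langle X^\ell_i,\mu_i\rangle - \langle X_i,\mu_i\rangle \ge -\alpha/4$ and $\langle X^\ell_{i'},\mu_i\rangle - \langle X_{i'},\mu_i\rangle \le \alpha/(4n)$, and combine with the non-edge assumption to obtain $\langle X^\ell_i,\mu_i\rangle - \langle X^\ell_{i'},\mu_i\rangle \ge \alpha/2$. You are also right that the equivalence-class hypothesis is not used in the argument (the paper's proof does not invoke it either).
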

\begin{proof}
    It suffices to show that if an edge $(i, i') \notin E$, then $(i, i') \notin E'$. By Lemma \ref{lem:remove-envy-bounds}, in each call to    $\opsix$, the utility of $i$ for the allocation of $i$ decreases by at most $\tfrac{\alpha}{4n^2}$. Therefore,
    \[
        \langle X^{\ell}_i, \mu_i \rangle - \langle X^0_i, \mu_i \rangle \geq \ell \cdot - \frac{\alpha}{4n^2} \geq  - \frac{\alpha}{4}.
    \]
    Again by Lemma \ref{lem:remove-envy-bounds}, in each call to $\opsix$, the utility of $i$ for the allocation of $i'$  increases by at most $\tfrac{\alpha}{4n^3}$. Therefore,
    \[
        \langle X^{\ell}_{i'}, \mu_i \rangle - \langle X^0_{i'}, \mu_i \rangle \leq \ell \cdot \frac{\alpha}{4n^3} \leq  \frac{\alpha}{4n}.
    \]
    Putting both equations together, we have
    \begin{align*}
        \langle X^{\ell}_i, \mu_i \rangle - \langle X^{\ell}_{i'}, \mu_i \rangle &\geq \langle X^0_i, \mu_i \rangle - \frac{\alpha}{4} - \left(\langle X^0_{i'}, \mu_i \rangle + \frac{\alpha}{4n}\right) \\
        &= \langle X^0_i, \mu_i \rangle - \langle X^0_{i'}, \mu_i \rangle - \frac{\alpha}{4} - \frac{\alpha}{4n} \\
        &\geq \alpha - \frac{\alpha}{2} \\
        &= \frac{\alpha}{2}
    \end{align*}
    as desired.
\end{proof}

\begin{lemma}\label{lem:remove-envy-same-equivalence-classes}
    Let $X' = \opsix(\mu, X)$. Then $|\mathcal{S}(X')| \leq |\mathcal{S}(X)|$.
\end{lemma}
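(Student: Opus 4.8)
Since $\opsix(\mu, X^0)$ alters the allocation only through the calls to $\opfive$ inside its while loop and through the single cyclic relabelling $X'_i = X^r_{\nextnode(i)}$ along a cycle $C$ performed afterwards — and returns the allocation unchanged in the two remaining branches — the plan is to show that neither of these operations increases the number of equivalence classes, and then to chain the resulting inequalities over the finitely many iterations.

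The cyclic relabelling is the easy case. It leaves $X_i$ untouched for every $i \notin V(C)$ and, among the nodes of $V(C)$, merely permutes the allocation vectors $\{X^r_i : i \in V(C)\}$ cyclically. Hence the multiset $\{X'_i : i \in N\}$ equals $\{X^r_i : i \in N\}$, so in fact $|\mathcal{S}(X')| = |\mathcal{S}(X^r)|$; in particular this step does not increase the count.

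For the $\opfive$ updates I would carry the invariant that, at the start of every iteration of the while loop, the current set $U$ is a union of equivalence classes of the current allocation. The base case holds because $U$ is initialized to $\{v' : \langle X^0_u, \mu_u\rangle < \langle X^0_{v'}, \mu_u\rangle\}$, and two nodes with identical allocations have identical value to $u$, so each equivalence class of $X^0$ lies entirely inside or entirely outside $U$. For the inductive step, recall that $\opfive(U, \beta, X)$ multiplies the allocation of every node in $U$ by the common scalar $1 - |N\setminus U|\beta$ and adds the common vector $\beta\sum_{i' \in U} X_{i'}$ to the allocation of every node in $N\setminus U$. Thus, when $U$ is a union of equivalence classes of $X^r$, any two nodes that are equivalent under $X^r$ — hence both in $U$, or both in $N\setminus U$ — remain equivalent under $X^{r+1} = \opfive(U, \beta_{i^*}, X^r)$, so $\mathcal{S}(X^{r+1})$ is a coarsening of $\mathcal{S}(X^r)$ and $|\mathcal{S}(X^{r+1})| \le |\mathcal{S}(X^r)|$. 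Combining this with the count-preserving cyclic relabelling yields $|\mathcal{S}(X')| \le |\mathcal{S}(X^0)|$.

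The step I expect to be the main obstacle is closing the induction: once $U$ is replaced by $U \cup \{i^*\}$, it must still be a union of equivalence classes of $X^{r+1}$, and this forces one to rule out (or otherwise neutralize) an $\opfive$ update that merges a class contained in $U$ with a class contained in $N\setminus U$ — such a merge would leave an equivalence class straddling the updated $U$ and break the invariant at the next iteration. This is genuinely delicate, since a uniform scaling on $U$ and a uniform translation on $N\setminus U$ can coincide for a badly chosen $\beta$; the argument must therefore use the precise update rules of the inner loop — the minimality in the choice of $\beta_{i^*}$ and the way $U$ grows by the single node $i^*$ — rather than treating $\opfive$ as a black box.
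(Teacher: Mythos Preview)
Your handling of the cyclic relabelling matches the paper's: the multiset of allocation vectors is permuted, so the partition into equivalence classes is unchanged. The gap is in the while-loop part. You aim to maintain the invariant that $U$ is a union of equivalence classes of the \emph{current} allocation, and you correctly flag the obstacle: an $\opfive$ step can in principle merge a class inside $U$ with one inside $N\setminus U$ (the scaling on $U$ and the translation on $N\setminus U$ can coincide), after which the enlarged $U\cup\{i^*\}$ need not respect the new partition. You do not resolve this, so the proposal is incomplete; and since the invariant can genuinely fail, patching it would require additional case analysis.

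The paper avoids the obstacle by not tracking the structure of $U$ at all. It proves the weaker (and sufficient) statement that no equivalence class of $X$ is ever split. Suppose a class $S$ has survived intact to the start of iteration $r$ and now straddles $U$, with $i\in S\cap U$ and $i'\in S\cap(N\setminus U)$. Because $X^r_i=X^r_{i'}$, the equality $\langle X'_i,\mu_i\rangle=\langle X'_{i'},\mu_i\rangle$ already holds at $\beta=0$, hence $\beta_i=0$ and therefore $\beta_{i^*}\le 0$. So whenever a class straddles $U$ no allocation is transferred and the class survives; whenever $\beta_{i^*}>0$ no class straddles $U$ and every class lies wholly on one side, so $\opfive$ treats its members identically. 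Either way the partition can only coarsen. This uses precisely the ingredient you point to---the minimality of $\beta_{i^*}$---but applies it directly to ``no split'' rather than to the stronger $U$-invariant, which is why the cross-boundary merge issue never arises.
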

\begin{proof}
    It suffices to show that no equivalence class $S \in S(X)$ becomes smaller (i.e. strictly loses members) during $\opsix$. First, suppose for contradiction that $S$ first becomes smaller during the while loop during iteration $r$. Then in iteration $r$, it must be the case that $S \cap U \neq \emptyset$ and $S \cap (N \backslash U) \neq \emptyset$, otherwise the allocation of each member of $S$ would have changed by the same amount. Furthermore, it must be the case that in iteration $r$, $\beta_{i^*} > 0$, otherwise no allocation would have been transferred. In iteration $r$, consider some $i \in (S \cap U)$ and $i' \in (S \cap (N \backslash U))$. Then $\beta_i = 0$, as $i$ begins iteration $r$ with zero envy towards $i'$. Because $\beta_{i^*} \leq \min_{i} \beta_i$, this means $\beta_{i^*} \leq 0$, which is a contradiction . Therefore, no equivalence class $S$ becomes smaller during the while loop. To conclude the proof, we observe that in the cycle elimination step after the while loop, the total set of allocations remains the same, which implies that the set of equivalence classes remains the same as well.
\end{proof}

We are finally ready to prove Lemma \ref{lem:ef-transform}.

\begin{proof}[Proof of Lemma \ref{lem:ef-transform}]
     We first prove by induction that every iteration $r$ starts with an envy-free allocation $X^r$. The base case is satisfied because $X^0 = Y^\mu$, and $Y^\mu$ is envy-free by definition. Now, suppose that the inductive hypothesis holds for all iterations up to and including $r$. We will show that iteration $r + 1$ starts with an envy-free allocation. If $X^{r+1} = X^r$, then we can directly invoke the inductive hypothesis for iteration $r$. Otherwise, suppose that $\opone$ was called in iteration $r$ of Algorithm \ref{algo:adding-slack}. Then by Lemma \ref{lem:remove-incoming-edge-no-envy}, iteration $r+1$ starts with an envy-free allocation. Suppose instead that $\optwo$ was called in iteration $r$. Then by Lemma \ref{lem:cycle-shift-no-envy}, iteration $r+1$ starts with an envy-free allocation. Finally, suppose that $\opthree$ was called in iteration $r$. Then the $\opsix$ is called repeatedly as long as there exists an edge with negative weight in $E(G')$. By Lemma $\ref{lem:remove-envy-edge}$, each call to $\opsix$ removes an edge with negative weight from $E(G')$, and adds no new edges with negative weight. There are a finite number of edges in $E(G')$, so this loop terminates. Therefore, iteration $r+1$ starts with an envy-free allocation.

    Next, we prove that in every iteration, either an edge is removed from the envy-with-slack graph, or the number of equivalence classes decreases. Formally, for two iterations $r$ and $r+1$, we prove that either 
    \begin{enumerate}
        \item $|E(G^r)| > |E(G^{r+1})|$ or
        \item  $|E(G^r)| \geq |E(G^{r+1})|$ and $|\mathcal{S}(X^r)| > |\mathcal{S}(X^{r+1})|$.
    \end{enumerate}
    If $\opone$ is called in iteration $r$ of Algorithm \ref{algo:adding-slack}, then by Lemma \ref{lem:remove-incoming-edge-less-edges} we have $|E(G^r)| > |E(G^{r+1})|$. If $\optwo$ is called in iteration $r$, then by Lemma \ref{lem:cycle-shift-less-edges} we have $|E(G^r)| > |E(G^{r+1})|$. If $\exists \: e \in E(G^r)$ s.t. $w_e \geq \frac{\alpha}{4bn^4}$, then $e \in E(G^r)$ but $e \notin E(G^{r+1})$. Furthermore, if $e' \notin E(G^r)$, then $e' \notin E(G^{r+1})$ as $X^r = X^{r+1}$. Therefore, we have $|E(G^r)| > |E(G^{r+1})|$. 
    
    \sloppy Finally, suppose $\opthree$ is called in iteration $r$ on clique $Q$. Recall that $G^{\avg} = \opzero(\mu, \opthree(Q, X^r), \alpha^{\avg})$. By Lemma \ref{lem:average-clique-not-more-edges}, we know that $|E(G^r)| \geq |E(G^{\avg})|$. The number of edges in $E(G^{\avg})$ is at most $n^2$, so $\opsix$ will be called at most $n^2$ times by Lemma \ref{lem:remove-envy-edge}. Because $\opthree$ was called, we know that $w_e < \frac{\alpha^r}{4bn^4} \: \: \forall e \in E(G^r)$. By Lemma \ref{lem:average-clique-low-envy} with $\alpha = \frac{\alpha^r}{4bn^4}$, 
    \[
        \langle X^{\avg}_i, \mu_i \rangle - \langle X^{\avg}_{i'}, \mu_i \rangle \geq -\tfrac{\alpha^r}{4bn^4} = -\tfrac{\alpha^{\avg}}{4bn^3} \quad \forall i, i' \in N.
    \]

    Therefore, we can apply Lemma \ref{lem:remove-envy-no-alpha-edges} with $\alpha = \alpha^{\avg}$  to conclude that $|E(G^{r+1})| \leq |E(G^{\avg})| \leq |E(G^r)|$. Finally, observe that because $C$ included members of at least two equivalence classes, the operation $\opthree$ strictly decreased the number of equivalence classes. By Lemma \ref{lem:remove-envy-same-equivalence-classes}, operation $\opsix$ does not increase the number of equivalence classes. Therefore, $|\mathcal{S}(X^r)| > |\mathcal{S}(X^{r+1})|$. 
    
    We now prove that the algorithm terminates with an $X^r$ which satisfies Proposition \ref{def:fairness_to_UAR}. Each iteration either removes an edge or merges two equivalence classes. Because the maximum number of edges is $n^2$ and the number of equivalence classes is $n$, Algorithm \ref{algo:adding-slack} terminates in at most $n^3$ iterations. We need to show that Algorithm \ref{algo:adding-slack} terminates with $\alpha^r \geq \gamma$. If an iteration $r$ does not call $\opsix$, then an edge is removed and $\alpha^{r+1} \geq \tfrac{\alpha^r}{4bn^4}$. There can be at most $n^2$ such iterations. If an iteration $r$ does call $\opsix$, then $\alpha^{r+1} = \tfrac{\alpha^r}{2n}$. There can be at most $n$ such iterations which call $\opsix$ between every iteration which does not call $\opsix$, for a total of at most $n^3$ iterations. Therefore,
    \[
        \alpha^r \geq \frac{\alpha_0}{(4bn^4)^{n^2} \cdot  (2n)^{n^3}} = \frac{\alpha_0}{e^{n^2\log(4bn^4) + n^3\log(2n)}}.
    \]
   Choosing $\alpha_0 = \gamma \cdot (e^{n^2\log(4bn^4) + n^3\log(2n)})$ thus implies that $\alpha_r \geq \gamma$ for every iteration $r$ if $\gamma \cdot (e^{n^2\log(4bn^4) + n^3\log(2n)}) \le 1$. Therefore, we set $\gamma_0 = e^{-n^2\log(4bn^4) - n^3\log(2n)}$.

   Finally, we need to show that Algorithm \ref{algo:adding-slack} does not significantly decrease the social welfare. By Lemmas \ref{lem:remove-incoming-edge-sw}, \ref{lem:cycle-shift-sw} and \ref{lem:average-clique-sw}, we know that each of the operations $\opone, \optwo$, and $\opthree$ change the social welfare by at most $O(\gamma)$. Each of these operations is called at most $n^3$ times. Each of $\opone$ and $\optwo$ are called at most once for each edge, or at most $n^2$ times. Operation $\opthree$ is called at most $n$ times for each edge, or at most $n^3$ times. Finally, Lemma $\ref{lem:remove-envy-sw}$ bounds the total social welfare loss from all calls to $\opsix$ between any two calls to $\opthree$ by $O(\gamma)$. Therefore, $\sw(Y^\mu, \mu) - \sw(X^r, \mu) = O(\gamma)$, as desired.
\end{proof}

\newpage

\newpage

\section{Proof of Theorem \ref{thm:lower_bounds}}\label{app:lower_bounds}

\begin{proof}
    Let $a = 1$, $b = 3$, $n = 2$, and $m = 2$, and assume that the distributions of values are all normal distributions with variance $1$. For $n=2$, envy-freeness and proportionality are equivalent, and therefore we will focus on the former. Define $\epsilon = T^{-1/3}$. We will prove the desired result by contradiction. Assume there exists an algorithm $\alg$ such that for any $\mu^* \in [a,b]^{2 \times 2}$, the probability that the algorithm satisfies the envy-freeness constraints and has regret of less than $\frac{T^{2/3}}{\log(T)}$ is at least $1-1/T$. 

 Consider the following two matrices.
    
    \begin{minipage}[t]{.5\linewidth}
    \vspace{0.5pt}
    \begin{center}
    $\mu_1$ = $\begin{bmatrix}
            2 & 3 \\
           1 & 1 
            \end{bmatrix}$
    \end{center}
    \end{minipage}
    \begin{minipage}[t]{.5\linewidth}
    \vspace{0.5pt}
    \begin{center}
        $\mu_2 = \begin{bmatrix}
            2 & 3 \\
            1 & 1 + \epsilon
            \end{bmatrix}$
    \vspace{0.5pt}
    \end{center}
    \end{minipage}
    
    Define $P_1$ as the distribution of the full history $H_T$ when using algorithm $\alg$ when $\mu^* = \mu_1$. Define $P_2$ as the equivalent distribution when $\mu^* = \mu_2$.
    
    We will proceed according to the following proof sketch. First, we will show that if $\mu^* = \mu_1$, then $\alg$ with constant probability allocates $\tilde{O}(T^{2/3})$ items of type 2 to player 2 (Lemma \ref{lemma:bounded_NT}). We next upper bound the total variation distance between $P_1$ and $P_2$. When two distributions are sufficiently close in total variation distance, then an event that has constant probability under one distribution also  has constant probability under the other distribution. Therefore, Lemma \ref{lemma:bounded_NT} and the closeness in TV distance of $P_1$ and $P_2$ together imply that if $\mu^* = \mu_2$, then $\alg$  with constant probability allocates $\tilde{O}(T^{2/3})$ items of type 2 to player 2. When $\alg$ allocates $\tilde{O}(T^{2/3})$ items of type 2 to player 2, then $\alg$ cannot satisfy the envy-freeness constraints for $\mu_2$. Therefore, the previous two sentences together imply that with constant probability, $\alg$ will not satisfy the envy-freeness constraints for all $t$ when $\mu^* = \mu_2$, which is a contradiction.

    \begin{lemma}\label{lemma:bounded_NT}
        Using the notation defined above,
        \[
            \E_{P_1}[N_{22}^T] \le T^{2/3}
        \]
        and
        \begin{equation}\label{eq:lower_bound_prob_of_small_X22}
             \Pr_{P_1}\left( \sum_{t=0}^{T-1} X^t_{22} > \frac{T^{2/3}}{\log(T)}\right) < 1/8.
        \end{equation}
    \end{lemma}
    Intuitively, both equations in Lemma \ref{lemma:bounded_NT} bound how many times an item of type $2$ is allocated to player 2. The first equation bounds in expectation while the second equation bounds elements of the fractional allocations $X^t$. The proof of Lemma \ref{lemma:bounded_NT} is given in Appendix \ref{app:lemma_bounded_NT}.
       \begin{lemma}\label{lemma:KL_div}
        Using the notation from above,
        \[
            \mathrm{KL}(P_1, P_2) = \E_{P_1}[N^T_{22}] \cdot \mathrm{KL}\Big(\mathcal{N}(1,1), \mathcal{N}(1+\epsilon, 1)\Big)
        \]
    \end{lemma}
    \begin{proof}
        Define $f_{ik}^1$ as the probability density function (pdf) of $\mathcal{N}((\mu_1)_{ik}, 1)$ and $f_{ik}^2$ as the pdf of $\mathcal{N}((\mu_2)_{ik}, 1)$. We let $f_{\mathcal{N}(\mu,\sigma^2)}$ be the pdf of a normal distribution with mean $\mu$ and variance $\sigma^2$.
        
        For any history $H_T = \{(k_t,i_t, v_t)\}_{t=0}^{T-1}$, recall that $X^t$ is the fractional allocation chosen by $\alg$ at time $t$ given history $H_{t}$. Then we have that for any $H_T$
        \[
            P_1(H_T) = \prod_{t=0}^{T-1} \left(\frac{1}{m}X_{i_tk_t}^tf^1_{i_tk_t}(v_t) \right),
        \]
        The $1/m$ term comes from item $t$ having a $1/m$ probability of being of type $k_t$. The $X_{i_tk_t}^t$ is the probability that $\alg$ allocates the item of type $k_t$ to player $i_t$ at time $t$, and finally $f^1_{i_tk_t}(v_t)$ is the probability of seeing value $v_t$ given that the item of type $k_t$ was allocated to player $i_t$. Similarly, we have that
        \[
            P_2(H_T) = \prod_{t=0}^{T-1} \left(\frac{1}{m}X_{i_tk_t}^tf^2_{i_tk_t}(v_t) \right).
        \]
        Therefore, we have that
        \begin{align*}
            KL(P_1, P_2) &= \E_{H_T \sim P_1} \left[ \log\left(\frac{P_1(H_T)}{P_2(H_T)}\right)\right] \\
             &= \E_{H_T \sim P_1} \left[ \log\left(\frac{\prod_{t=0}^{T-1} \left(\frac{1}{m}X_{i_tk_t}^tf^1_{i_tk_t}(v_t) \right)}{\prod_{t=0}^{T-1} \left(\frac{1}{m}X_{i_tk_t}^tf^2_{i_tk_t}(v_t) \right)}\right)\right] \\
             &= \E_{H_T \sim P_1} \left[ \log\left(\frac{\prod_{t=0}^{T-1} \left(f^1_{i_tk_t}(v_t) \right)}{\prod_{t=0}^{T-1} \left(f^2_{i_tk_t}(v_t) \right)}\right)\right] \\
             &= \E_{H_T \sim P_1} \left[ \log\left(\prod_{t=0}^{T-1} \frac{f^1_{i_tk_t}(v_t) }{f^2_{i_tk_t}(v_t) }\right)\right] \\
             &= \E_{H_T \sim P_1} \left[ \log\left(\prod_{t : (i_t,k_t) = (2,2)} \frac{f_{\mathcal{N}(1,1)}(v_t) }{f_{\mathcal{N}(1+\epsilon,1)}(v_t) }\right)\right] \\
             &= \E_{H_T \sim P_1} \left[ \sum_{t : (i_t,k_t) = (2,2)}\log\left( \frac{f_{\mathcal{N}(1,1)}(v_t) }{f_{\mathcal{N}(1+\epsilon,1)}(v_t) }\right)\right] \\
             &= \sum_{t=0}^{T-1}\E_{H_T \sim P_1} \left[\mathbbm{1}_{(i_t,k_t) = (2,2)} \log\left( \frac{f_{\mathcal{N}(1,1)}(v_t) }{f_{\mathcal{N}(1+\epsilon,1)}(v_t) }\right)\right] \\
             &= \sum_{t=0}^{T-1}\E_{H_T \sim P_1} \left[\mathbbm{1}_{(i_t,k_t) = (2,2)} \E\left[\log\left( \frac{f_{\mathcal{N}(1,1)}(v_t) }{f_{\mathcal{N}(1+\epsilon,1)}(v_t) }\right) \middle| \mathbbm{1}_{(i_t,k_t) = (2,2)}\right]\right] \\
             &= \sum_{t=0}^{T-1}\E_{H_T \sim P_1} \left[\mathbbm{1}_{(i_t,k_t) = (2,2)} \E_{v \sim \mathcal{N}(1,1)}\left[\log\left( \frac{f_{\mathcal{N}(1,1)}(v) }{f_{\mathcal{N}(1+\epsilon,1)}(v) }\right)\right]\right] \\
             &= \sum_{t=0}^{T-1}\E_{H_T \sim P_1} \left[\mathbbm{1}_{(i_t,k_t) = (2,2)} \cdot KL\left(\mathcal{N}(1,1), \mathcal{N}(1+\epsilon, 1)\right)\right] \\
             &= KL\left(\mathcal{N}(1,1), \mathcal{N}(1+\epsilon, 1)\right) \E_{H_T \sim P_1} \left[\sum_{t=0}^{T-1} \mathbbm{1}_{(i_t,k_t) = (2,2)} \right] \\
             &= KL\left(\mathcal{N}(1,1), \mathcal{N}(1+\epsilon, 1)\right) \E_{H_T \sim P_1} \left[N^T_{22}\right].
        \end{align*} 
        
    \end{proof}
    We can use Lemma \ref{lemma:bounded_NT} and Lemma \ref{lemma:KL_div} to bound the KL-divergence between $P_1$ and $P_2$ by
    \begin{equation}\label{eq:KL_div_of_P1_P2}
        \mathrm{KL}(P_1, P_2) = \E_{P_1}[N^T_{22}] \cdot \mathrm{KL}\Big(\mathcal{N}(1,1), \mathcal{N}(1+\epsilon, 1)\Big) = \frac{\E_{P_1}[N^T_{22}]\epsilon^2}{2} \le \frac{1}{2}.
    \end{equation}

        We next need the following result from probability theory that is a consequence of the Bretagnolle-Huber inequality.
    \begin{lemma}\label{lemma:event_complement_total_variation}
        For any two probability distributions $p$ and $q$ defined on the same space and for any measurable event $F$ in this space, 
        \[
        p(F^C) + q(F) \ge \frac{1}{2}e^{-\mathrm{KL}(p,q)}.
        \]
    \end{lemma}
    
\begin{proof}
    For any probability distributions $p$ and $q$, we have by the Bretagnolle-Huber inequality that
    \[
        d_{\mathrm{TV}}(p,q) \le 1 - \frac{1}{2}e^{-\mathrm{KL}(p,q)}.
    \]
    For any event $F$, 
    \[
        d_{\mathrm{TV}}(p,q) \ge |p(F) - q(F)| \ge p(F) - q(F) = 1 - p(F^C) - q(F).
    \]
    Combining the above equations gives that
    \begin{equation}
        p(F^C) + q(F) \ge \frac{1}{2}e^{-\mathrm{KL}(p,q)}.
    \end{equation}
    \end{proof}
    
    Taking $p = P_1$, $q = P_2$ and $F =  \left\{\sum_{t=0}^{T-1} X^t_{22} \leq \frac{T^{2/3}}{\log(T)}\right\}$ in Lemma \ref{lemma:event_complement_total_variation}, we have by Equation \eqref{eq:KL_div_of_P1_P2} that
    \begin{align*}
         \Pr_{P_1}\left(\sum_{t=0}^{T-1} X^t_{22} > \frac{T^{2/3}}{\log(T)}\right) + \Pr_{P_2}\left(\sum_{t=0}^{T-1} X^t_{22} \le \frac{T^{2/3}}{\log(T)}\right) \ge \frac{1}{2}e^{-\mathrm{KL}( P_1, P_2)} \ge 1/4. \numberthis \label{eq:event_complement_lemma_application}
    \end{align*}
    Combining Equation \eqref{eq:lower_bound_prob_of_small_X22} with Equation \eqref{eq:event_complement_lemma_application} gives
    \begin{equation}\label{eq:small_Xt_under_P2}
         \Pr_{P_2}\left(\sum_{t=0}^{T-1} X^t_{22} \le \frac{T^{2/3}}{\log(T)}\right) \ge 1/8.
    \end{equation}
    If $\sum_{t=0}^{T-1} X^t_{22} \le \frac{T^{2/3}}{\log(T)}$, then there must exist some time $t$ such that $X^t_{22} \le \frac{T^{-1/3}}{\log(T)}$. If $X^t_{22} \le \frac{T^{-1/3}}{\log(T)}$, then player 2's envy in expectation at time $t$ for player 1 under $\mu_2$ (for sufficiently large $T$) is
 \begin{align*}
     X^t_{11} \cdot 1  + X^t_{12}(1+\epsilon) - X^t_{21} \cdot 1 - X^t_{22}(1+\epsilon)  &=  (1- X^t_{21}) \cdot 1 + (1 - X^t_{22})(1+\epsilon) - X^t_{21} \cdot 1 - X^t_{22}(1+\epsilon)  \\
     &= 2 + \epsilon - 2X^t_{21} - 2X^t_{22}(1+\epsilon) \\
    &\ge \epsilon - 2X^t_{22}(1+\epsilon)   \\
    &\ge \epsilon - \frac{2T^{-1/3}(1+\epsilon)}{\log(T)}  \\
    &= T^{-1/3} - \frac{2T^{-1/3}(1+T^{-1/3})}{\log(T)}  \\
    &> 0,
 \end{align*}
 implying that $X^t$ does not satisfy the envy-freeness in expectation constraints under $\mu_2$. 
 Therefore, Equation \eqref{eq:small_Xt_under_P2} implies that 
 \begin{equation}
     \Pr_{P_2}(\text{EFE for $\mu_2$ not satisfied}) \ge  \Pr_{P_2}\left(\sum_{t=0}^{T-1} X^t_{22} \le \frac{T^{2/3}}{\log(T)}\right) \ge 1/8.
 \end{equation}
 This contradicts the assumption that $\alg$ satisfies the envy-freeness in expectation constraints for $\mu_2$ with probability at least $1-1/T$ when $\mu^* = \mu_2$. 

\end{proof}

\subsection{Proof of Lemma \ref{lemma:bounded_NT}}\label{app:lemma_bounded_NT}

  \begin{proof}
     Define $E$ as the event that the algorithm $\alg$ satisfies the envy-free in expectation constraints for $\mu_1$ and has regret less than $\frac{T^{2/3}}{\log(T)}$ for $\mu^* = \mu_1$. By assumption, $\Pr_{P_1}(E) \ge 1-1/T$. 
     
     If $\mu^* = \mu_1$, then the social welfare maximizing envy-free allocation is
    \[
        Y^{\mu_1} = \begin{bmatrix}
        0 & 1 \\
        1 & 0
        \end{bmatrix}.
    \]   
    Furthermore, a fractional allocation $X^t$ satisfies envy-freeness in expectation for $\mu_1$ only if 
    \begin{equation}\label{eq:total_items_to_player_2}
        X^t_{21} + X^t_{22} \ge 1.
    \end{equation}
    Therefore, for any $X^t$ that satisfies envy-freeness in expectation for $\mu_1$, the regret at time $t$ is
    \begin{align*}
        \frob{Y^{\mu_1}}{\mu^*} - \frob{X^t}{\mu^*} &= 4 - (2X^t_{11} + 3X^t_{12} + X^t_{21} + X^t_{22}) \\
        &= 4 - (2(1-X_{21}^t) + 3(1- X_{22}^t) + X_{21}^t + X_{22}^t) \\
        &= -1 + X_{21}^t + 2X_{22}^t \\
        &\ge X_{22}^t. && \text{[Equation \eqref{eq:total_items_to_player_2}]}
    \end{align*}

    This implies that the regret when $\mu^* = \mu_1$ of $\alg$ when $\alg$ satisfies envy-freeness in expectation for $\mu_1$ is
    \[
       T \cdot  \frob{Y^{\mu_1}}{\mu^*} - \sum_{t=0}^{T-1} \frob{X^t}{\mu^*} \ge \sum_{t=0}^{T-1} X^t_{22}.
    \]
    By definition, under event $E$, the regret of $\alg$ for $\mu^* = \mu_1$ is at most $\frac{T^{2/3}}{\log(T)}$ and $\alg$ satisfies the envy-freeness in expectation constraints for $\mu_1$. Therefore, the previous equation implies that under event $E$,
    \begin{equation}\label{eq:N22_under_Emu1}
        \sum_{t=0}^{T-1} X^t_{22} \le  \frac{T^{2/3}}{\log(T)}.
    \end{equation}
    Recall that $N_{22}^T$ is the number of times item of type $2$ is allocated to player $2$. Equation \eqref{eq:N22_under_Emu1} implies that
    \[
    \E_{P_1}[N_{22}^T \mid E] =    \E_{P_1}\left[\sum_{t=0}^{T-1} X^t_{22}\mid E\right] \le \frac{T^{2/3}}{\log(T)}.
    \]
    This implies that for sufficiently large $T$,
    \begin{align*}
        \E_{P_1}[N_{22}^T] &= \E_{P_1}[N_{22}^T \mid E]\Pr_{P_1}(E) +  \E_{P_1}[N_{22}^T | \neg E]\Pr_{P_1}( \neg E) \\
        &\le \E_{P_1}[N_{22}^T \mid E] + T \cdot \frac{1}{T} \\
        &\le \frac{T^{2/3}}{\log(T)} + 1 \\
        &\le T^{2/3}. \numberthis \label{eq:expected_N22}
    \end{align*}
    Equation \eqref{eq:N22_under_Emu1} also implies that for sufficiently large $T$,
    \begin{equation}
        \Pr_{P_1}\left( \sum_{t=0}^{T-1} X^t_{22} \le \frac{T^{2/3}}{\log(T)}\right) \ge \Pr_{P_1}(E) \ge  1-\frac{1}{T} > 7/8.
    \end{equation}
    Taking the complement of this equation proves Equation \eqref{eq:lower_bound_prob_of_small_X22}.
    \end{proof}
\newpage

\end{document}